\newcommand{\indepe}{\mathop{\perp\!\!\!\perp}}
\newtheorem{Theorem}{Theorem}
\newtheorem{Corollary}{Corollary}
\newtheorem{Lemma}{Lemma}
\newtheorem{Proposition}{Proposition}
\newtheorem{Assumption}{Assumption}
\newtheorem{Remark}{Remark}
\title{Shrinkage Methods for Treatment Choice\footnote{This work was partially supported by JSPS KAKENHI Grant Numbers 22K13373 and 23K12456. We would like to thank the editor, associate editor, and anonymous referees for their careful reading and comments. We also thank Jiaying Gu, Toru Kitagawa, Yuzo Maruyama, Masayuki Sawada, Katsumi Shimotsu, and Kohei Yata for their helpful comments and discussion. We have also benefited from comments made by participants of the Kansai econometrics study group, the 2022 Asia Meeting of the Econometric Society, and the 3rd Tohoku-ISM-UUlm Joint Workshop.}
}
\author{Takuya Ishihara
\thanks{Graduate School of Economics and Management, Tohoku University, 41 Kawauchi, Aoba-ku, Sendai, Miyagi 980-0862, Japan. \textit{E-mail adress}: \texttt{takuya.ishihara.b7@tohoku.ac.jp}}
\and Daisuke Kurisu 
\thanks{Center for Spatial Information Science, The University of Tokyo, 5-1-5 Kashiwanoha, Kashiwa-shi, Chiba 277-8568, Japan. \textit{E-mail adress}: \texttt{daisukekurisu@csis.u-tokyo.ac.jp}}
}
\date{\today}
\begin{document}

\maketitle

\begin{abstract}
This study examines the problem of determining whether to treat individuals based on observed covariates. The most common decision rule is the conditional empirical success (CES) rule proposed by \cite{manski2004statistical}, which assigns individuals to treatments that yield the best experimental outcomes conditional on the observed covariates. Conversely, using shrinkage estimators, which shrink unbiased but noisy preliminary estimates toward the average of these estimates, is a common approach in statistical estimation problems because it is well-known that shrinkage estimators may have smaller mean squared errors than unshrunk estimators. Inspired by this idea, we propose a computationally tractable shrinkage rule that selects the shrinkage factor by minimizing an upper bound of the maximum regret. Then, we compare the maximum regret of the proposed shrinkage rule with those of the CES and pooling rules when the space of conditional average treatment effects (CATEs) is correctly specified or misspecified. Our theoretical results demonstrate that the shrinkage rule performs well in many cases and these findings are further supported by numerical experiments. Specifically, we show that the maximum regret of the shrinkage rule can be strictly smaller than those of the CES and pooling rules in certain cases when the space of CATEs is correctly specified. In addition, we find that the shrinkage rule is robust against misspecification of the space of CATEs. Finally, we apply our method to experimental data from the National Job Training Partnership Act Study.
\end{abstract}

\section{Introduction}
This study examines the problem of determining whether to treat individuals based on observed covariates. The most common decision rule is the conditional empirical success (CES) rule proposed by \cite{manski2004statistical}, which is a rule assigning individuals to treatments that yield the best experimental outcomes conditional on the observed covariates. The CES rule uses only the average treatment effect (ATE) estimate conditional on each covariate value. By contrast, a common method in statistical estimation problems is to shrink unbiased but noisy preliminary estimates toward the average of these estimates. It is well known that shrinkage estimators may have smaller mean squared errors than unshrunk estimators. This study assumes that the dispersion of conditional ATEs (CATEs) is bounded and proposes a shrinkage rule that assigns individuals to treatments based on shrinkage estimators. We also propose a method to select the shrinkage factor by minimizing an upper bound of the maximum regret. By considering the treatment rules for individuals that are based not only on each CATE but also on the CATEs of others, it is possible to incorporate information across individuals. This allows the proposed shrinkage rule to perform as well as or better than existing treatment rules in the sense of maximum regret and to be more flexible to the heterogeneity of individuals. In addition, we compare the shrinkage rule with other rules when the space of CATEs is correctly specified or misspecified.

The contributions of this study are as follows. First, our approach is attractive from a computational perspective. The computation of the exact minimax regret rule is often challenging in the context of statistical treatment choice. Indeed, when the space of CATEs is restricted and the number of possible covariate values is large, it is difficult to obtain a shrinkage rule that minimizes the maximum regret. To overcome this problem, we propose a shrinkage rule that minimizes a tractable upper bound of the maximum regret. In this approach, each shrinkage factor is obtained by optimizing a single parameter and hence the proposed shrinkage rule is easy to compute.

Second, we compare the maximum regret of the shrinkage rule with those of alternative rules when the space of CATEs is correctly specified. As an alternative to the CES and shrinkage rules, one could consider using the pooling rule that determines whether to treat the individuals based on the average of the CATE estimates. As the CES and pooling rules are special cases of shrinkage rules, the proposed shrinkage rule is expected to outperform these two rules. However, because the proposed shrinkage rule does not minimize the exact maximum regret, its maximum regret may be larger than those of the CES and pooling rules. Therefore, it is essential to compare the maximum regrets. In Section \ref{Sec:main}, we derive the conditions under which the proposed shrinkage rule outperforms the CES or pooling rules. If the dispersion of the CATEs is small compared with the standard deviations of the CATE estimates, then the maximum regret of the proposed shrinkage rule is less than that of the CES rule under homoscedasticity. A detailed definition of the dispersion of CATEs is provided in Section \ref{Sec:shrink}. Intuitively, when the space of CATEs is small enough, the CES rule can be improved by shrinking each CATE estimate toward the average of these estimates. We also demonstrate that the shrinkage rule outperforms the pooling rule when the dispersion of the CATEs is sufficiently large. Furthermore, combined with these results, we show that the proposed shrinkage rule outperforms both the CES and pooling rules when the dispersion is moderate.

Third, we evaluate the maximum regret of the shrinkage rule when the space of CATEs is misspecified. The choice of the space of CATEs is important in practice because the minimax decision rule depends on the space of CATEs. For example, \cite{armstrong2018optimal,armstrong2021finite} consider the minimax estimation and inference problem for treatment effects and show that it is not possible to choose the parameter space automatically in a data-driven manner. Hence, it is crucial to analyze the decision rule under the misspecification of the space of CATEs. We investigate the performance of the shrinkage rule and show that our results are robust to the misspecification of the space of CATEs. To the best of our knowledge, this is the first study to consider the misspecification of the parameter space in the treatment choice problem.

Consequently, this study contributes to the growing literature on statistical treatment choice initiated by \cite{manski2000identification, manski2004statistical}. Following \cite{manski2004statistical, manski2007minimax}, \cite{hirano2009asymptotics}, \cite{stoye2009minimax, stoye2012minimax}, and \cite{tetenov2012statistical}, we focus on the maximum regrets of statistical treatment rules. Similar to \cite{stoye2012minimax}, \cite{tetenov2012statistical}, \cite{ishihara2021evidence}, \cite{olea2023decision}, and \cite{yata2021optimal}, we assume that CATE estimates are normally distributed. This assumption can be approximately justified by the asymptotic normality.

The analysis in this study most closely relates to that of \cite{stoye2012minimax}, who also considers Gaussian experiments for CATEs and investigates the properties of the minimax regret treatment rule when CATEs depend on covariates with bounded variations. \cite{stoye2012minimax} shows that the CES rule achieves minimax regret when the dispersion of CATEs is sufficiently large, and the pooling rule achieves minimax regret when the dispersion is sufficiently small. These results imply that the CES rule is optimal when the CATEs vary significantly depending on the values of the covariates, and the pooling rule is optimal when the CATEs take almost the same values. However, we do not know the minimax regret rule when the dispersion is moderate. Given such circumstances, we propose a shrinkage treatment rule that includes both the CES and pooling rules as special cases and compare the shrinkage rule with the CES and pooling rules in terms of maximum regrets for any dispersion value.


The remainder of this paper is organized as follows. Section \ref{Sec:decision-prob} explains the decision problem and introduces the shrinkage rules. Section \ref{Sec:shrink} proposes a shrinkage rule that selects the shrinkage factor by minimizing the maximum regret's upper bound and analyzes the shrinkage rule's properties. Section \ref{Sec:main} compares the maximum regrets of shrinkage, CES, and pooling rules when the space of CATEs is correctly specified and misspecified. Section \ref{Sec:example} presents numerical analyses to compare the shrinkage rule with the CES and pooling rules. As an illustration, we apply our method to the experimental data from the National Job Training Partnership Act (JTPA) Study in Section \ref{Sec:real}. Finally, Section \ref{Sec:conclusion} concludes the paper.

\section{The decision problem}\label{Sec:decision-prob}

\subsection{Settings}\label{SubSec:setting}
Suppose that we have experimental data $\{(Y_i,D_i,X_i)\}_{i=1}^n$, where $X_i \in  \{x_1, \ldots, x_K\}$ is a discrete covariate, $D_i \in \{0,1\}$ is a binary indicator of the treatment, and $Y_i$ is a post-treatment outcome. Suppose that $X_i$ represents a group and we want to determine whether to treat individuals in each group based on the data. For example, the group is determined by an individual's demographics, school, firm, or region. This setting is similar to that of \cite{manski2004statistical}, who proposes the CES rule.

The CES rule assigns individuals to treatments that yield the best experimental outcomes conditional on covariates. For $k \in 1, \ldots, K$, we define
\begin{eqnarray}
\theta_k & \equiv & E\left[ Y_i | D_i=1, X_i = x_k \right] - E\left[ Y_i | D_i=0, X_i = x_k \right], \nonumber \\
\hat{\theta}_k & \equiv & \frac{1}{n_{1,k}} \sum_{i : D_i = 1, X_i = x_k} Y_i - \frac{1}{n_{0,k}} \sum_{i : D_i = 0, X_i = x_k} Y_i, \nonumber
\end{eqnarray}
where $n_{d,k} \equiv \sum_{i=1}^n 1\{D_i = d, X_i = x_k\}$. Letting $Y_i(0), Y_i(1)$ be the potential outcomes, then $\theta_k$ can be interpreted as the ATE conditional on $x_k$, $E[Y_i(1)-Y_i(0)|X_i=x_k]$, under the unconfoundedness assumption $(Y_i(0),Y_i(1)) \indepe D_i | X_i$. In addition, $\hat{\theta}_k$ is a natural estimator of $\theta_k$. The CES rule determines whether to treat individuals with $x_k$ based on the sign of $\hat{\theta}_k$. Then, the treatment rules can be viewed as a map from estimates $\hat{\bm{\theta}} \equiv (\hat{\theta}_1, \ldots, \hat{\theta}_K)'$ to the binary decisions of the treatment choice. Hence, the CES rule can be expressed as follows:
\begin{equation}
\hat{\bm{\delta}}^{\text{CES}}(\hat{\bm{\theta}}) \ \equiv \ \left( \hat{\delta}_1^{\text{CES}}(\hat{\bm{\theta}}), \ldots, \hat{\delta}_K^{\text{CES}}(\hat{\bm{\theta}}) \right)', \ \text{where $\hat{\delta}_k^{\text{CES}}(\hat{\bm{\theta}}) \equiv 1 \{ \hat{\theta}_k \geq 0 \}$.} \label{CES_rule}
\end{equation}

Because $\hat{\theta}_k$ is consistent and asymptotically normal under some weak conditions, we assume that $\hat{\theta}_1, \ldots, \hat{\theta}_K$ are independently distributed and
\begin{equation}
\hat{\theta}_k \ \sim \ N (\theta_k, \sigma_k^2), \ \ \ k = 1, \ldots , K, \label{normality}
\end{equation}
where $\sigma_k$ is the standard deviation of $\hat{\theta}_k$. We assume that $\sigma_k$ is known. In practice, we can only construct a consistent estimator for $\sigma_k$. This assumption can be approximately justified by the asymptotic normality. If $\hat{\theta}_k$ has the asymptotic normality, the distribution of $\hat{\theta}_k$ is approximated by $N(\theta_k,\sigma_k^2)$. Because the treatment effect can vary with observable individual characteristics, we allow $\theta_k$ to vary across the covariates.

\subsection{Welfare and regret}\label{SubSec:welfare}
Given a treatment choice action $\bm{\delta} \equiv (\delta_1, \cdots, \delta_K)' \in \{0,1\}^K$, we define the welfare attained at $\bm{\delta}$ as follows:
\begin{equation}
W(\bm{\theta}, \bm{\delta}) \ \equiv \ \sum_{k=1}^K p_k \cdot \left\{ \theta_k \cdot \delta_k + \mu_{0,k} \right\}, \label{welfare}
\end{equation}
where $\bm{\theta}=(\theta_1,\dots,\theta_K)'$, $p_k \equiv P(X = x_k)$ and $\mu_{d,k} \equiv E[Y| D=d, X = x_k]$ for $d \in \{0,1\}$ and $k=1, \ldots, K$. Note that $\theta_k$ is written as $\theta_k = \mu_{1,k} - \mu_{0,k}$. If we know the true value of $\bm{\theta}$, then the optimal treatment choice action is given by
$$
\bm{\delta}^{\ast} \ \equiv \ \left( \delta_1^{\ast}, \cdots, \delta_K^{\ast} \right)' \ \equiv \ \left( 1\{\theta_1 \geq 0\}, \ldots, 1\{\theta_K \geq 0\} \right)'.
$$
However, the treatment choice action $\bm{\delta}^{\ast}$ is infeasible because the true value of $\bm{\theta}$ is unknown.

Let $\hat{\bm{\delta}} : \mathbb{R}^K \to \{0,1\}^K$ be a treatment rule that maps the estimates $\hat{\bm{\theta}}$ to the binary decisions of treatment choice. The welfare regret of $\hat{\bm{\delta}}(\hat{\bm{\theta}}) \equiv \left( \hat{\delta}_1(\hat{\bm{\theta}}), \ldots, \hat{\delta}_K(\hat{\bm{\theta}}) \right)'$ is defined as
\begin{eqnarray}
R(\bm{\theta},\hat{\bm{\delta}}) & \equiv & E_{\bm{\theta}} \left[ W(\bm{\theta}, \bm{\delta}^*) - W(\bm{\theta}, \hat{\bm{\delta}}(\hat{\bm{\theta}})) \right] \nonumber \\
&=& \sum_{k=1}^K p_k \cdot \left[ \theta_k \cdot \left\{ \delta_k^* - E_{\bm{\theta}}[\hat{\delta}_k(\hat{\bm{\theta}})] \right\} \right], \label{regret}
\end{eqnarray}
where $E_{\bm{\theta}}$ is the expectation with respect to the sampling distribution of estimates $\hat{\bm{\theta}}$ given the parameters $\bm{\theta}$. Following existing studies, we evaluate the treatment rule $\hat{\bm{\delta}}$ using the maximum regret
\[
\max_{\bm{\theta} \in \Theta} R(\bm{\theta},\hat{\bm{\delta}}),
\]
where $\Theta$ is the space of $\bm{\theta}$. The minimax regret criterion selects the statistical treatment rule that minimizes the maximum regret.

\subsection{Shrinkage rules}\label{SubSec:shrink}
The CES rule does not use $\hat{\theta}_l$ for $l \neq k$ to determine whether or not to treat individuals with $x_k$. However, in the problem of estimating $\bm{\theta} \equiv ( \theta_1, \ldots, \theta_K)'$, a common method is to shrink $\hat{\theta}_k$ toward the average of estimates $\mathrm{ave}(\hat{\bm{\theta}}) \equiv \frac{1}{K} \sum_{k=1}^K \hat{\theta}_k$ and it is well known that shrinkage estimators may have smaller mean squared errors than unshrunk estimators. Hence, we propose the following shrinkage rules $\hat{\bm{\delta}}^{\bm{w}} : \mathbb{R}^K \to \{0,1\}^K$ for $\bm{w} \equiv (w_1, \ldots, w_K)' \in [0,1]^K$.
\begin{equation}
\hat{\bm{\delta}}^{\bm{w}}(\hat{\bm{\theta}}) \ \equiv \ \left( \hat{\delta}_{1}^{w_1}(\hat{\bm{\theta}}), \ldots, \hat{\delta}_{K}^{w_K}(\hat{\bm{\theta}})  \right)', \label{shrinkage_rule}
\end{equation}
where
\begin{equation}
\hat{\delta}_{k}^{w_k}(\hat{\bm{\theta}}) \ \equiv \ 1 \left\{ w_k \cdot \hat{\theta}_k + (1-w_k) \cdot \mathrm{ave}(\hat{\bm{\theta}}) \geq 0 \right\}. \nonumber
\end{equation}
When the vector of shrinkage factors $\bm{w}$ is $\bm{1} \equiv (1,\ldots,1)'$, the shrinkage rule $\hat{\bm{\delta}}^{\bm{w}}$ becomes the CES rule $\hat{\bm{\delta}}^{\text{CES}}$ defined in (\ref{CES_rule}). Hence, the class of shrinkage rules contains the CES rule as a special case. Furthermore, when $\bm{w}$ is $\bm{0} \equiv (0,\ldots,0)'$, this rule becomes the pooling rule $\hat{\bm{\delta}}^{\text{pool}}(\hat{\bm{\theta}}) \equiv \hat{\bm{\delta}}^{\bm{0}}(\hat{\bm{\theta}})$.

From (\ref{normality}), we observe that
\begin{eqnarray*}
& w_k \cdot \hat{\theta}_k + (1-w_k) \cdot \mathrm{ave}(\hat{\bm{\theta}}) \ \sim \ N \left( w_k \cdot \theta_k + (1-w_k) \cdot \overline{\theta}, s_k^2(w_k) \right),
\end{eqnarray*}
where $\overline{\theta} \equiv K^{-1}\sum_{k=1}^K \theta_k$ and $s_k^2(w_k)$ is the variance of $w_k \cdot \hat{\theta}_k + (1-w_k) \cdot \mathrm{ave}(\hat{\bm{\theta}})$, that is,
$$
s_k^2(w_k) \ = \ \left\{ w_k^2 + 2w_k(1-w_k)/K \right\} \sigma_k^2 + (1-w_k)^2 \left\{ K^{-2} \sum_{k=1}^K \sigma_k^2 \right\}.
$$
Hence, from (\ref{regret}), the welfare regret of shrinkage treatment rule $\hat{\bm{\delta}}^{\bm{w}}(\hat{\bm{\theta}})$ can be written as follows:
\begin{eqnarray}
R(\bm{\theta},\hat{\bm{\delta}}^{\bm{w}}) &=& \sum_{k=1}^K p_k \cdot \left[ \theta_k \cdot \left\{ 1\{\theta_k \geq 0\} - \Phi \left( \frac{w_k \cdot \theta_k + (1-w_k) \cdot \overline{\theta}}{s_k(w_k)} \right) \right\} \right] \nonumber \\
&=& \sum_{k=1}^K p_k \cdot \left\{ |\theta_k| \cdot  \Phi \left( - \text{sgn}(\theta_k) \cdot \frac{\theta_k - (1-w_k) (\theta_k - \overline{\theta})}{s_k(w_k)} \right) \right\} \nonumber \\
&=& \sum_{k=1}^K p_k \cdot \left\{ |\theta_k| \cdot  \Phi \left( -   \frac{|\theta_k| - (1-w_k) \cdot \text{sgn}(\theta_k) (\theta_k - \overline{\theta})}{s_k(w_k)} \right) \right\}, \label{regret_shrinkage}
\end{eqnarray}
where $\text{sgn}(x) \equiv 1\{x>0\} -1\{x<0\}$, $\Phi(\cdot)$ is the distribution function of $N(0,1)$, and the second equality follows from the symmetry of the normal distributions.

\section{Shrinkage methods}\label{Sec:shrink}

\subsection{Choice of the shrinkage factors}\label{SubSec:choice}
In this section, we consider how to choose the shrinkage factors $\bm{w}$ under the following assumption.

\begin{Assumption}\label{ass:parameter_space}
For a positive constant $\kappa > 0$, the parameter $\bm{\theta}$ satisfies the following condition:
$$
\left| \theta_k - \overline{\theta} \right| \ \leq \ \kappa, \ \ \ k = 1, \ldots, K.
$$
\end{Assumption}

Under Assumption \ref{ass:parameter_space}, the space of CATEs $\bm{\theta}$ becomes
\begin{equation}
\Theta(\kappa) \ \equiv \ \left\{ \bm{\theta} = (\theta_1, \ldots, \theta_K)' \in \mathbb{R}^K : \left| \theta_k - \overline{\theta} \right| \ \leq \ \kappa \right\}, \label{Theta}
\end{equation}
where the constant $\kappa$ can be interpreted as controlling the dispersion of parameters $\theta_{k}$ around the mean $\overline{\theta}$. This assumption is similar to Assumption 1 in \cite{stoye2012minimax}. \cite{stoye2012minimax} assumes that $| \mu_{d,k} - \mu_{d,l} | \leq \kappa$ for all $d \in \{0,1\}$ and $k,l \in \{1, \ldots, K\}$, where $\mu_{d,k} = E[Y|D=d,X=x_k]$. Because $\theta_k = \mu_{1,k} - \mu_{0,k}$, this assumption implies that
\begin{equation}
\left| \theta_k - \theta_l \right| \ \leq \ 2\kappa, \ \ \ k,l \in \{1, \ldots, K\}. \label{ass_Stoye}
\end{equation}
If Assumption \ref{ass:parameter_space} holds, then we have $| \theta_k - \theta_l | \leq | \theta_k - \overline{\theta} | + | \theta_l - \overline{\theta} | \leq 2 \kappa$; thus (\ref{ass_Stoye}) is satisfied. Conversely, if (\ref{ass_Stoye}) holds, then Assumption \ref{ass:parameter_space} is satisfied by replacing $\kappa$ with $2\kappa$. Because the shrinkage location is the average of $\hat{\bm{\theta}}$, we restrict the difference between $\theta_k$ and the average of $\bm{\theta}$. In Section \ref{SubSec:reg}, we consider other shrinkage locations and impose the assumption that corresponds to the location.

The minimax regret criterion selects the shrinkage factors that minimize the maximum regret. From (\ref{regret_shrinkage}), the optimal shrinkage factors are obtained by minimizing the following:
\[
\max_{\bm{\theta} \in \Theta(\kappa)} \sum_{k=1}^K p_k \cdot \left\{ |\theta_k| \cdot  \Phi \left( -   \frac{|\theta_k| - (1-w_k) \cdot \text{sgn}(\theta_k) (\theta_k - \overline{\theta})}{s_k(w_k)} \right) \right\}.
\]
However, obtaining the optimal shrinkage factors becomes computationally challenging when $K$ is large. To overcome this problem, we propose selecting shrinkage factors that minimize an upper bound of the maximum regret. Because we have $\text{sgn}(\theta_k) (\theta_k - \overline{\theta}) \leq \kappa$ for any $\bm{\theta} \in \Theta(\kappa)$ and $\Phi(\cdot)$ is an increasing function, the regret of shrinkage rule $\hat{\bm{\delta}}^{\bm{w}}(\hat{\bm{\theta}})$ is bounded above by
\begin{equation}
\sum_{k=1}^K p_k \cdot \left\{ |\theta_k| \cdot  \Phi \left( -   \frac{|\theta_k| - (1-w_k) \cdot \kappa}{s_k(w_k)} \right) \right\}. \label{upper_bound_regret}
\end{equation}
Using this upper bound, we obtain
\begin{eqnarray}
\max_{\bm{\theta} \in \Theta(\kappa)} R(\bm{\theta},\hat{\bm{\delta}}^{\bm{w}}) & \leq & \max_{\bm{\theta} \in \Theta(\kappa)} \sum_{k=1}^K p_k \cdot \left\{ |\theta_k| \cdot  \Phi \left( -   \frac{|\theta_k| - (1-w_k) \cdot \kappa}{s_k(w_k)} \right) \right\} \nonumber \\
&\leq & \sum_{k=1}^K p_k \cdot \max_{\theta_k \in \mathbb{R}} \left\{ |\theta_k| \cdot  \Phi \left( -   \frac{|\theta_k| - (1-w_k) \cdot \kappa}{s_k(w_k)} \right) \right\} \nonumber \\
&=& \sum_{k=1}^K p_k \cdot s_k(w_k) \eta \left( \frac{(1-w_k) \cdot \kappa}{s_k(w_k)} \right), \label{max_regret}
\end{eqnarray}
where $\eta(a) \equiv \max_{t \geq 0} \left\{ t \cdot \Phi(-t+a) \right\} = \max_{t \in \mathbb{R}} \left\{ |t| \cdot \Phi(-|t|+a) \right\}$ for $a \in \mathbb{R}$. \cite{tetenov2012statistical} and \cite{ishihara2021evidence} show that function $\eta(\cdot)$ is strictly increasing and convex. Figure \ref{fig:eta} displays the shape of $\eta(a)$. Using function $\eta(\cdot)$, we propose the following shrinkage factors:
\begin{equation}
\bm{w}^{\ast}(\kappa) \equiv \left(w_1^{\ast}(\kappa), \ldots, w_K^{\ast}(\kappa) \right)', \ \  w_k^{\ast}(\kappa) \ \equiv \ \text{arg} \min_{w_k \in [0,1]} \left\{ \psi_k(w_k;\kappa) \right\} \ \text{for} \ k =1, \ldots, K, \label{optimal_shrinkage_factor}
\end{equation}
where $\psi_k(w_k;\kappa) \equiv s_k(w_k) \eta \left( (1-w_k) \cdot \kappa/s_k(w_k) \right)$ for $k = 1, \ldots, K$. If the right-hand side of (\ref{max_regret}) is a good approximation of the maximum regret, this rule is expected to be close to the minimax regret rule.

\begin{figure}[h] 
\centering
\includegraphics[width=12cm]{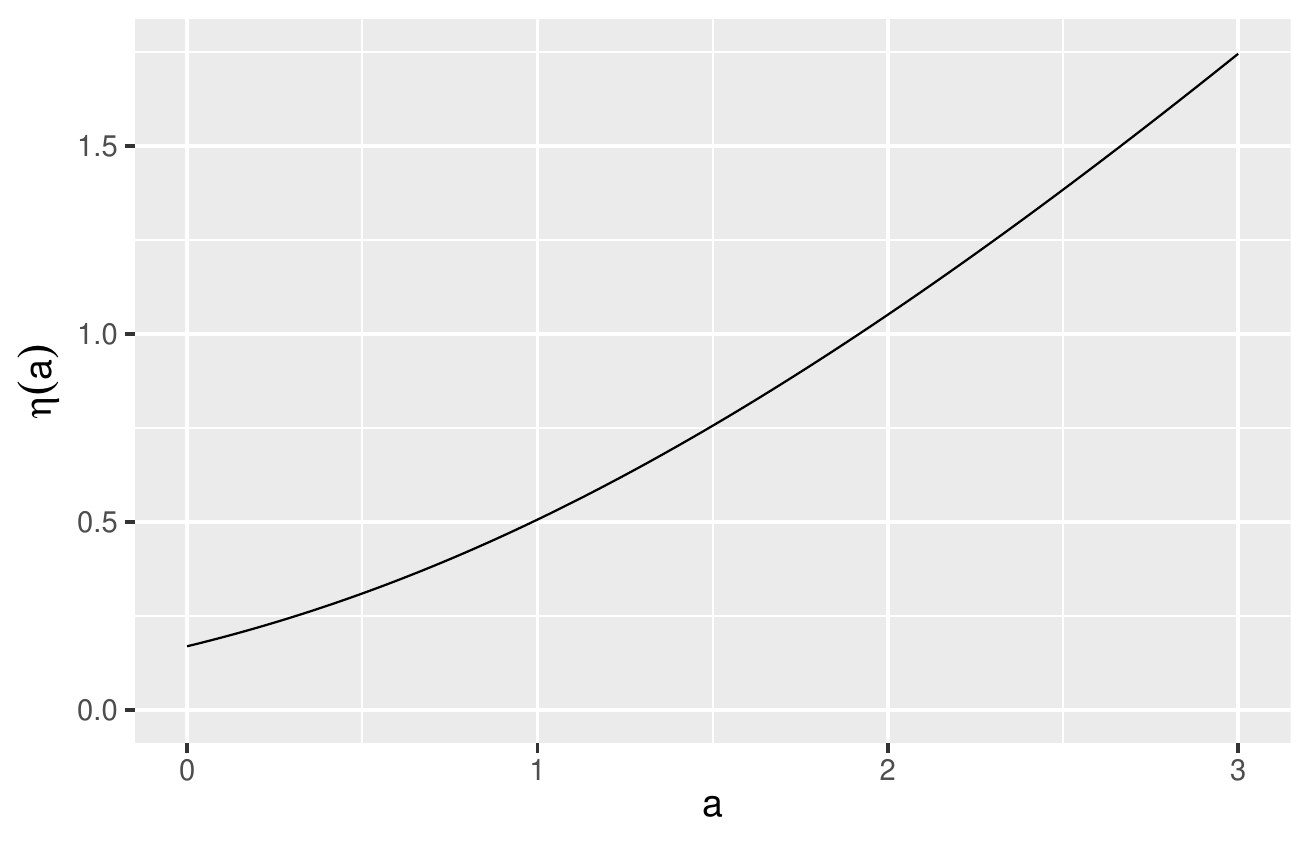}
\caption{Functional form of $\eta(a)$. The function $\eta(a)$ is strictly increasing and convex and $\eta(0)$ is approximately equal to 0.17.} \label{fig:eta}
\end{figure}

Our approach is attractive from a computational perspective. Indeed, the proposed shrinkage factors are easy to compute because (\ref{optimal_shrinkage_factor}) is obtained by optimizing the objective function over a single parameter while it is difficult to obtain the shrinkage factors that minimize the exact maximum regret when the number of possible covariate values is large. 

\begin{Remark}\label{rem:Randomized_rule}
Our results can be extended to the following randomized statistical treatment rules:
\begin{eqnarray*}
\hat{\delta}_{k}^{w_k, v_k}(\hat{\bm{\theta}}) \ \equiv \ 1 \left\{ w_k \cdot \hat{\theta}_k + (1-w_k) \cdot \mathrm{ave}(\hat{\bm{\theta}}) + v_k Z_k \geq 0 \right\},
\end{eqnarray*}
where $Z_k \sim N(0,1)$ is independent of $\hat{\bm{\theta}}$ and $v_k \geq 0$ is the randomization factor. Then, conditional on $\hat{\bm{\theta}}$, we obtain
\[
\hat{\delta}_{k}^{w_k, v_k}(\hat{\bm{\theta}}) \ = \ \begin{cases}
    1, & \text{with probability $\Phi_{v_k}\left( w_k \cdot \hat{\theta}_k + (1-w_k) \cdot \mathrm{ave}(\hat{\bm{\theta}}) \right)$} \\
    0, & \text{with probability $1 - \Phi_{v_k}\left( w_k \cdot \hat{\theta}_k + (1-w_k) \cdot \mathrm{ave}(\hat{\bm{\theta}}) \right)$} 
\end{cases},
\]
where $\Phi_{v}$ denotes the distribution function of $N(0,v^2)$. When $v_k = 0$, this rule becomes a non-randomized rule.

Let $\bm{v} \equiv (v_1, \ldots , v_K)'$ and $\hat{\bm{\delta}}^{\bm{w}, \bm{v}}(\hat{\bm{\theta}}) \equiv \left( \hat{\delta}_{1}^{w_1, v_1}(\hat{\bm{\theta}}), \ldots, \hat{\delta}_{K}^{w_K, v_K}(\hat{\bm{\theta}})  \right)'$. Because we have
\begin{eqnarray*}
& w_k \cdot \hat{\theta}_k + (1-w_k) \cdot \mathrm{ave}(\hat{\bm{\theta}}) + v_k Z_k \ \sim \ N \left( w_k \cdot \theta_k + (1-w_k) \cdot \overline{\theta}, s_k^2(w_k) + v_k^2 \right),
\end{eqnarray*}
we obtain the following upper bound of the maximum regret:
\begin{equation}
\max_{\bm{\theta} \in \Theta(\kappa)} R(\bm{\theta}, \hat{\bm{\delta}}^{\bm{w},\bm{v}}) \ \leq \ \sum_{k=1}^K p_k \cdot \sqrt{s_k^2(w_k)+v_k^2} \cdot \eta \left( \frac{(1-w_k) \cdot \kappa}{\sqrt{s_k^2(w_k)+v_k^2}} \right). \nonumber
\end{equation}
Similar to the non-randomized shrinkage rule, the shrinkage and randomization factors can be easily obtained. \cite{olea2023decision} and \cite{yata2021optimal} show that the minimax regret rule can be a randomized rule under partial identification. However, in our setting, $\bm{\theta}$ is point-identified. Hence, in this study, we focus on non-randomized treatment rules.
\end{Remark}

\begin{Remark}\label{rem:EWM}
\cite{kitagawa2018should} propose the empirical welfare maximization (EWM) method in a similar setting. Letting $\delta(X_i) \equiv \sum_{k=1}^K \delta_k \cdot 1\{X_i = x_k\}$ and $\bm{\delta} \equiv (\delta_1, \ldots, \delta_K)' \in \{0,1\}^{K}$, then the EWM rule in our setting is obtained by solving the following problem:
\begin{equation}
\max_{\bm{\delta} \in \Pi} \, \sum_{i=1}^n \left\{ \frac{Y_i D_i}{\hat{e}(X_i)} -  \frac{Y_i (1-D_i)}{1-\hat{e}(X_i)} \right\} \cdot \delta(X_i), \label{EWM_object}
\end{equation}
where $\hat{e}(x)$ is an estimator of the propensity score $P(D_i=1|X_i=x)$ and $\Pi \subset \{0,1\}^K$ is the class of candidate treatment rules. If we set $\hat{e}(x_k) = \frac{n_{1,k}}{n_{0,k}+n_{1,k}}$ and $p_k = \frac{n_{0,k}+n_{1,k}}{n}$, then the problem (\ref{EWM_object}) can be written as
\[
\max_{\bm{\delta} \in \Pi} \, \sum_{k=1}^K p_k \hat{\theta}_k \cdot \delta_k.
\]
Therefore, if there are no restrictions on the class of candidate treatment rules, that is, $\Pi = \{0,1\}^K$ holds, then the EWM rule is equivalent to the CES rule. In contrast, if there are some restrictions on $\Pi$, the EWM rule diverges from the CES rule.

\cite{kitagawa2018should} consider a more general setting, and their EWM approach imposes fewer restrictions than ours. In particular, when $X_i$ is continuous, the EWM approach remains applicable, whereas our method cannot be directly applied. However, if the policy maker restricts treatment rules to rely solely on predetermined categories of $X_i$, our method can be applied by forming subgroups according to these categories.
\end{Remark}

\begin{Remark}\label{rem:choice_kappa}
The choice of $\kappa$ is important in practice. However, it is difficult to obtain a theoretically justifiable data-driven way to select $\kappa$. Existing studies discuss this problem. For example, \cite{armstrong2018optimal,armstrong2021finite} consider the minimax estimation and inference problem for treatment effects and show that it is not possible to choose the parameter space automatically in a data-driven manner. Hence, this paper does not consider the adaptive method for unknown $\kappa$. In Section \ref{SubSec:misspecified}, we evaluate the proposed method in a situation where $\kappa$ is misidentified but close to the true value.

While it is challenging to develop a theoretically justified data-driven method to select $\kappa$, given a specific value of $\kappa$, we can formulate a testing procedure for the following null and alternative hypotheses:
\[
H_0: \bm{\theta} \in \Theta(\kappa) \ \ \ \text{vs} \ \ \ H_1: \bm{\theta} \not\in \Theta(\kappa).
\]
Since $\kappa < \kappa'$ implies $\Theta(\kappa) \subset \Theta(\kappa')$, this test allows us to assess whether the given value of $\kappa$ is too small, but it does not allow us to determine whether $\kappa$ is too large. If $\bm{\theta} \in \Theta(\kappa)$, then the inequality $\max_{1 \leq k \leq K} |\theta_k - \overline{\theta}| \leq \kappa$ holds. Hence, if $\kappa$ is small, then the quantity $\max_{1 \leq k \leq K} \left| \hat{\theta}_k - \mathrm{ave}(\hat{\bm{\theta}}) \right|$ is expected to be sufficiently small as well. Moreover, under the assumption that $\bm{\theta} \in \Theta(\kappa)$, the following bound can be derived:
$$
\left| \hat{\theta}_k - \mathrm{ave}(\hat{\bm{\theta}}) \right| \ \leq \ \kappa + \left| (\hat{\theta}_k-\theta_k) - (\mathrm{ave}(\hat{\bm{\theta}}) -\overline{\theta} )\right|,
$$
where $\left| (\hat{\theta}_k-\theta_k) - (\mathrm{ave}(\hat{\bm{\theta}}) -\overline{\theta} )\right|$ has the same distribution as $\left| Z_k - \overline{Z} \right|$ with $Z_k \sim N(0,\sigma_k^2)$. This leads to the following testing procedure:
\[
\max_{1 \leq k \leq K} \left| \hat{\theta}_k - \mathrm{ave}(\hat{\bm{\theta}}) \right| \ > \ \kappa + c_0 \ \ \ \Rightarrow \ \ \ \text{reject $H_0$},
\]
where $c_0$ is the $1-\alpha$ quantile of the distribution of $\max_{1 \leq k \leq K} \left| Z_k - \overline{Z} \right|$ and $\alpha$ is a prespecified significance level. Therefore, if a realized value of $\max_{1 \leq k \leq K} \left| \hat{\theta}_k - \mathrm{ave}(\hat{\bm{\theta}}) \right|$ is less than $c_0$, all candidate values of $\kappa$ are considered acceptable by construction.
\end{Remark}


\subsection{Comparison between the true maximum regret and the upper bound}\label{SubSec:tightness}
In this section, we illustrate the difference between the true maximum regret and the upper bound proposed in the previous section. For illustration, we consider the simple case where $\sigma_1 = \cdots = \sigma_K$ and $p_1 = \cdots = p_K$. In this case, the regret function $R(\bm{\theta},\hat{\bm{\delta}}^{\bm{w}})$ does not change when $(\theta_k,w_k)$ is replaced with $(\theta_j, w_j)$. This implies that a function $\bm{w} \mapsto \max_{\bm{\theta} \in \Theta(\kappa)} R(\bm{\theta},\hat{\bm{\delta}}^{\bm{w}})$ is permutation invariant because $\Theta(\kappa)$ is permutation invariant. In this setting, our upper bound is also permutation invariant and the proposed shrinkage rule satisfies $w_1^{\ast}(\kappa) = \cdots = w_K^{\ast}(\kappa)$. Hence, we focus on the shrinkage rules $\hat{\delta}^{\bm{w}}$ with $w_1 = \cdots = w_K$. Although we do not know that the minimax shrinkage rule satisfies $w_1 = \cdots = w_K$, it is natural to consider the shrinkage rules with $w_1 = \cdots = w_K$ because the maximum regret is permutation invariant.
If we consider the shrinkage rule $\hat{\delta}^{(w,\ldots,w)}$, the upper bound (\ref{max_regret}) can be written as
\[
\overline{R}_{\mathrm{upper}}(w) \ \equiv \ \sum_{k=1}^K p_k \cdot \psi_k(w;\kappa) \ = \ \psi(w;\kappa), \ \ \ \text{for $w \in [0,1]$.}
\]
where $\psi(w;\kappa) \equiv s(w) \eta \left( (1-w)\kappa / s(w) \right)$ and $s(w) \equiv s_1(w) = \cdots = s_K(w)$. Similarly, we define $\overline{R}_{\mathrm{true}}(w) \equiv \max_{\bm{\theta} \in \Theta(\kappa)} R(\bm{\theta},\hat{\delta}^{(w,\ldots,w)})$ for $w \in [0,1]$. In the following, we compare the upper bound $\overline{R}_{\mathrm{upper}}(w)$ and the true maximum regret $\overline{R}_{\mathrm{true}}(w)$ both analytically and numerically.

In the following proposition, we show that the upper bound $\overline{R}_{\mathrm{upper}}(w)$ is less than almost twice the true maximum regret $\overline{R}_{\mathrm{true}}(w)$.

\begin{Proposition}\label{prop:tightness}
Suppose that $\sigma_1 = \cdots = \sigma_K$ and $p_1 = \cdots = p_K$. For all $w \in [0,1]$, we obtain
\begin{equation}
1 \ \leq \ \frac{\overline{R}_{\mathrm{upper}}(w)}{\overline{R}_{\mathrm{true}}(w)} \ \leq \ \frac{\overline{R}_{\mathrm{upper}}(w)}{L_{\mathrm{true}}(w)} \ \leq \ \frac{K}{\lfloor K/2 \rfloor}, \label{tight_upper}
\end{equation}
where $L_{\mathrm{true}}(w)$ is defined in Lemma \ref{lem:prop1} of Appendix A. Furthermore, $L_{\mathrm{true}}(w) = \overline{R}_{\mathrm{upper}}(w)$ holds when $\kappa = 0$, that is, $\overline{R}_{\mathrm{upper}}(w)$ is equal to $\overline{R}_{\mathrm{true}}(w)$ when $\kappa = 0$.
\end{Proposition}

Proposition \ref{prop:tightness} shows that $\overline{R}_{\mathrm{true}}(w) \leq \overline{R}_{\mathrm{upper}}(w) \leq 2 \overline{R}_{\mathrm{true}}(w)$ holds for all $w \in [0,1]$ when $K$ is even. In addition, we can obtain a tighter upper bound that depends on $w$ and $\kappa$. Although the upper bound $\frac{\overline{R}_{\mathrm{upper}}(w)}{L_{\mathrm{true}}(w)}$ is complex and may be difficult to interpret, we can show that it is equal to $1$ when $\kappa=0$. Hence, these results imply that our bound is relatively tight compared to the true maximum regret, especially when $\kappa$ is close to zero. As demonstrated later, the numerical evaluation also shows that the functional form of $\overline{R}_{\mathrm{upper}}(w)$ is similar to that of $\overline{R}_{\mathrm{true}}(w)$ and $\overline{R}_{\mathrm{upper}}(w)$ is exactly the same as $\overline{R}_{\mathrm{true}}(w)$ when $\kappa = 0$.

To obtain Proposition \ref{prop:tightness}, we need to derive a lower bound of $\overline{R}_{\mathrm{true}}(w)$. When $K$ is even, we have
\begin{eqnarray*}
\overline{R}_{\mathrm{true}}(w) \ = \ \max_{\bm{\theta} \in \Theta(\kappa)} R(\bm{\theta},\hat{\bm{\delta}}^{(w,\ldots,w)}) & \geq & R(\bm{\theta}_{t,\kappa},\hat{\bm{\delta}}^{(w,\ldots,w)}) \ \ \ \text{for any $t \in \mathbb{R}$,}
\end{eqnarray*}
where
$$
\bm{\theta}_{t,\kappa} \ \equiv \ (\underbrace{t+\kappa, \ldots, t+\kappa}_{\text{$K/2$ elements}}, \underbrace{t-\kappa, \ldots, t-\kappa}_{\text{$K/2$ elements}})'.
$$
Using this lower bound, we obtain the results of Proposition \ref{prop:tightness}. We expect that this parameterization provides a tighter lower bound because the difference between each element of $\bm{\theta}_{t,\kappa}$ and the average becomes $\pm \kappa$. In fact, in Section \ref{SubSec:choice}, we obtain the upper bound (\ref{upper_bound_regret}) by replacing $\theta_k - \overline{\theta}$ with $\pm \kappa$.

Figure \ref{fig:true_K20} shows the functional forms of $\overline{R}_{\mathrm{true}}(w)$ and $\overline{R}_{\mathrm{upper}}(w)$ for $K=20$ and $\kappa = 0, \, 0.25, \, 0.5, \, 0.75$.\footnote{We calculate $\overline{R}_{\mathrm{true}}(w)$ using the Monte Carlo approximation. We generate $\bm{\theta}_1, \ldots, \bm{\theta}_{n_{\mathrm{sim}}}$ from a distribution on $\Theta(\kappa)$ and approximate $\overline{R}_{\mathrm{true}}(w)$ as $\max_{i} R(\bm{\theta}_i,\hat{\delta}^{(w,\ldots,w)})$. Hence, $\overline{R}_{\mathrm{true}}(w)$ in Figure \ref{fig:true_K20} may be less than the true maximum regret function.} The solid and dashed lines denote $\overline{R}_{\mathrm{true}}(w)$ and $\overline{R}_{\mathrm{upper}}(w)$, respectively. In all settings, $\overline{R}_{\mathrm{true}}(w)$ and $\overline{R}_{\mathrm{upper}}(w)$ have similar functional forms and $\overline{R}_{\mathrm{true}}(1)=\overline{R}_{\mathrm{upper}}(1)$ holds. In particular, Figure \ref{fig:true_K20_kappa0} shows that the upper bound $\overline{R}_{\mathrm{upper}}(w)$ is exactly equal to the true maximum regret $\overline{R}_{\mathrm{true}}(w)$ when $\kappa=0$.

\begin{figure}[htbp]
    \begin{tabular}{cc}
      \begin{minipage}[t]{0.45\hsize}
        \centering
        \includegraphics[width=7cm]{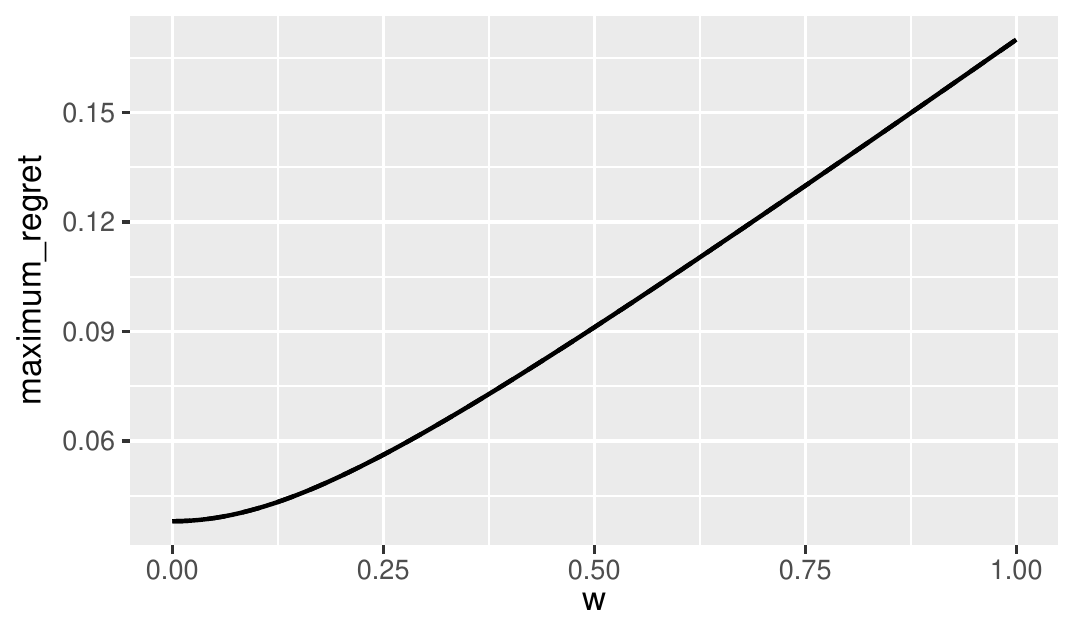}
        \subcaption{$K=20$ and $\kappa = 0$. The minimum values of $\overline{R}_{\mathrm{true}}(w)$ and $\overline{R}_{\mathrm{upper}}(w)$ are $0.038$ and $0.038$, respectively.}
        \label{fig:true_K20_kappa0}
      \end{minipage} &
      \begin{minipage}[t]{0.45\hsize}
        \centering
        \includegraphics[width=7cm]{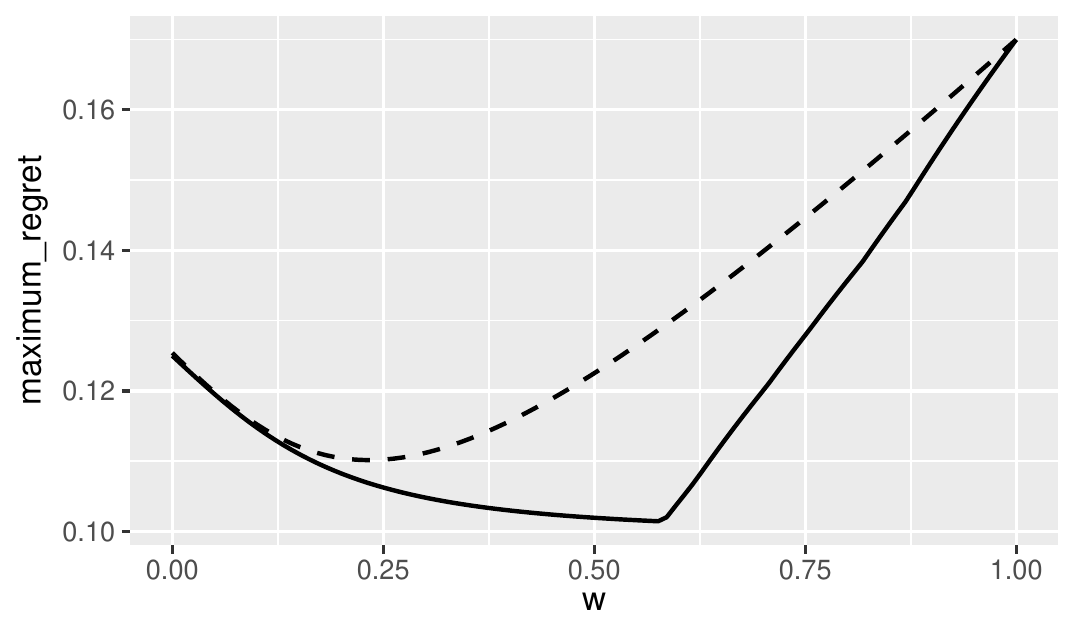}
        \subcaption{$K=20$ and $\kappa = 0.25$. The minimum values of $\overline{R}_{\mathrm{true}}(w)$ and $\overline{R}_{\mathrm{upper}}(w)$ are $0.101$ and $0.110$, respectively.}
        \label{fig:true_K20_kappa025}
      \end{minipage} \\
   
      \begin{minipage}[t]{0.45\hsize}
        \centering
        \includegraphics[width=7cm]{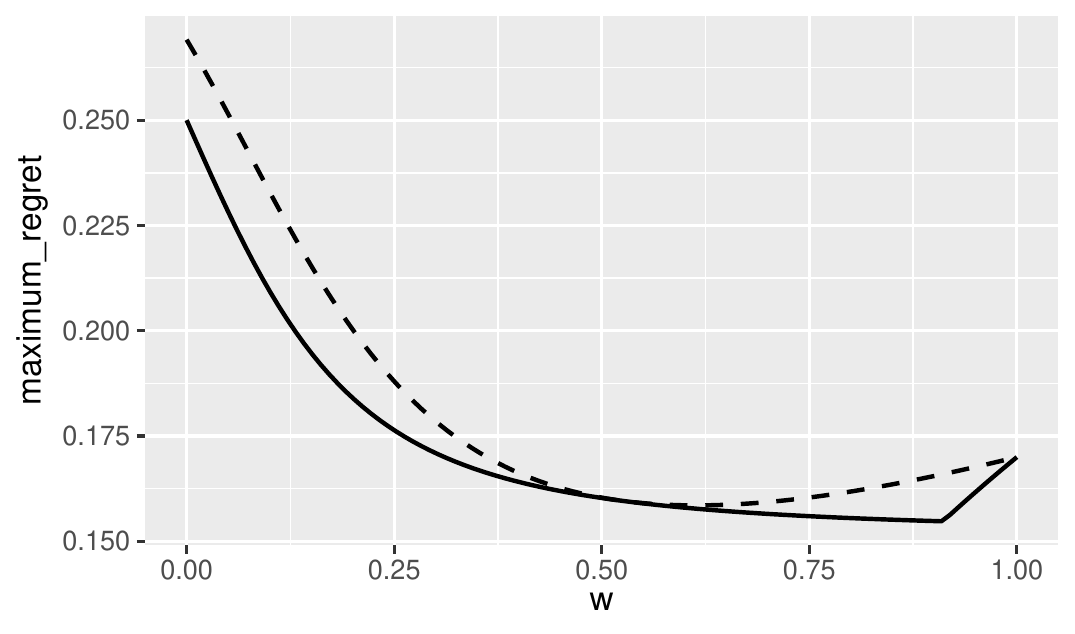}
        \subcaption{$K=20$ and $\kappa = 0.5$. The minimum values of $\overline{R}_{\mathrm{true}}(w)$ and $\overline{R}_{\mathrm{upper}}(w)$ are $0.155$ and $0.159$, respectively.}
        \label{fig:true_K20_kappa050}
      \end{minipage} &
      \begin{minipage}[t]{0.45\hsize}
        \centering
        \includegraphics[width=7cm]{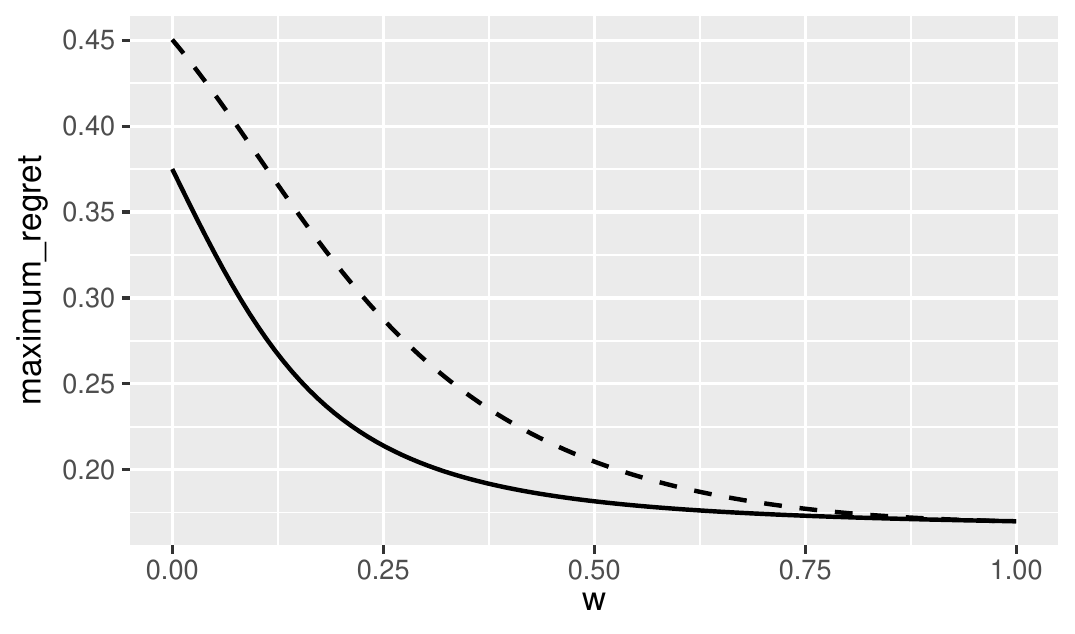}
        \subcaption{$K=20$ and $\kappa = 0.75$. The minimum values of $\overline{R}_{\mathrm{true}}(w)$ and $\overline{R}_{\mathrm{upper}}(w)$ are $0.170$ and $0.170$, respectively.}
        \label{fig:true_K20_kappa075}
      \end{minipage} 
    \end{tabular}
    \caption{The solid and dashed lines denote $\overline{R}_{\mathrm{true}}(w)$ and $\overline{R}_{\mathrm{upper}}(w)$, respectively.}\label{fig:true_K20}
\end{figure}

Figure \ref{fig:ratio_K20} shows the ratio of $\overline{R}_{\mathrm{upper}}(w)$ to $\overline{R}_{\mathrm{true}}(w)$ for $K=20$ and $\kappa = 0, \, 0.25, \, 0.5, \, 0.75$. In all cases, the ratio $\overline{R}_{\mathrm{upper}}(w) / \overline{R}_{\mathrm{true}}(w)$ is less than 2, and these results are consistent with Proposition \ref{prop:tightness}. Specifically, even in the worst case, the ratio does not exceed 1.4 under our settings. We also find that the ratio tends to increase when the shrinkage factor $w$ falls within a certain range that varies depending on $\kappa$, however, Figure \ref{fig:true_K20} shows that the minimum value of $\overline{R}_{\mathrm{upper}}(w)$ is close to that of $\overline{R}_{\mathrm{true}}(w)$ for all $\kappa$.
These results imply that the optimal rule based on the upper bound achieves near-optimal maximum regret. In Appendix B, we provide additional results on $\overline{R}_{\mathrm{true}}(w)$, $\overline{R}_{\mathrm{upper}}(w)$, and $\overline{R}_{\mathrm{upper}}(w) / \overline{R}_{\mathrm{true}}(w)$ for $K=4, \, 100$.

\begin{figure}[h] 
\centering
\includegraphics[width=12cm]{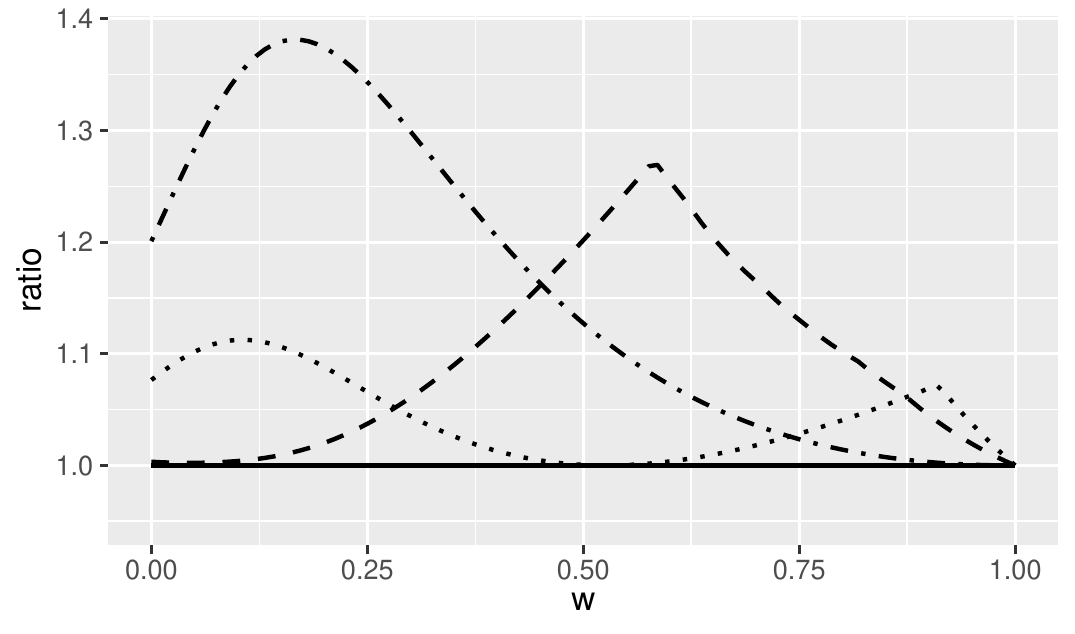}
\caption{The solid, dashed, dotted, and dot-dashed lines denote $\overline{R}_{\mathrm{upper}}(w) / \overline{R}_{\mathrm{true}}(w)$ when $K=20$ and $\kappa = 0, \, 0.25, \, 0.5$, and $0.75$.} \label{fig:ratio_K20}
\end{figure}

In this section, we focus on the balanced case where $\sigma_1 = \cdots = \sigma_K$ and $p_1 = \cdots = p_K$. Although this setting is unrealistic, it is an unfavorable setting for our shrinkage method. When constructing the upper bound (\ref{upper_bound_regret}), we replace $\text{sgn}(\theta_k) (\theta_k - \overline{\theta})$ with its maximum value $\kappa$. Hence, our upper bound is equivalent to
\[
\sum_{k=1}^K p_k \cdot \max_{\bm{\theta} \in \Theta(\kappa)}  \left\{ |\theta_k| \cdot  \Phi \left( -   \frac{|\theta_k| - (1-w_k) \cdot \text{sgn}(\theta_k) (\theta_k - \overline{\theta})}{s_k(w_k)} \right) \right\}.
\]
Therefore, the difference between the true maximum regret and the upper bound is small when a small number of groups account for a large proportion of the population. Especially, when $p_k = 1$ for some $k$, the upper bound is equal to the true maximum regret.

\subsection{Asymptotic behavior of $\bm{w}^{\ast}(\kappa)$}\label{SubSec:asymptotic}
In this section, we discuss the asymptotic behavior of the proposed shrinkage rule. We consider the following three asymptotic situations: (i) the dispersion of the CATEs becomes larger, that is, $\kappa \to \infty$, (ii) the dispersion of the CATEs becomes smaller, that is, $\kappa \to 0$, and (iii) the number of subgroups increases, that is, $K \rightarrow \infty$. As an example, suppose that a discrete covariate represents the region in which the experiment is conducted and we want to determine whether to treat individuals in each region. Since replacing $\sigma_1, \ldots, \sigma_K, \kappa$ with $c\sigma_1, \ldots, c\sigma_K, c\kappa$ does not change the value of $\psi_k(w_k;\kappa)$ for any $c > 0$, $\kappa \to \infty$ is equivalent to $\sigma_k \to 0$ for all $k$. Hence, situation (i) corresponds to the case where the sample size in each region goes to infinity but the dispersion of the region-specific treatment effects is fixed. Situation (ii) corresponds to the case where the dispersion of the region-specific treatment effects approaches zero. As discussed above, this situation is equivalent to the case where the standard errors $\sigma_1, \ldots, \sigma_K$ go to infinity. Situation (iii) corresponds to the case where the number of regions in which the experiment is conducted goes to infinity but the sample size in each region is fixed. In the shrinkage estimation literature, many studies focus on this type of asymptotic scenario.

First, we consider situation (i). Because $\eta(a)$ is convex, we have $\eta(a) \geq \eta(0) + \eta'(0) a$ for $a \geq 0$. This implies
\[
\psi_k(w_k;\kappa) \ \geq \ \eta(0) \cdot s_k(w_k) + (1-w_k) \cdot \eta'(0) \kappa,
\]
where $\eta'(0) \simeq 0.226$ and the equality holds when $w_k = 1$. Hence, if the right-hand side is minimized at $w_k = 1$, we obtain $w_k^{\ast}(\kappa)=1$. Because the derivative of the right-hand side becomes
\[
\eta(0) \cdot s_k'(w_k) - \eta'(0) \kappa,
\]
the shrinkage factor $w_k^{\ast}(\kappa)$ becomes one when $\eta(0) \cdot s_k'(w_k) \leq \eta'(0) \kappa$ holds for all $w_k \in [0,1]$. For $w_k \in [0,1]$, we obtain
\begin{eqnarray*}
s_k'(w_k) &=& \left\{ \sqrt{s_k^2(w_k)} \right\}' \ = \ \frac{1}{2} \frac{\{s_k^2(w_k)\}'}{s_k(w_k)} \\
&=& \frac{\left\{ w_k + \frac{1}{K}(1-2w_k) \right\}\sigma_k^2 - (1-w_k) \left( \frac{1}{K^2} \sum_{k=1}^K \sigma_k^2 \right) }{s_k(w_k)} \\
&\leq & \frac{\left\{ (1-2/K)w_k + 1/K \right\} \sigma_k^2}{\min_{w_k \in [0,1]} s_k(w_k)} \ \leq \ \frac{\sigma_k^2}{\min_{w_k \in [0,1]} s_k(w_k)}.
\end{eqnarray*}
As $s_k(w_k)$ is bounded away from zero, we obtain $w_k^{\ast}(\kappa)=1$ for a sufficiently large $\kappa$. Hence, the proposed shrinkage rule becomes the CES rule when $\kappa$ is sufficiently large.

Second, we consider situation (ii). As $\kappa \to 0$, we have that
$$
\psi_k(w_k;\kappa) \ \to \ s_k(w_k) \eta(0), \ \ \text{for $w_k \in [0,1]$.}
$$
Thus, the limit of $\psi_k(w_k; \kappa)$ is minimized at $w_k = \text{arg} \min_{w \in [0,1]}  s_k^2(w)$. Hence, if the homoscedasticity assumption holds, that is, $\sigma_k = \sigma$ for all $k$, then the limit of $\psi_k(w_k;\kappa)$ is minimized at $w_k=0$. Hence, if the dispersion of the parameters decreases, the proposed shrinkage rule approaches the pooling rule.

If we consider $w_k \cdot \hat{\theta}_k + (1-w_k) \cdot \mathrm{ave}(\hat{\bm{\theta}})$ to be an estimator of $\theta_k$, then this estimator becomes unbiased when $w_k = 1$. Hence, these two asymptotic situations imply that the shrinkage factor $w_k^{\ast}(\kappa)$ chooses a less biased estimator when $\kappa$ is large and a small variance estimator when $\kappa$ is small. This result can be seen as a type of bias-variance trade-off.

Finally, we consider situation (iii). We assume $\frac{1}{K^2} \sum_{k=1}^K \sigma_k^2  \to  0$ as $K \to \infty$. Under this condition, $s_k(w_k)$ can be approximated as $w_k \sigma_k$. Hence, in this situation, we have
$$
\psi_k(w_k;\kappa) \ \to \ \tilde{\psi}_k(w_k;\kappa) \ \equiv \ \sigma_k w_k \eta \left( (w_k^{-1} - 1) \cdot (\kappa / \sigma_k) \right).
$$
By letting $\tilde{w}_k^{\ast}(\kappa) \equiv \text{arg} \min_{w_k \in [0,1]} \tilde{\psi}_k(w_k;\kappa)$, $w_k^{\ast}(\kappa)$ can be approximated by $\tilde{w}_k^{\ast}(\kappa)$ when $K$ is large. Hence, when $K$ is sufficiently large, $\tilde{w}_k^{\ast}(\kappa)$ is useful to understand the properties of $w_k^{\ast}(\kappa)$.

\begin{Proposition}\label{prop:K_asymptotics}
Let $t^{\ast}(a) \equiv \mathrm{arg} \max_{t \geq 0} \left\{ t \Phi(-t+a) \right\}$. We have $\tilde{w}_k^{\ast}(\kappa) = 1$ when  $\kappa/\sigma_k > t^{\ast}(0) \simeq 0.75$ and $\tilde{w}_k^{\ast}(\kappa) = 0$ when $\kappa = 0$.
\end{Proposition}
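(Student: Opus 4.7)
The plan is to dispatch the two assertions separately, reducing the nontrivial case $\kappa/\sigma_k>t^*(0)$ to a one-dimensional convexity argument via a change of variables. The $\kappa=0$ case is immediate: $\tilde\psi_k(w;0)=\sigma_k\eta(0)w$ is strictly increasing on $[0,1]$ (because $\eta(0)>0$), so it is uniquely minimized at $w=0$.

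For the case $c:=\kappa/\sigma_k>t^*(0)$, I would reparameterize by $b=(1/w-1)c$, which bijects $(0,1]$ onto $[0,\infty)$ with $w=c/(c+b)$, yielding $\tilde\psi_k(w;\kappa)/\sigma_k = c\,\eta(b)/(c+b)$. The goal is to show this ratio exceeds $\eta(0)=\tilde\psi_k(1;\kappa)/\sigma_k$ for every $b>0$. Using convexity of $\eta$ (noted just after equation (\ref{max_regret})), I would bound $\eta(b)\ge\eta(0)+b\eta'(0)$; substituting and clearing the positive denominator reduces the desired inequality to $c\,\eta'(0)\ge\eta(0)$, i.e., $c\ge\eta(0)/\eta'(0)$. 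Strictness will follow from strict convexity of $\eta$, which makes the supporting-line bound strict for $b>0$ and therefore makes $w=1$ the unique minimizer on $(0,1]$.

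To close the loop I would establish the identity $\eta(0)/\eta'(0)=t^*(0)$, which matches the hypothesis to the threshold produced by the convexity argument. Applying the envelope theorem to $\eta(a)=\max_{t\ge0}t\,\Phi(-t+a)$ gives $\eta'(a)=t^*(a)\phi(-t^*(a)+a)$, so $\eta'(0)=t^*(0)\phi(t^*(0))$; the interior first-order condition $\Phi(-t^*(0))=t^*(0)\phi(t^*(0))$ then yields $\eta(0)=t^*(0)\Phi(-t^*(0))=t^*(0)\eta'(0)$, as required. Finally, the boundary $w=0$ is ruled out by taking $b\to\infty$ in the reparameterized expression: since $\eta(b)/b\to 1$ (from the FOC for $t^*(b)$, which forces $b-t^*(b)$ to grow only logarithmically), $\tilde\psi_k(w;\kappa)/\sigma_k\to c>t^*(0)>\eta(0)$, so the limit does not beat the value at $w=1$. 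The main technical wrinkle I expect is precisely this identification of $\eta(0)/\eta'(0)$ with $t^*(0)$; once it is in hand, everything else is a short convex-secant manipulation enabled by the substitution $b=(1/w-1)c$.
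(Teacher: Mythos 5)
Your proof is correct, and it takes a genuinely different route from the paper's. The paper differentiates $\tilde{\psi}_k$ directly, uses $\eta'(a)=\Phi(-t^{\ast}(a)+a)$ together with $\eta(a)=t^{\ast}(a)\Phi(-t^{\ast}(a)+a)$ to reduce the sign of $\tilde{\psi}_k'$ to the sign of $t^{\ast}\bigl((w_k^{-1}-1)\kappa/\sigma_k\bigr)-(\kappa/\sigma_k)w_k^{-1}$, and then invokes the strict monotonicity of $a\mapsto a-t^{\ast}(a)$ (Lemma \ref{lem:eta_derivative}) to conclude the derivative is everywhere negative on $(0,1)$ when $\kappa/\sigma_k>t^{\ast}(0)$ (and everywhere positive when $\kappa=0$). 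You instead avoid differentiating $\tilde{\psi}_k$ altogether: the substitution $b=(w^{-1}-1)c$ turns the objective into $c\,\eta(b)/(c+b)$, and the supporting-line bound $\eta(b)\geq\eta(0)+b\eta'(0)$ reduces everything to $c\geq\eta(0)/\eta'(0)$, which you correctly identify with $t^{\ast}(0)$ via the first-order condition $\Phi(-t^{\ast}(0))=t^{\ast}(0)\phi(t^{\ast}(0))$ (your envelope-theorem formula $\eta'(a)=t^{\ast}(a)\phi(-t^{\ast}(a)+a)$ agrees with the paper's Lemma \ref{lem:eta_derivative} precisely because of this FOC). Your approach buys something: it leans only on convexity of $\eta$, which is already cited in the main text, rather than on the monotonicity of $a-t^{\ast}(a)$, and it makes transparent why $t^{\ast}(0)$ is the exact threshold (it is the ratio $\eta(0)/\eta'(0)$ appearing in the secant comparison). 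Two small remarks: strictness of the minimizer at $w=1$ already follows from the strict hypothesis $c>t^{\ast}(0)$ (the slack $b\,\eta'(0)(c-t^{\ast}(0))/(c+b)$ is positive for $b>0$), so you do not actually need strict convexity of $\eta$; and your separate treatment of the boundary $w=0$ via $\eta(b)/b\to1$ is correct but could be replaced by noting that the secant bound is uniform in $b$ and passes to the limit.
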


Proposition \ref{prop:K_asymptotics} implies that we obtain results similar to above two situations even when $K$ is large. Specifically, the proposed shrinkage rule becomes the CES rule when $\kappa / \sigma_k$ is larger than approximately $3/4$ and $K$ is large, and the proposed shrinkage rule becomes the pooling rule when $\kappa = 0$ and $K$ is large. 

\begin{Remark}\label{rem:shrinkage_simple}
To illustrate the proposed shrinkage rule, we consider a simple case in which $\sigma_1 = \cdots = \sigma_K = \sigma$. Then, we observe that
\begin{eqnarray*}
\psi_k'(w_k;\kappa) &=& s_k'(w_k) \eta \left( \frac{(1-w_k) \cdot \kappa}{s_k(w_k)} \right) + s_k(w_k) \eta'\left( \frac{(1-w_k) \cdot \kappa}{s_k(w_k)} \right) \left\{ \frac{(1-w_k) \cdot \kappa}{s_k(w_k)} \right\}' \\
&=& s_k'(w_k) \eta \left( \frac{(1-w_k) \cdot \kappa}{s_k(w_k)} \right) - \left\{ \frac{\kappa s_k(w_k) + \kappa (1-w_k) s_k'(w_k)}{s_k(w_k)} \right\} \eta'\left( \frac{(1-w_k) \cdot \kappa}{s_k(w_k)} \right),
\end{eqnarray*}
where $s_k(w_k) = \sqrt{\{w_k^2 + 2w_k(1-w_k)/K \} \sigma^2 + (1-w_k^2)\sigma^2/K}$ and $s_k'(w_k) = \frac{\sigma^2 (1-1/K) w_k}{s_k(w_k)}$. This implies that
\begin{eqnarray*}
\psi_k'(1;\kappa) &=& \left( 1- \frac{1}{K} \right) \sigma \eta(0) - \kappa \eta'(0), \\
\psi_k'(0;\kappa) &=&  - \kappa \cdot \eta'\left( \frac{\kappa \sqrt{K}}{\sigma} \right).
\end{eqnarray*}
Therefore, in this setting, we have $\hat{\bm{\delta}}^{\bm{w}^{\ast}(\kappa)} \neq \hat{\bm{\delta}}^{\mathrm{CES}}$ when $\kappa < \frac{(1-1/K) \sigma \eta(0)}{\eta'(0)} \simeq 0.75 \left( 1-1/K \right) \sigma$, and $\hat{\bm{\delta}}^{\bm{w}^{\ast}(\kappa)} \neq \hat{\bm{\delta}}^{\mathrm{pool}}$ for any $\kappa > 0$.
\end{Remark}

\begin{Remark}\label{rem:shrinkage_est}
Our shrinkage method is based on the idea of defining a class of treatment rules by shrinking each estimator $\hat{\theta}_k$ toward $\mathrm{ave} (\hat{\bm{\theta}})$ to obtain a better treatment rule than existing unshrunk rules under the maximum regret criterion. The idea of using shrinkage rules is inspired by the construction of shrinkage estimators developed by \cite{JaSt61}. Here, we discuss the differences in the construction of the shrinkage rule when using our shrinkage factors versus James-Stein-type shrinkage factors. For simplicity, we focus on the case where $\sigma_1 = \cdots = \sigma_K = \sigma$ with $K > 3$. A James–Stein-type estimator that shrinks toward $\mathrm{ave} (\hat{\bm{\theta}})$ is given by
\begin{eqnarray}
    \hat{\bm{\theta}}^{\mathrm{JS}} & \equiv & \mathrm{ave} (\hat{\bm{\theta}}) + \hat{w}_{\mathrm{JS}} \cdot \left( \hat{\bm{\theta}} - \mathrm{ave} (\hat{\bm{\theta}}) \right) \ = \ \hat{w}_{\mathrm{JS}} \cdot \hat{\bm{\theta}} + (1-\hat{w}_{\mathrm{JS}}) \cdot \mathrm{ave} (\hat{\bm{\theta}}), \label{JS-estimator}
\end{eqnarray}
where
\[
\hat{w}_{\mathrm{JS}} \ \equiv \ 1 - \frac{(K-3) \sigma^2}{\sum_{k=1}^K \left( \hat{\theta}_k - \mathrm{ave}(\hat{\bm{\theta}}) \right)^2} \ = \ 1 - \frac{\sigma^2}{\frac{1}{K-3} \sum_{k=1}^K \left( \hat{\theta}_k - \mathrm{ave}(\hat{\bm{\theta}}) \right)^2}.
\]
When $K$ is large, $E \left[ \frac{1}{K-3} \sum_{k=1}^K \left( \hat{\theta}_k - \mathrm{ave}(\hat{\bm{\theta}}) \right)^2 \right]$ is approximated by $\sigma^2 + \frac{1}{K}\sum_{k=1}^K (\theta_k - \overline{\theta})^2$. Because $\bm{\theta} \in \Theta(\kappa)$ satisfies $\frac{1}{K}\sum_{k=1}^K (\theta_k - \overline{\theta})^2 \leq \kappa^2$, it is expected that $\hat{w}_{\mathrm{JS}} \leq \frac{\kappa^2}{\kappa^2 + \sigma^2}$ occurs with high probability when $K$ is large. Hence, similar to our shrinkage rule, the shrinkage factor $\hat{w}_{\mathrm{JS}}$ likely to take a value close to $0$ when $\kappa$ is small. On the contrary, unlike our shrinkage rule, $\hat{w}_{\mathrm{JS}}$ never takes on a value of exactly 1. The numerical simulation in Section \ref{Sec:example} shows that $\hat{w}_{\mathrm{JS}}$ tends to be smaller than $w_k^{\ast}(\kappa)$. This implies that the proposed shrinkage rule places greater emphasis on the bias than on the variance compared with the James–Stein-type estimator. We also note that the minimax regret criterion places more emphasis on the bias than the variance compared with the minimax MSE criterion \citep{ishihara2021evidence}.
\end{Remark}

\begin{Remark}\label{rem:continuous_case}
We analyze the asymptotic behavior of $w^{\ast}_k (\kappa)$ when the number of groups $K$ and the standard error $\sigma_k$ are related. If we have a continuous covariate $X_i$ and construct groups by discretizing $X_i$, the standard error of each group is expected to increase as the number of groups increases. For simplicity, we assume that the sample size of each group is $n/K$ and the standard error is inversely proportional to $n/K$, that is, $\sigma_k^2 = \frac{v_k^2}{n/K} = \frac{K v_k^2}{n}$ for some $v_k > 0$. Then, as discussed above, if the sample size grows faster than the number of groups, that is, $K/n \to 0$, we have $w_k^{\ast}(\kappa) = 1$. However, if $K/n \to c \in (0,1)$, $\psi_k(w_k;\kappa)$ can be approximated by
\[
\bar{\psi}_k(w_k;\kappa) \ \equiv \ \sqrt{c v_k^2} w_k \eta \left( (w_k^{-1} - 1) \cdot \left( \kappa /  \sqrt{c v_k^2} \right) \right).
\]

When applying our shrinkage approach to a continuous covariate, it is important to consider how to discretize $X_i$. As the number of groups increases, the standard error of each group is expected to increase. In contrast, this also allows for the implementation of more complex treatment rules. Hence, to determine an optimal discretization strategy, it is necessary to analyze this trade-off. However, in our setting, investigating this trade-off is challenging because the regret function is evaluated based on the optimal treatment given a fixed number of groups. This makes it difficult to accurately assess the improvement in optimal welfare that results from increasing $K$. Therefore, we do not pursue this extension further in this paper.
\end{Remark}

\subsection{Shrinkage toward a general location}\label{SubSec:reg}
In the previous sections, we consider the shrinkage rules that shrink toward the average of $\hat{\bm{\theta}}$. However, there are other options for shrinkage locations. In this section, we consider shrinkage rules that shrink toward other estimates such as a regression estimate, a weighted average estimate, and so on. Concretely, we consider the following shrinkage rules:
\begin{equation}
\tilde{\bm{\delta}}^{\bm{w}}(\hat{\bm{\theta}}) \ \equiv \ \left( \tilde{\delta}^{w_1}_{1}(\hat{\bm{\theta}}), \ldots, \tilde{\delta}^{w_K}_{K}(\hat{\bm{\theta}}) \right)', \label{shrinkage_rule_reg}
\end{equation}
where $\tilde{\delta}^{w_k}_{k}(\hat{\bm{\theta}}) \equiv 1 \left\{ w_k \cdot \hat{\theta}_k + (1-w_k) \cdot \hat{\xi}_k \right\}$, $\hat{\xi}_k \equiv \sum_{j=1}^K \omega_{k,j} \hat{\theta}_j$, and $(\omega_{k,1}, \ldots, \omega_{k,K})'$ is a known weight vector. For example, if $\hat{\xi}_k$ is the regression estimate of $\hat{\theta}_k$ on a vector of regression variables $z_k$, then $\hat{\xi}_k$ can be written as
\[
\hat{\xi}_k \ = \ z_k' \left( \bm{Z}'\bm{Z} \right)^{-1} \bm{Z}' \hat{\bm{\theta}} \ = \ \sum_{j=1}^K z_k' \left( \bm{Z}'\bm{Z} \right)^{-1} z_j \hat{\theta}_j,
\]
where $z_k$ can be different from $x_k$ and $\bm{Z} \equiv \left( z_1, \ldots, z_K \right)'$. If $X_i$ represents the region in which the experiment is conducted, then we can use region-specific characteristics as $z_k$. Similarly, we can also consider the shrinkage rule that shrinks toward the weighted average estimate $\hat{\xi}_k = \sum_{j=1}^K p_j \hat{\theta}_j$ or the weighted regression estimate $\hat{\xi}_k = z_k' \left( \sum_{j=1}^K p_j \cdot z_j z_j' \right)^{-1}  \sum_{j=1}^K p_j \cdot z_j \hat{\theta}_j$.

Instead of Assumption \ref{ass:parameter_space}, we assume that the difference between $\theta_k$ and the estimand of $\hat{\xi}$ is bounded by $\kappa$.

\begin{Assumption}\label{ass:parameter_space_reg}
For a positive constant $\kappa > 0$, the parameter $\bm{\theta}$ satisfies the following condition:
$$
\left| \theta_k - \xi_k \right| \ \leq \ \kappa, \ \ \ k = 1, \ldots, K,
$$
where $\xi_k \equiv \sum_{j=1}^K \omega_{k,j} \theta_j$.
\end{Assumption}

If we consider shrinkage rules that shrink toward the regression estimate, then we obtain
\[
\xi_k \ = \ z_k' \left( \bm{Z}'\bm{Z} \right)^{-1} \bm{Z}' \bm{\theta},
\]
where $\xi_k$ is interpreted as a linear projection of $\theta_k$ on $z_k$. Hence, Assumption \ref{ass:parameter_space_reg} implies that residuals from regressing $\bm{\theta}$ on $\bm{Z}$ are bounded by $\kappa$. Assumption \ref{ass:parameter_space_reg} generalizes Assumption \ref{ass:parameter_space} because we have $\xi_k = \sum_{j=1}^K \omega_{k,j} \theta_j = \overline{\bm{\theta}}$ when $\omega_{k,j} = 1/K$.

Similar to Section \ref{SubSec:choice}, we propose selecting shrinkage factors that minimize an upper bound of the maximum regret under Assumption \ref{ass:parameter_space_reg}. We observe that
\[
 w_k \cdot \hat{\theta}_k + (1-w_k) \cdot \hat{\xi}_k \ \sim \ N \left( w_k \cdot \theta_k + (1-w_k) \cdot \xi_k, \,\tilde{s}_k^2(w_k) \right),
\]
where $\tilde{s}_k^2(w_k)$ denotes the variance of $w_k \cdot \hat{\theta}_k + (1-w_k) \cdot \hat{\xi}_k$. Let $\Theta_{g}(\kappa)$ be the space of $\bm{\theta}$ satisfying Assumption \ref{ass:parameter_space_reg}. Then, as in (\ref{max_regret}), the maximum regret of the shrinkage rule (\ref{shrinkage_rule_reg}) is bounded by
$$
\max_{\bm{\theta} \in \Theta_{g}(\kappa)} R(\bm{\theta}, \tilde{\bm{\delta}}^{\bm{w}}) \ \leq \ \sum_{k=1}^K p_k \cdot \tilde{s}_k(w_k) \eta \left( \frac{(1-w_k) \cdot \kappa}{\tilde{s}_k(w_k)} \right).
$$
Hence, similar to $\bm{w}^{\ast}(\kappa)$, we propose the following shrinkage factors:
\begin{eqnarray*}
& & \bm{w}^{\ast}_{g}(\kappa) \equiv \left( w_{g,1}^{\ast}(\kappa), \ldots, w_{g,K}^{\ast}(\kappa) \right)', \\
& & \hspace{0.5in} \text{where} \ w_{g,k}^{\ast}(\kappa) \ \equiv \ \text{arg} \min_{w_k \in [0,1]} \left\{ \tilde{s}_k(w_k) \eta \left( \frac{(1-w_k) \cdot \kappa}{\tilde{s}_k(w_k)} \right) \right\} \ \text{for all $k$}.
\end{eqnarray*}

As discussed above, we can consider shrinkage rules that shrink toward a general location such as a regression estimate or a weighted average estimate. However, because analyzing the maximum regrets of such shrinkage rules is computationally extensive, we focus on the shrinkage rule proposed in Section \ref{SubSec:choice} in subsequent sections.

\section{Main results}\label{Sec:main}

The proposed shrinkage rule does not minimize the maximum regret because $\bm{w}^{\ast}(\kappa)$ minimizes an upper bound of the maximum regret. Hence, it remains unclear whether the maximum regret of $\hat{\bm{\delta}}^{\bm{w}^{\ast}(\kappa)}(\hat{\bm{\theta}})$ is smaller than that of $\hat{\bm{\delta}}^{\text{CES}}(\hat{\bm{\theta}})$ and $\hat{\bm{\delta}}^{\mathrm{pool}}(\hat{\bm{\theta}})$. This section compares the maximum regret of the proposed shrinkage rule with the CES and pooling rules when $\kappa$ is correctly specified or misspecified.

\subsection{Comparison with the CES and pooling rules when $\kappa$ is correctly specified}\label{SubSec:correct}
We compare the maximum regret of the proposed shrinkage rule with that of the CES and pooling rules when the true space of CATEs $\Theta(\kappa)$ is known. First, we compare the maximum regrets of $\hat{\bm{\delta}}^{\bm{w}^{\ast}(\kappa)}(\hat{\bm{\theta}})$ and $\hat{\bm{\delta}}^{\text{CES}}(\hat{\bm{\theta}})$.

\begin{Theorem}\label{thm:CES_correct}
Suppose that Assumption \ref{ass:parameter_space} holds. Let $\underline{\sigma} \equiv \min_k \{ \sigma_k \}$, $\overline{\sigma} \equiv \max_k \{ \sigma_k \}$, and $s_0 \equiv \sqrt{\frac{1}{K^2} \sum_{k=1}^K \sigma_k^2}$. If $t^{\ast}(0) \cdot (\overline{\sigma} - \underline{\sigma}) \leq \kappa$ or $\kappa \leq \eta(0) \cdot (\underline{\sigma} - s_0)$ holds, then we obtain
$$
\max_{\bm{\theta} \in \Theta(\kappa)} R(\bm{\theta},\hat{\bm{\delta}}^{\bm{w}^{\ast}(\kappa)}) \ \leq \ \max_{\bm{\theta} \in \Theta(\kappa)} R(\bm{\theta},\hat{\bm{\delta}}^{\mathrm{CES}}).
$$
In addition, if $t^{\ast}(0) \cdot (\overline{\sigma} - \underline{\sigma}) \leq \kappa < t^{\ast}(0) \cdot \left( 1-\frac{1}{K} \right) \overline{\sigma}$ or $\kappa < \eta(0) \cdot ( \underline{\sigma} - s_0 )$ holds, then we obtain
\begin{equation}
\max_{\bm{\theta} \in \Theta(\kappa)} R(\bm{\theta},\hat{\bm{\delta}}^{\bm{w}^{\ast}(\kappa)}) \ < \ \max_{\bm{\theta} \in \Theta(\kappa)} R(\bm{\theta},\hat{\bm{\delta}}^{\mathrm{CES}}). \label{ineq_CES_correct}
\end{equation}
\end{Theorem}

The first result of Theorem \ref{thm:CES_correct} implies that the maximum regret of $\hat{\bm{\delta}}^{\bm{w}^{\ast}(\kappa)}$ is less than or equal to that of $\hat{\bm{\delta}}^{\mathrm{CES}}$ when $\kappa \geq t^{\ast}(0) \cdot (\overline{\sigma} - \underline{\sigma}) \simeq 0.75 (\overline{\sigma} - \underline{\sigma})$ or $\kappa \leq  \eta(0) \cdot ( \underline{\sigma} - s_0 ) \simeq  0.17 ( \underline{\sigma} - s_0 )$. This implies that the proposed shrinkage rule outperforms the CES rule under the homoscedasticity assumption $\sigma_1 = \cdots = \sigma_K = \sigma$. Even when the homoscedasticity assumption does not hold, the maximum regret of $\hat{\bm{\delta}}^{\bm{w}^{\ast}(\kappa)}$ is less than or equal to that of $\hat{\bm{\delta}}^{\mathrm{CES}}$ for sufficiently small or large $\kappa$. For example, if $\overline{\sigma} = 1.5  \underline{\sigma}$, then we have
\[
\max_{\bm{\theta} \in \Theta(\kappa)} R(\bm{\theta},\hat{\bm{\delta}}^{\bm{w}^{\ast}(\kappa)}) \ \leq \ \max_{\bm{\theta} \in \Theta(\kappa)} R(\bm{\theta},\hat{\bm{\delta}}^{\mathrm{CES}}) \ \ \ \text{when $0.38 \underline{\sigma} \leq \kappa$ or $\kappa \leq 0.17 (\underline{\sigma} - s_0)$.}
\] 

The second result of Theorem \ref{thm:CES_correct} implies that the maximum regret of $\hat{\bm{\delta}}^{\bm{w}^{\ast}(\kappa)}$ is less than that of $\hat{\bm{\delta}}^{\mathrm{CES}}$. As discussed in Section \ref{SubSec:asymptotic}, the proposed shrinkage rule becomes the CES rule when $\kappa$ is sufficiently large. Because condition $\kappa < t^{\ast}(0) \cdot \left( 1-\frac{1}{K} \right) \overline{\sigma}$ implies $\hat{\bm{\delta}}^{\bm{w}^{\ast}(\kappa)} \neq \hat{\bm{\delta}}^{\mathrm{CES}}$, the proposed shrinkage rule has a smaller maximum regret than the CES rule when $t^{\ast}(0) \cdot (\overline{\sigma} - \underline{\sigma}) \leq \kappa < t^{\ast}(0) \cdot \left( 1-\frac{1}{K} \right) \overline{\sigma}$. Hence, under the homoscedasticity assumption, the proposed shrinkage rule has a smaller maximum regret than the CES rule when $\kappa < t^{\ast}(0) \cdot \left( 1-\frac{1}{K} \right) \sigma \simeq 0.75 \left( 1-\frac{1}{K} \right) \sigma$. If the homoscedasticity assumption does not hold, it is unclear whether (\ref{ineq_CES_correct}) holds when $\kappa$ is between $0.75(\overline{\sigma} - \underline{\sigma})$ and $0.17(\underline{\sigma} - s_0)$. However, numerical simulations in Section \ref{Sec:example} show that (\ref{ineq_CES_correct}) holds for all $\kappa$ below a certain value in all correctly specified settings.

\if0
\begin{Remark}\label{rem:Stoye_CES}
In some cases, the proposed shrinkage rule minimizes the maximum regret among all treatment rules. \cite{stoye2012minimax} shows that the CES rule becomes the minimax regret rule among all treatment rules when $\kappa$ is large enough. Hence, because $\hat{\bm{\delta}}^{\bm{w}^{\ast}(\kappa)}(\hat{\bm{\theta}})$ has a smaller maximum regret than $\hat{\bm{\delta}}^{\text{CES}}(\hat{\bm{\theta}})$, the proposed shrinkage rule is also the minimax regret rule among all treatment rules when $\kappa$ is sufficiently large.
\end{Remark}
\fi

Next, we compare the maximum regrets of $\hat{\bm{\delta}}^{\bm{w}^{\ast}(\kappa)}(\hat{\bm{\theta}})$ and $\hat{\bm{\delta}}^{\text{pool}}(\hat{\bm{\theta}})$.

\begin{Theorem}\label{thm:pool_correct}
Suppose that Assumption \ref{ass:parameter_space} holds and $K$ is even. Then, we obtain
\[
\begin{cases}
    \max_{\bm{\theta} \in \Theta(\kappa)} R(\bm{\theta},\hat{\bm{\delta}}^{\bm{w}^{\ast}(\kappa)}) \ \leq \ 2 \cdot \max_{\bm{\theta} \in \Theta(\kappa)} R(\bm{\theta},\hat{\bm{\delta}}^{\mathrm{pool}}) & \text{if $\kappa > 0$,} \\
    \max_{\bm{\theta} \in \Theta(\kappa)} R(\bm{\theta},\hat{\bm{\delta}}^{\bm{w}^{\ast}(\kappa)}) \ \leq \ \max_{\bm{\theta} \in \Theta(\kappa)} R(\bm{\theta},\hat{\bm{\delta}}^{\mathrm{pool}}) & \text{if $\kappa=0$.}
\end{cases}
\]
In addition, if $s_0 \eta(\kappa/s_0) > 2\eta(0) \left( \sum_{k=1}^K p_k \sigma_k \right)$ holds, then we obtain
\begin{equation}
\max_{\bm{\theta} \in \Theta(\kappa)} R(\bm{\theta},\hat{\bm{\delta}}^{\bm{w}^{\ast}(\kappa)}) \ < \ \max_{\bm{\theta} \in \Theta(\kappa)} R(\bm{\theta},\hat{\bm{\delta}}^{\mathrm{pool}}). \label{ineq_pool_correct}
\end{equation}
\end{Theorem}

The first result of Theorem \ref{thm:pool_correct} implies that the maximum regret of $\hat{\bm{\delta}}^{\bm{w}^{\ast}(\kappa)}$ is less than or equal to twice that of $\hat{\delta}^{\mathrm{pool}}$ for any $\kappa > 0$. In addition, the maximum regret of $\hat{\bm{\delta}}^{\bm{w}^{\ast}(\kappa)}$ is less than or equal to that of $\hat{\delta}^{\mathrm{pool}}$ when $\kappa = 0$. Hence, the proposed shrinkage rule is not so inferior to the pooling rule when $\kappa$ is positive. If the CATEs are the same across the groups, that is, $\kappa = 0$, the proposed rule is superior to the pooling rule.

Because $\eta(\cdot)$ is strictly increasing, the second result of Theorem \ref{thm:pool_correct} implies that the proposed shrinkage rule has a smaller maximum regret than the pooling rule when $\kappa$ is sufficiently large. If the homoscedasticity assumption holds, for any $\kappa > 0$, the condition $s_0 \eta(\kappa/s_0) > 2\eta(0) \left( \sum_{k=1}^K p_k \sigma_k \right)$ is equivalent to
\begin{equation}
\left\{ \frac{\eta\left( \sqrt{K} \cdot (\kappa / \sigma) \right)}{\sqrt{K} \cdot (\kappa / \sigma)} \right\} \cdot (\kappa / \sigma) \ > \ 2 \eta(0) \ \simeq \ 0.34, \label{condition_pool_correct}
\end{equation}
where a function $\eta(a)/a$ satisfies $\eta(a)/a \geq 1/2$ and $\lim_{a \to \infty} \eta(a)/a = 1$. Hence, for any $K$, (\ref{ineq_pool_correct}) holds when $\kappa / \sigma$ is larger than $4 \eta(0) \simeq 0.68$. Furthermore, if $K$ goes to infinity, the condition (\ref{condition_pool_correct}) becomes $\kappa / \sigma > 2 \eta(0) \simeq 0.34$.

Combining Theorems \ref{thm:CES_correct} and \ref{thm:pool_correct} yields conditions under which the proposed shrinkage rule is superior to both the CES and pooling rules.

\begin{Corollary}\label{cor:dominate}
Suppose that Assumption \ref{ass:parameter_space} holds and $K$ is even. If
\begin{equation}
\overline{\sigma} \ < \ K \underline{\sigma} \ \ \ \text{and} \ \ \  \frac{\sum_{k=1}^K p_k \sigma_k}{\overline{\sigma}} \ < \ \frac{t^{\ast}(0) \left( 1- \frac{1}{K} \right)}{4 \eta(0)}, \label{cor_condition}
\end{equation}
then both (\ref{ineq_CES_correct}) and (\ref{ineq_pool_correct}) hold when $\kappa$ satisfies
\begin{equation}
\max \left\{ 4 \eta(0) \left( \sum_{k=1}^K p_k \sigma_k \right), \, t^{\ast}(0) \cdot \left( \overline{\sigma} - \underline{\sigma} \right)  \right\} \ < \ \kappa \ < \ t^{\ast}(0) \cdot \left( 1 - \frac{1}{K} \right) \overline{\sigma}. \label{range_dominate}
\end{equation}
\end{Corollary}

Corollary \ref{cor:dominate} implies that if $K$ is large, then the proposed shrinkage rule has a smaller maximum regret than both the CES and pooling rules when $\kappa$ is within a certain range. When $K$ is large, the first condition of (\ref{cor_condition}) is satisfied as long as the dispersion of standard errors is not too large. Because $\frac{\sum_{k=1}^K p_k \sigma_k}{\overline{\sigma}} \leq 1$ and $\frac{t^{\ast}(0) }{4 \eta(0)} \simeq 1.106$, the second condition of (\ref{cor_condition}) is satisfied when $K \geq 12$. Hence, both (\ref{ineq_CES_correct}) and (\ref{ineq_pool_correct}) hold for some $\kappa$ when $K$ is sufficiently large. Under the homoscedasticity assumption, the range (\ref{range_dominate}) becomes
\[
0.68 \sigma \ < \ \kappa \ < \ 0.75 \left( 1 - \frac{1}{K} \right) \sigma.
\]
The following remark shows that when the homoscedasticity assumption holds, the proposed shrinkage rule has a smaller maximum regret than both the CES and pooling rules over a wider range than (\ref{range_dominate}).

\begin{Remark}\label{rem:dominate}
As discussed above, if $\sigma_1 = \cdots = \sigma_K = \sigma$ holds, the condition $s_0 \eta(\kappa/s_0) > 2\eta(0) \left( \sum_{k=1}^K p_k \sigma_k \right)$ is equivalent to
\begin{equation*}
\frac{\eta\left( \sqrt{K} \cdot (\kappa / \sigma) \right)}{\sqrt{K}}  \ > \ 2 \eta(0) \ \simeq \ 0.34.
\end{equation*}
Hence, from Theorems \ref{thm:CES_correct} and \ref{thm:pool_correct}, both (\ref{ineq_CES_correct}) and (\ref{ineq_pool_correct}) hold when $\kappa / \sigma$ satisfies
\begin{equation}
\frac{\eta^{-1} \left( 2 \eta(0) \sqrt{K} \right)}{\sqrt{K}} \ < \ \frac{\kappa}{\sigma} \ <\ t^{\ast}(0) \left( 1 - \frac{1}{K}  \right), \label{range_kappa}
\end{equation}
where $\eta^{-1}$ is the inverse function of $\eta$. Because we have $\lim_{a \to \infty} \eta(a)/a = 1$, we obtain
\[
\frac{\eta\left( \sqrt{K} \cdot (\kappa / \sigma) \right)}{\sqrt{K}} \ = \ \frac{\eta\left( \sqrt{K} \cdot (\kappa / \sigma) \right)}{\sqrt{K}\cdot (\kappa / \sigma)} \cdot (\kappa / \sigma) \ \to \ \kappa / \sigma \ \ \ \text{as $K \to \infty$.}
\]
Therefore, the range (\ref{range_kappa}) approaches $(0.34, 0.75)$ as $K$ goes to infinity.

Figure \ref{fig:kappa_range} shows the above range of $\kappa / \sigma$ (\ref{range_kappa}) as a function of $K$. The region below the dashed line represents the region in which the proposed shrinkage rule has a smaller maximum regret than the CES rule. The region above the dotted line represents the region in which the proposed shrinkage rule has a smaller maximum regret than the pooling rule. Hence, Figure \ref{fig:kappa_range} implies that no value of $\kappa / \sigma$ satisfies the condition (\ref{range_kappa}) when $K \leq 6$ and the range (\ref{range_kappa}) becomes wider as $K$ increases. For example, when $K=20, \, 50, \, 100$, the ranges (\ref{range_kappa}) are $(0.60, 0.71)$, $(0.54, 0.74)$, and $(0.50,0.74)$, respectively. Therefore, if $K$ is large, then both (\ref{ineq_CES_correct}) and (\ref{ineq_pool_correct}) hold when the dispersion of CATEs is moderate. In Section \ref{Sec:example}, numerical simulations show that the actual range in which both (\ref{ineq_CES_correct}) and (\ref{ineq_pool_correct}) hold is wider than the range (\ref{range_kappa}).

\begin{figure}[h] 
\centering
\includegraphics[width=12cm]{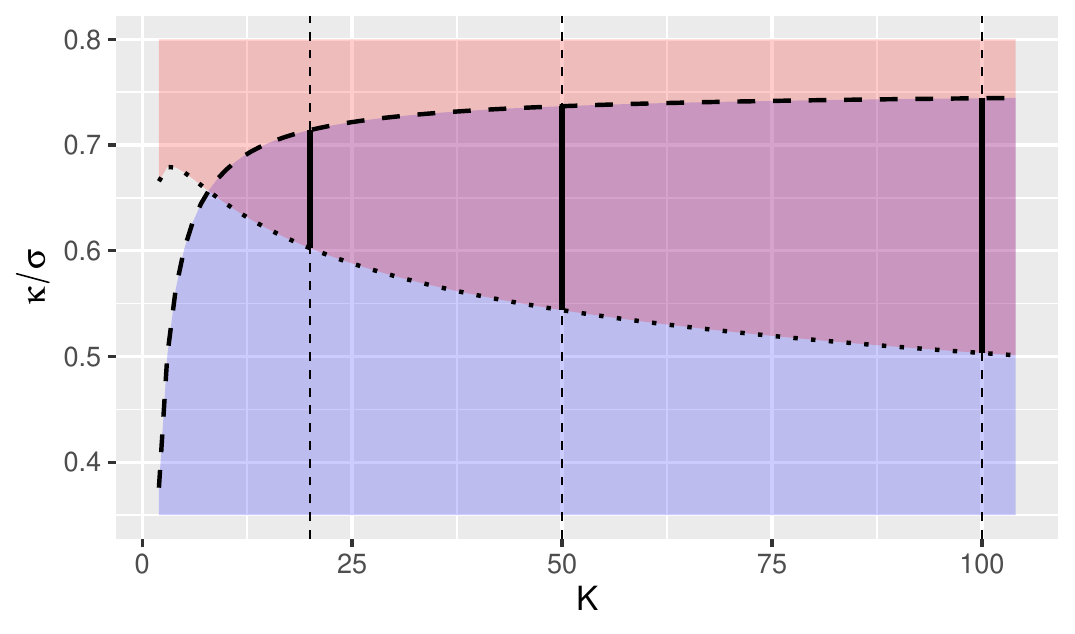}
\caption{The dashed and dotted lines denote $t^{\ast}(0) \left( 1 - \frac{1}{K} \right)$ and $K^{-1/2} \eta^{-1} \left( 2 \eta(0) \sqrt{K} \right)$, respectively. The solid lines denote the ranges (\ref{range_kappa}) for $K=20, \, 50, \, 100$. When $K=20, \, 50, \, 100$, the ranges (\ref{range_kappa}) are $(0.602, 0.714)$, $(0.544, 0.737)$, and $(0.503,0.744)$, respectively.} \label{fig:kappa_range}
\end{figure}
\end{Remark}

\begin{Remark}\label{rem:pool_correct}
In Theorem \ref{thm:pool_correct}, it is assumed that $K$ is even. This assumption is unnecessary to obtain the result similar to Theorem \ref{thm:pool_correct}. However, when $K$ is even, we can easily derive the lower bound of $\max_{\bm{\theta} \in \Theta(\kappa)}R(\bm{\theta},\hat{\bm{\delta}}^{\mathrm{pool}})$. Because we have $\bm{\theta}_t(\kappa) \equiv (t+\kappa, \ldots, t+\kappa, t-\kappa, \ldots, t-\kappa)' \in \Theta(\kappa)$ for all $t \in \mathbb{R}$ when $K$ is even, the following lower bound is obtained:
$$
\max_{\bm{\theta} \in \Theta(\kappa)}R(\bm{\theta},\hat{\bm{\delta}}^{\mathrm{pool}}) \ \geq \ \max_{t \in \mathbb{R}} R(\bm{\theta}_t(\kappa),\hat{\bm{\delta}}^{\mathrm{pool}}).
$$
The results of Theorem \ref{thm:pool_correct} are derived using this lower bound. Even if $K$ is odd, we can derive a similar lower bound because $(t+\kappa, \ldots, t+\kappa, t, t-\kappa, \ldots, t-\kappa)' \in \Theta(\kappa)$ holds for all $t \in \mathbb{R}$. 
\end{Remark}

\subsection{Comparison with the CES and pooling rules when $\kappa$ is misspecified}\label{SubSec:misspecified}
In the previous section, we assume that the space of CATEs $\Theta (\kappa)$ is known. However, in practice, it may be challenging to select a reasonable $\kappa$. In this section, we consider the case in which the researcher's choice of the space of CATEs $\Theta (\kappa')$ is different from the true space of CATEs $\Theta (\kappa)$, that is, $\kappa$ is misspecified.

First, we compare the maximum regret of the proposed shrinkage rule with that of the CES rule when $\kappa$ is misspecified.

\begin{Theorem}\label{thm:CES_mis}
Suppose that Assumption \ref{ass:parameter_space} holds. Let $c$ be a nonnegative constant. If $\kappa' = (1+c)\kappa$ holds, then it follows that
\begin{equation}
\max_{\bm{\theta} \in \Theta(\kappa)} R(\bm{\theta},\hat{\bm{\delta}}^{\bm{w}^{\ast}(\kappa')}) \ < \ \max_{\bm{\theta} \in \Theta(\kappa)} R(\bm{\theta},\hat{\bm{\delta}}^{\mathrm{CES}}) \label{ineq_CES_mis}
\end{equation}
when $t^{\ast}(0) \cdot (\overline{\sigma} - \underline{\sigma}) \leq \kappa < \frac{t^{\ast}(0) \cdot \left(1- \frac{1}{K}\right) \overline{\sigma}}{1+c}$ or $\kappa <  \frac{\eta(0) \cdot (\underline{\sigma} - s_0)}{1+c}$. If $\kappa = (1+c)\kappa'$, then (\ref{ineq_CES_mis}) holds when $\kappa < \eta(0) \cdot \left\{ \left( \frac{1+c}{1+2c}\right)  \underline{\sigma} - (1+c)s_0 \right\}$.
\end{Theorem}

In Theorem \ref{thm:CES_mis}, we use the parameter of CATEs $\Theta (\kappa')$, which may differ from the true space $\Theta(\kappa)$, to determine the shrinkage factors. The first result of Theorem \ref{thm:CES_mis} implies that if $\kappa \leq \kappa'$, the maximum regret of $\hat{\bm{\delta}}^{\bm{w}^{\ast}(\kappa')}$ is less than that of $\hat{\bm{\delta}}^{\mathrm{CES}}$ under conditions analogous to those in Theorem \ref{thm:CES_correct}. However, the range of $\kappa$ in which (\ref{ineq_CES_mis}) holds is narrower than the range of Theorem \ref{thm:CES_correct} due to the misspecification of $\kappa$. The second result of Theorem \ref{thm:CES_mis} implies that if $\kappa \geq \kappa'$, the maximum regret of $\hat{\bm{\delta}}^{\bm{w}^{\ast}(\kappa')}$ is less than or equal to that of $\hat{\bm{\delta}}^{\mathrm{CES}}$ when $\kappa$ is sufficiently small. According to the proof of Theorem \ref{thm:CES_mis}, if $\kappa \leq \kappa'$, the maximum regret of $\hat{\bm{\delta}}^{\bm{w}^{\ast}(\kappa')}$ is less than that of $\hat{\bm{\delta}}^{\mathrm{CES}}$ for all $\kappa$ under the homoscedasticity assumption. Conversely, if $\kappa > \kappa'$, the maximum regret of $\hat{\bm{\delta}}^{\bm{w}^{\ast}(\kappa')}$ can be larger than that of $\hat{\bm{\delta}}^{\mathrm{CES}}$ even when the homoscedasticity assumption holds. In fact, numerical simulations in Section \ref{Sec:example} show that the maximum regret of $\hat{\bm{\delta}}^{\bm{w}^{\ast}(\kappa')}$ is at most $1.7$ times less than that of $\hat{\bm{\delta}}^{\mathrm{CES}}$ for all $\kappa$ even when $\kappa$ is twice as large as $\kappa'$.

\if0
In Theorem \ref{thm:CES_mis}, we use the parameter space $\Theta (\kappa')$, which may differ from the true parameter space $\Theta(\kappa)$, to determine the shrinkage factors. Hence, the upper bound of (\ref{Thm_CES_mis}) differs from (\ref{Thm_CES_correct}). However, when $\kappa' = \kappa$, Theorems \ref{thm:CES_correct} and \ref{thm:CES_mis} are equivalent.

Theorem \ref{thm:CES_mis} implies that we obtain an upper bound similar to that of Theorem \ref{thm:CES_correct} even when $\kappa$ is misspecified. When $\overline{\sigma} - \underline{\sigma} \leq \kappa / t^{\ast}(0)$ holds, the upper bound becomes one when $\kappa' \geq \kappa$. This implies that the maximum regret of $\hat{\bm{\delta}}^{\bm{w}^{\ast}(\kappa')}$ is not grater than that of $\hat{\bm{\delta}}^{\mathrm{CES}}$ when $\kappa' \geq \kappa$. Because $\eta(a) \geq a/2$ from Lemma \ref{lem:eta} in Appendix 1, we obtain $H(a) \equiv a/\eta(a) \leq 2$. This implies that when $\kappa' < \kappa$ and $\overline{\sigma} - \underline{\sigma} \leq \kappa / t^{\ast}(0)$, we have
\[
\frac{\max_{\bm{\theta} \in \Theta(\kappa)} R(\bm{\theta},\hat{\bm{\delta}}^{\bm{w}^{\ast}(\kappa')})}{\max_{\bm{\theta} \in \Theta(\kappa)} R(\bm{\theta},\hat{\bm{\delta}}^{\mathrm{CES}})} \ \leq \ 1 + 2 \left( \frac{ \kappa - \kappa'}{\kappa'} \right).
\]
Hence, the upper bound is close to one if $(\kappa - \kappa')/\kappa'$ is close to zero. As discussed in Section 3.2, the shrinkage factor $w_k^{\ast}(\kappa')$ becomes one for a sufficiently large $\kappa'$. Hence, if $\overline{\sigma} - \underline{\sigma} \leq \kappa / t^{\ast}(0)$ holds and $\kappa'$ is sufficiently large, the upper bound becomes
\[
1 + H(0) \cdot \left( \frac{|\kappa - \kappa'|_{+}}{\kappa'} \right) \ = \ 1.
\]
This is because the proposed shrinkage rule $\hat{\bm{\delta}}^{\bm{w}^{\ast}(\kappa')}$ becomes the CES rule when $\kappa'$ is sufficiently large.
\fi

Next, we compare the maximum regrets of the proposed shrinkage rule with the pooling rule when $\kappa$ is misspecified.

\begin{Theorem}\label{thm:pool_mis}
Suppose that Assumption \ref{ass:parameter_space} holds and $K$ is even.  Let $c$ be a nonnegative constant. If $\kappa' = (1+c)\kappa$, then we obtain
\begin{equation}
\max_{\bm{\theta} \in \Theta(\kappa)} R(\bm{\theta},\hat{\bm{\delta}}^{\bm{w}^{\ast}(\kappa')}) \ < \ \max_{\bm{\theta} \in \Theta(\kappa)} R(\bm{\theta},\hat{\bm{\delta}}^{\mathrm{pool}}) \label{ineq_pool_mis}
\end{equation}
when $s_0 \eta(\kappa / s_0) > 2 \eta(0) \left( \sum_{k=1}^K p_k \sigma_k \right)$. If $\kappa = (1+c)\kappa'$, then (\ref{ineq_pool_mis}) holds when $s_0 \eta(\kappa / s_0) > 2 \eta(0) (1+2c) \left( \sum_{k=1}^K p_k \sigma_k \right)$.
In addition, if $\kappa' = (1+c)\kappa$ or $\kappa = (1+c)\kappa'$ holds, then we obtain
\begin{equation}
\max_{\bm{\theta} \in \Theta(\kappa)} R(\bm{\theta},\hat{\bm{\delta}}^{\bm{w}^{\ast}(\kappa')}) \ < \ 2(1+2c) \cdot \max_{\bm{\theta} \in \Theta(\kappa)} R(\bm{\theta},\hat{\bm{\delta}}^{\mathrm{pool}}) \ \ \ \text{for all $\kappa$.}\nonumber
\end{equation}
\end{Theorem}

The first result of Theorem \ref{thm:pool_mis} implies that if $\kappa \leq \kappa'$, the maximum regret of $\hat{\bm{\delta}}^{\bm{w}^{\ast}(\kappa')}$ is less than that of $\hat{\bm{\delta}}^{\mathrm{pool}}$ under the same condition as Theorem \ref{thm:pool_correct}. The second result of Theorem \ref{thm:pool_mis} also implies that (\ref{ineq_pool_mis}) holds when $\kappa$ is sufficiently large, but the range of $\kappa$ in which (\ref{ineq_pool_mis}) holds is narrower than the range of Theorem \ref{thm:pool_correct} due to the misspecification of $\kappa$. Similar to Theorem \ref{thm:pool_correct}, the third result of Theorem \ref{thm:pool_mis} implies that the proposed shrinkage rule is not so inferior to the pooling rule if the degree of misspecification is smaller.

If the homoscedasticity assumption holds and $\kappa' = (1+c) \kappa$, Theorems \ref{thm:CES_mis} and \ref{thm:pool_mis} imply that both (\ref{ineq_CES_mis}) and (\ref{ineq_pool_mis}) hold when $\kappa / \sigma$ satisfies
\[
\frac{\eta^{-1} \left( 2 \eta(0) \sqrt{K} \right)}{\sqrt{K}} \ < \ \frac{\kappa}{\sigma} \ < \ \frac{t^{\ast}(0) \left( 1 - \frac{1}{K}  \right)}{1+c}.
\]
Hence, although the above range is narrower than (\ref{range_kappa}), the proposed shrinkage rule has a smaller maximum regret than the CES and pooling rules for some $\kappa$ when $\kappa'$ is larger than $\kappa$. On the other hand, when $\kappa'$ is less than $\kappa$, Theorems \ref{thm:CES_mis} and \ref{thm:pool_mis} do not establish whether there exists $\kappa / \sigma$ such that both (\ref{ineq_CES_mis}) and (\ref{ineq_pool_mis}) hold. However, numerical simulations in Section \ref{Sec:example} show that both (\ref{ineq_CES_correct}) and (\ref{ineq_pool_correct}) hold for some $\kappa$ even when $\kappa'$ is half of $\kappa$.

This section provides sufficient conditions for the proposed shrinkage rule to have a smaller maximum regret than the CES and pooling rules. While these conditions are intuitive and easy to interpret, it is possible to improve these conditions. However, attempting to improve the result leads to sufficient conditions that are difficult to interpret. Therefore, in this section, we present less sharp but more interpretable conditions.

\if0
As in Theorem \ref{thm:CES_mis}, when $\kappa' = \kappa$, the upper bound of Theorem \ref{thm:pool_mis} is the same as that of Theorem \ref{thm:pool_correct}. When $\kappa' \geq \kappa$, the bound of (\ref{Thm_pool_mis}) implies that:
\begin{eqnarray*}
\frac{\max_{\bm{\theta} \in \Theta(\kappa)} R(\bm{\theta},\hat{\bm{\delta}}^{\bm{w}^{\ast}(\kappa')})}{\max_{\bm{\theta} \in \Theta(\kappa)} R(\bm{\theta},\hat{\bm{\delta}}^{\mathrm{pool}})} & \leq & \min \left\{ \frac{2 \eta(\kappa' / s_0)}{\eta(\kappa / s_0)}, \, \frac{s_0 \eta(\kappa' / s_0)}{s_0 \eta(\kappa / s_0) - \kappa}, \, \frac{2\eta(0) \left( \sum_{k=1}^K p_k \sigma_k \right)}{s_0 \eta(\kappa / s_0)} \right\}.
\end{eqnarray*}
Because $\eta'(a) \leq 1$ from Lemma \ref{lem:eta_derivative} in Appendix 1, we have $\eta(a) \leq \eta(a') + (a-a')$ for $a \geq a'$. Hence, when $\kappa' \geq \kappa$, the first bound is bounded by
\begin{eqnarray*}
\frac{2 \eta(\kappa' / s_0)}{\eta(\kappa / s_0)} & \leq & \frac{2 \left\{ \eta(\kappa / s_0) + (\kappa' - \kappa)/s_0 \right\} }{\eta(\kappa / s_0)} \ \leq \ 2 \left\{ 1 + H(\kappa / s_0) \left( \frac{\kappa' - \kappa}{\kappa} \right) \right\}.
\end{eqnarray*}
The upper bound of (\ref{Thm_pool_mis}) approaches two if $(\kappa' - \kappa)/\kappa$ is close to zero even when $\kappa$ is misspecified. Similarly, the second bound is bounded by the following:
\begin{eqnarray*}
\frac{s_0 \eta(\kappa' / s_0)}{s_0 \eta(\kappa / s_0) - \kappa} & \leq & \frac{s_0 \eta(\kappa / s_0) + (\kappa' - \kappa)}{s_0 \eta(\kappa / s_0) - \kappa}.
\end{eqnarray*}
Hence, the upper bound of (\ref{Thm_pool_mis}) approaches one as $\kappa'$ and $\kappa$ approach zero. The third bound of (\ref{Thm_pool_mis}) is identical to that in Theorem \ref{thm:pool_correct}, implying that the maximum regret of the shrinkage rule is smaller than that of the pooling rule when $\kappa \geq  0.68 \left( \sum_{k=1}^K p_k \sigma_k \right)$ and the maximum regret of the pooling rule can be much larger than that of the proposed shrinkage rule.

Because $\psi_k(w_k;\kappa)$ is increasing in $\kappa$, when $\kappa' \leq \kappa$, we obtain
\begin{eqnarray*}
\sum_{k=1}^K p_k \cdot \psi_k\left( w_k^{\ast}(\kappa') ;\kappa' \right) & \leq & \sum_{k=1}^K p_k \cdot \psi_k\left( w_k^{\ast}(\kappa) ;\kappa' \right) \ \leq \ \sum_{k=1}^K p_k \cdot \psi_k\left( w_k^{\ast}(\kappa) ;\kappa \right).
\end{eqnarray*}
Hence, in this case, the ratio of the maximum regrets is bounded by bound (\ref{Thm_pool_correct}) up to the following term:
\[
1 + \max_{k} \left\{ H \left( \frac{(1-w_k^{\ast}(\kappa')) \cdot \kappa' }{s_k \left( w_k^{\ast}(\kappa') \right)} \right) \right\} \cdot \left( \frac{\kappa - \kappa'}{\kappa'} \right).
\]
This implies that the upper bound of Theorem \ref{thm:pool_mis} is approximately equal to that of Theorem \ref{thm:pool_correct} when $(\kappa - \kappa')/\kappa'$ approaches zero.
\fi

\section{Numerical examples}\label{Sec:example}

We present numerical examples to illustrate the results obtained in the previous sections. We demonstrate the relationship between $\kappa$ and $w_k^{\ast}(\kappa)$ under the homoscedasticity assumption. When homoscedasticity holds, that is, $\sigma_1 = \cdots = \sigma_K = 1$, we obtain $w_1^{\ast}(\kappa) = \cdots = w_K^{\ast}(\kappa) = w^{\ast}(\kappa)$. We calculate the shrinkage factor $w^{\ast}(\kappa)$ numerically for each $\kappa \in [0,1]$. Figure \ref{fig:w_opt} shows the relationship between $\kappa$ and $w^{\ast}(\kappa)$ for $K = 2, \, 5, \, 100$. The proposed shrinkage rule becomes the CES rule when $\kappa$ is sufficiently large and approaches the pooling rule when $\kappa$ approaches zero in all settings. This finding is consistent with the results presented in Section \ref{SubSec:asymptotic}. Furthermore, as seen in Proposition \ref{prop:K_asymptotics}, when the number of subgroups increases ($K=100$), the shrinkage factor becomes one if $\kappa$ is larger than $t^{\ast}(0) \simeq 0.752$.

\begin{figure}[h] 
\centering
\includegraphics[width=13cm]{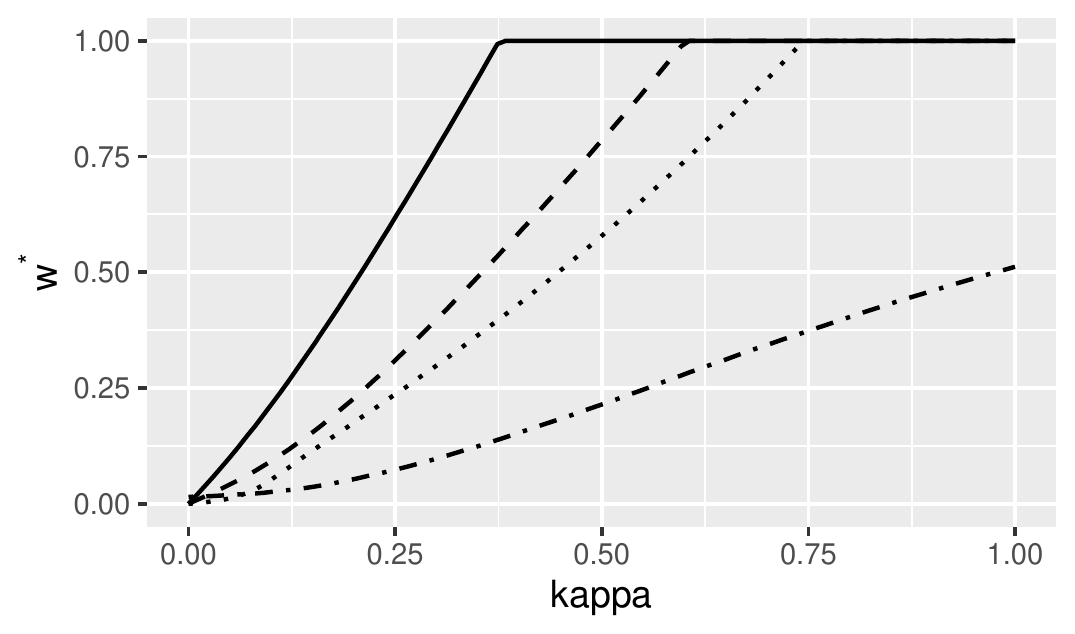}
\caption{The relationship between $\kappa$ and $w^{\ast}(\kappa)$ when $K=2, 5, 100$. The solid, dashed, dotted lines denote the shrinkage factors when $K=2, 5, 100$, respectively. The dot-dashed line denotes the median of $\hat{w}_{\mathrm{JS}}$ in Remark \ref{rem:shrinkage_est} when $K=100$ and $\hat{\theta}_k \sim N((-1)^k \kappa, 1)$.} \label{fig:w_opt}
\end{figure}

Figure \ref{fig:w_opt} also shows the shrinkage factor of the James-Stein-type estimator when $K=100$ and $\hat{\theta}_k \sim N((-1)^k \kappa, 1)$. Concretely, the dot-dashed line denotes the median of $\hat{w}_{\mathrm{JS}}$ in Remark \ref{rem:shrinkage_est}. In this setting, $\bm{\theta}$ is contained in $\Theta(\kappa)$ because $\theta_k = (-1)^k \kappa$ and $\overline{\theta} = 0$. Similar to $w^{\ast}(\kappa)$, the median of $\hat{w}_{\mathrm{JS}}$ increases as $\kappa$ increases. However, $\hat{w}_{\mathrm{JS}}$ tends to be smaller than $w_k^{\ast}(\kappa)$ for almost all $\kappa$. This implies that the proposed shrinkage rule places greater emphasis on the bias than on the variance compared with the James–Stein-type estimator.

We compare the maximum regrets of the shrinkage, CES, and pooling rules when $\kappa$ is correctly specified. We consider the following two cases:
\begin{eqnarray*}
    \text{Case (i):} & & \text{$\left( \sigma_1, \cdots, \sigma_K \right) = \left( 1, \ldots, 1 \right)$ and $\left( p_1, \cdots, p_K \right) = \left( \frac{1}{K}, \ldots, \frac{1}{K} \right)$} \\
    \text{Case (ii):} & & \text{$\left( \sigma_1, \cdots, \sigma_{K/2}, \sigma_{K/2+1}, \cdots, \sigma_K \right) = \left( 1, \ldots, 1, \sqrt{2}, \ldots, \sqrt{2} \right)$} \\
    & & \hspace{0.4in} \text{and $\left( p_1, \cdots, p_{K/2}, p_{K/2+1}, \cdots, p_K \right) = \left( \frac{4}{3K}, \ldots, \frac{4}{3K}, \frac{2}{3K}, \ldots, \frac{2}{3K} \right)$}
\end{eqnarray*}
Case (i) corresponds to the balanced design, where the standard errors and group sizes are the same for all groups. Case (ii) corresponds to the unbalanced design, where the group sizes of one half of the groups are larger than those of the other half and the standard errors for the larger groups are smaller than those for the remaining groups. Figure \ref{fig:max_regret_K10_100} shows the maximum regrets of the shrinkage, CES, and pooling rules for $\kappa = 0, \, 0.1, \, \ldots, \, 0.9, \, 1.0$ when $K=10$ and $100$.\footnote{Similar to Figure \ref{fig:true_K20}, we calculate the maximum regrets using the Monte Carlo approximation.} In Case (i), we also calculate the treatment rule based on the James-Stein-type estimator $\hat{\bm{\delta}}^{\mathrm{JS}}(\hat{\bm{\theta}}) \equiv \left( \hat{\delta}^{\mathrm{JS}}_1(\hat{\bm{\theta}}), \ldots , \hat{\delta}^{\mathrm{JS}}_K(\hat{\bm{\theta}}) \right)'$ defined as
\[
\hat{\delta}^{\mathrm{JS}}_k(\hat{\bm{\theta}}) \ \equiv \ 1 \left\{ \hat{w}_{\mathrm{JS}} \cdot \hat{\theta}_k + (1-\hat{w}_{\mathrm{JS}}) \cdot \mathrm{ave}(\hat{\bm{\theta}}) \geq 0 \right\}.
\]
Because this James-Stein-type estimator is not theoretically justified under heteroscedasticity, we do not report $\hat{\bm{\delta}}^{\mathrm{JS}}$ in Case (ii).

\begin{figure}[htbp]
    \begin{tabular}{cc}
      \begin{minipage}[t]{0.45\hsize}
        \centering
        \includegraphics[width=7cm]{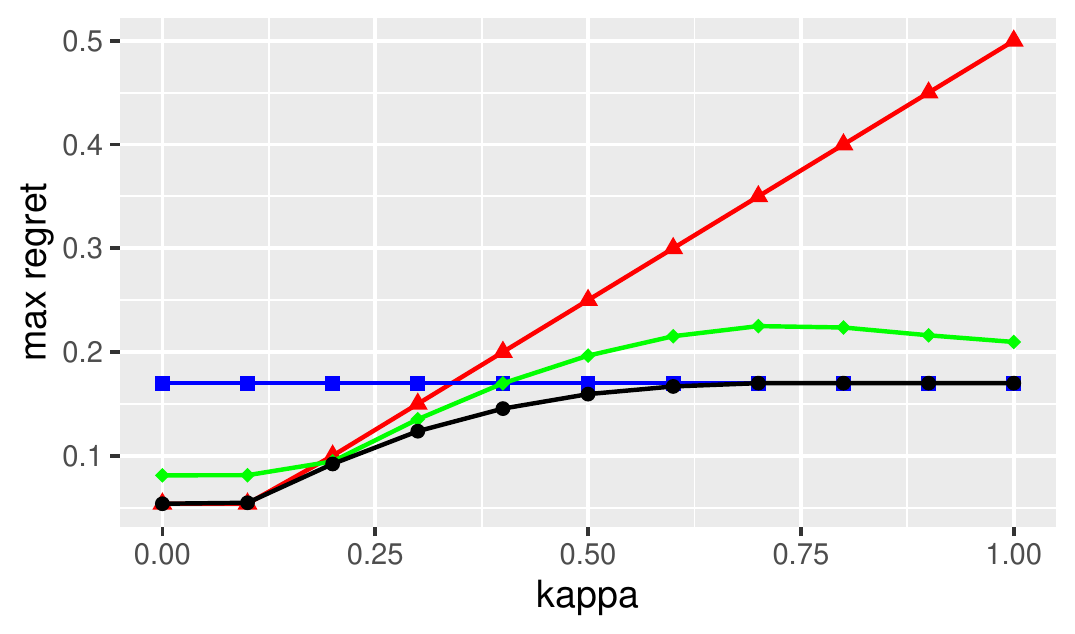}
        \subcaption{Case (i), $K=10$.}
        \label{fig:max_regret_K10_balance}
      \end{minipage} &
      \begin{minipage}[t]{0.45\hsize}
        \centering
        \includegraphics[width=7cm]{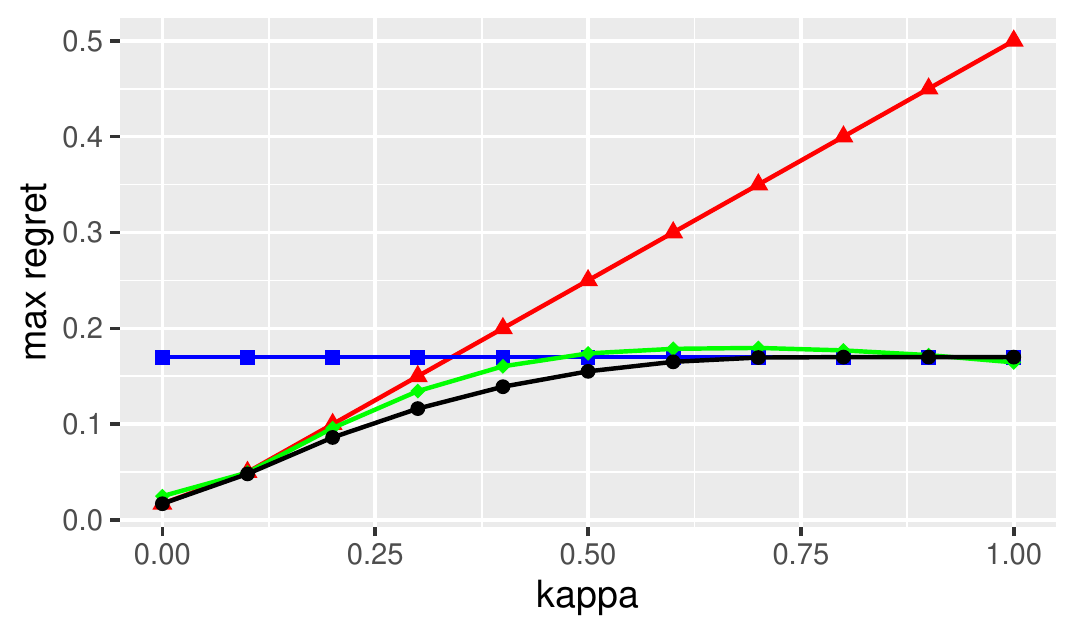}
        \subcaption{Case (i), $K=100$.}
        \label{fig:max_regret_K100_balance}
      \end{minipage} \\
   
      \begin{minipage}[t]{0.45\hsize}
        \centering
        \includegraphics[width=7cm]{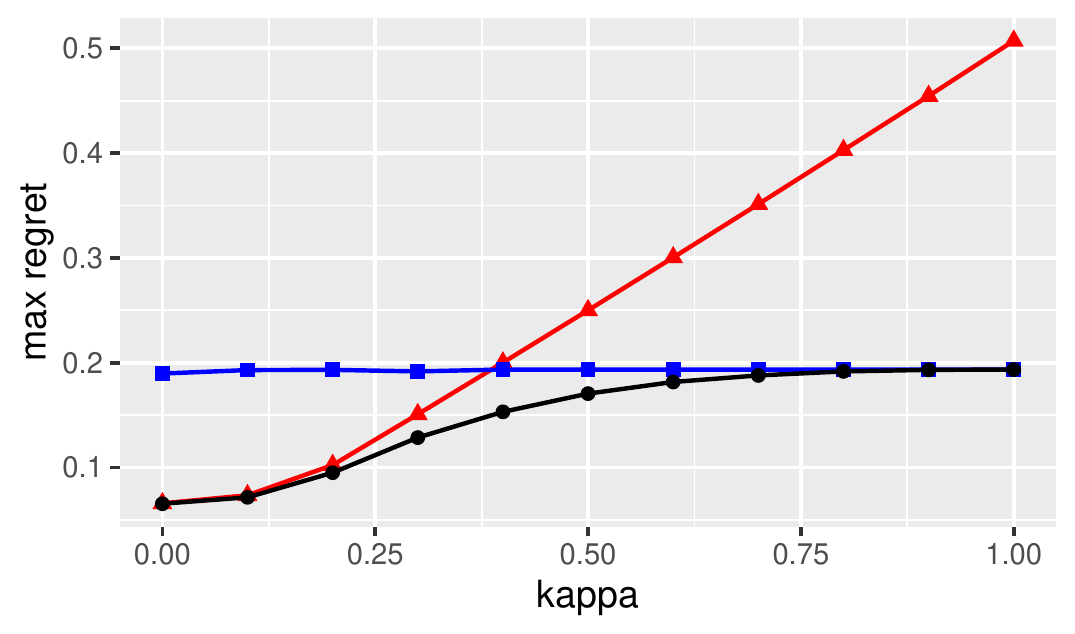}
        \subcaption{Case (ii), $K=10$.}
        \label{fig:max_regret_K10_unbalance}
      \end{minipage} &
      \begin{minipage}[t]{0.45\hsize}
        \centering
        \includegraphics[width=7cm]{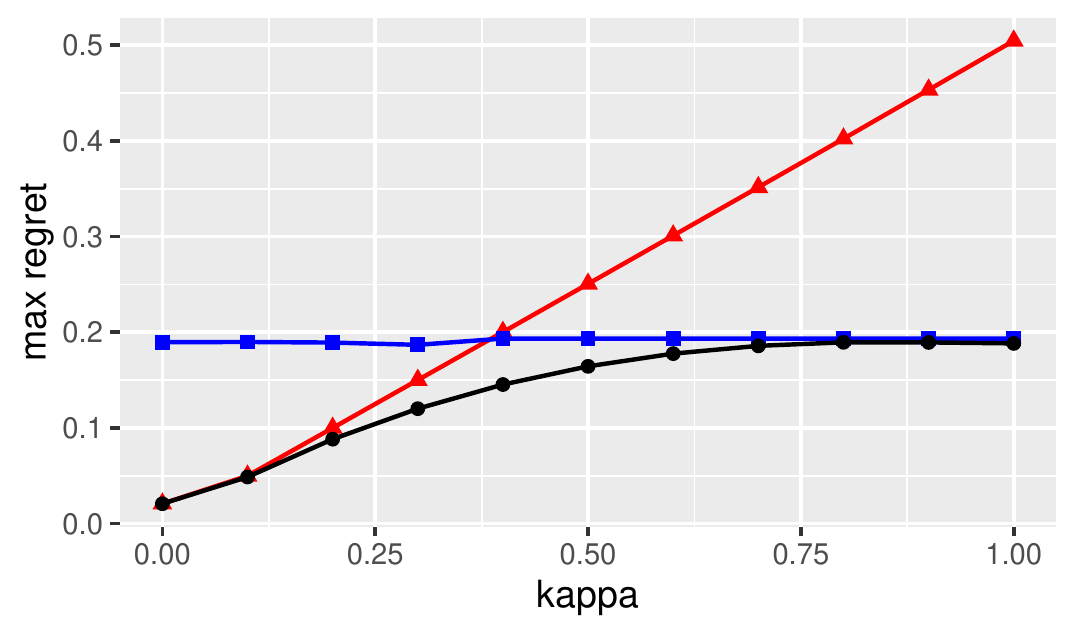}
        \subcaption{Case (ii), $K=100$.}
        \label{fig:max_regret_K100_unbalance}
      \end{minipage} 
    \end{tabular}
    \caption{The black circles, blue squares, and red triangles denote the maximum regrets of the shrinkage, CES, and pooling rules. The green diamonds denote the maximum regrets of the treatment rule based on the James-Stein-type estimator.}\label{fig:max_regret_K10_100}
\end{figure}

Figures \ref{fig:max_regret_K10_balance} and \ref{fig:max_regret_K100_balance} show that the maximum regret of the proposed shrinkage rule is less than or equal to that of the CES rule for all $\kappa$ in the balanced design. This result is consistent with Theorem \ref{thm:CES_correct}. In the balanced design, the proposed shrinkage rule is slightly inferior to the pooling rule when $K=10$ and $\kappa = 0.1$, but the maximum regrets of $\hat{\bm{\delta}}^{\bm{w}^{\ast}(\kappa)}$ are less than or equal to those of $\hat{\bm{\delta}}^{\mathrm{pool}}$ in all other settings. In addition, the maximum regrets of $\hat{\bm{\delta}}^{\bm{w}^{\ast}(\kappa)}$ are less than that of $\hat{\bm{\delta}}^{\mathrm{JS}}$ in almost all settings.\footnote{When calculating the maximum regret of $\hat{\bm{\delta}}^{\mathrm{JS}}$, we also calculate the regret $R(\bm{\theta},\hat{\bm{\delta}}^{\mathrm{JS}})$ using the Monte Carlo approximation. In Figure \ref{fig:max_regret_K100_balance}, the maximum regret of $\hat{\bm{\delta}}^{\mathrm{JS}}$ is less than that of $\hat{\bm{\delta}}^{\bm{w}^{\ast}(\kappa)}$ for $\kappa = 1$. However, this is probably due to approximation errors.} However, the difference between $\hat{\bm{\delta}}^{\bm{w}^{\ast}(\kappa)}$ and $\hat{\bm{\delta}}^{\mathrm{JS}}$ decreases as $K$ increases. In the unbalanced design, similar results are obtained for the shrinkage, CES, and pooling rules. In Appendix B, we provide additional results for $K=2$ and $4$.

Next, we calculate the maximum regret of the shrinkage rule when $\kappa$ is misspecified, that is, we calculate $\max_{\bm{\theta} \in \Theta(\kappa)} R(\bm{\theta},\hat{\bm{\delta}}^{\bm{w}^{\ast}(\kappa')})$. We consider the four misspecification situations, $\kappa' = 0.5 \kappa$, $\kappa' = 0.8 \kappa$, $\kappa' = 1.2 \kappa$, and $\kappa' = 2 \kappa$. Figure \ref{fig:max_regret_K10_mis} shows the maximum regrets of $\hat{\bm{\delta}}^{\bm{w}^{\ast}(\kappa)}$, $\hat{\bm{\delta}}^{\mathrm{CES}}$, $\hat{\bm{\delta}}^{\mathrm{pool}}$, and $\hat{\bm{\delta}}^{\mathrm{JS}}$ in the balanced design when $K=10$ and $\kappa$ is misspecified. Because $\hat{\bm{\delta}}^{\mathrm{CES}}$, $\hat{\bm{\delta}}^{\mathrm{pool}}$, and $\hat{\bm{\delta}}^{\mathrm{JS}}$ is not affected by misspecification, the maximum regrets of $\hat{\bm{\delta}}^{\mathrm{CES}}$, $\hat{\bm{\delta}}^{\mathrm{pool}}$, and $\hat{\bm{\delta}}^{\mathrm{JS}}$ are the same as those in Figure \ref{fig:max_regret_K10_balance}.

\begin{figure}[htbp]
    \begin{tabular}{cc}
      \begin{minipage}[t]{0.45\hsize}
        \centering
        \includegraphics[width=7cm]{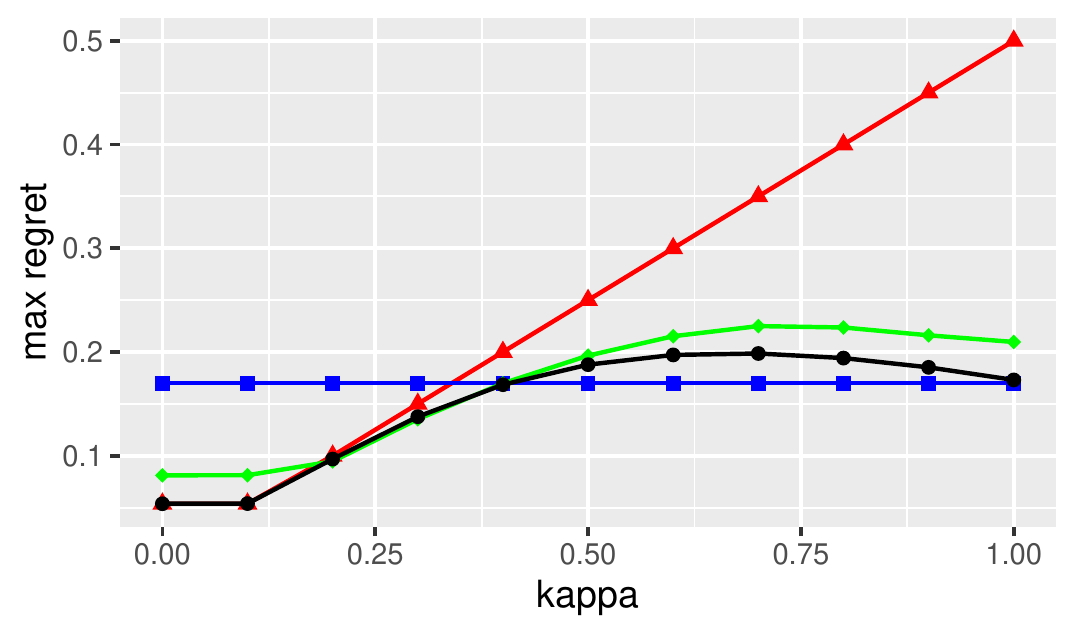}
        \subcaption{$\kappa' = 0.5 \kappa$.}
        \label{fig:max_regret_K10_mis_a05}
      \end{minipage} &
      \begin{minipage}[t]{0.45\hsize}
        \centering
        \includegraphics[width=7cm]{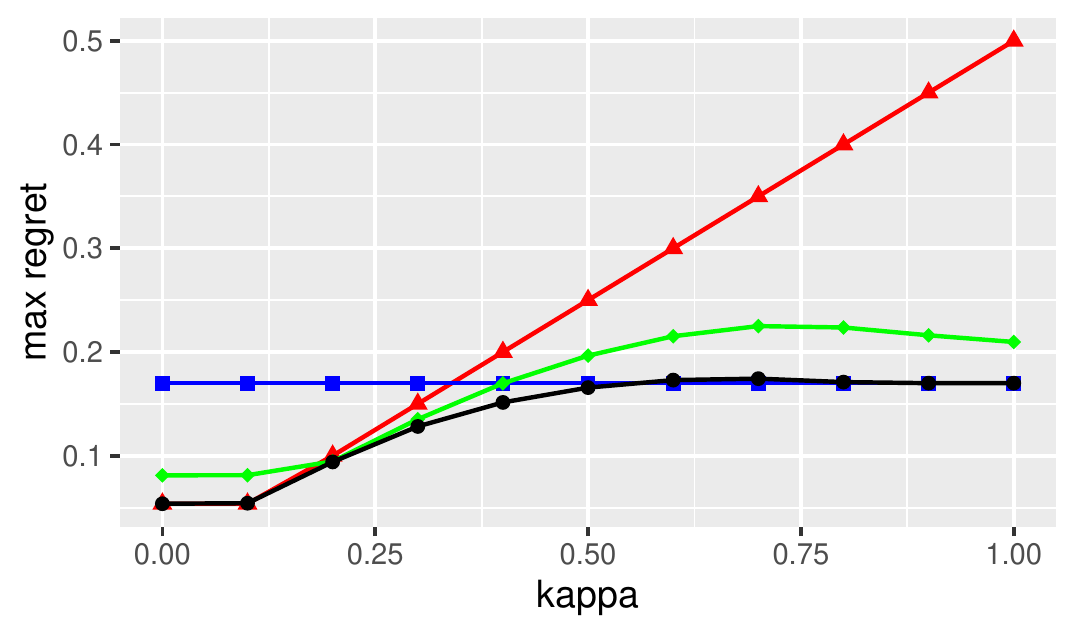}
        \subcaption{$\kappa' = 0.8 \kappa$.}
        \label{fig:max_regret_K10_mis_a08}
      \end{minipage} \\
   
      \begin{minipage}[t]{0.45\hsize}
        \centering
        \includegraphics[width=7cm]{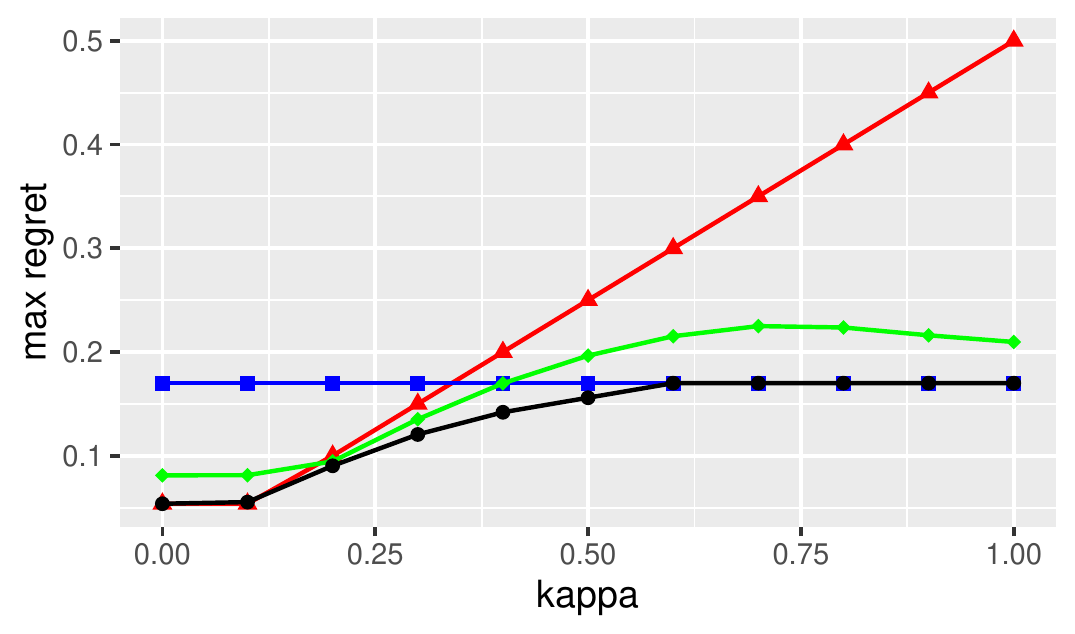}
        \subcaption{$\kappa' = 1.2 \kappa$.}
        \label{fig:max_regret_K10_mis_a12}
      \end{minipage} &
      \begin{minipage}[t]{0.45\hsize}
        \centering
        \includegraphics[width=7cm]{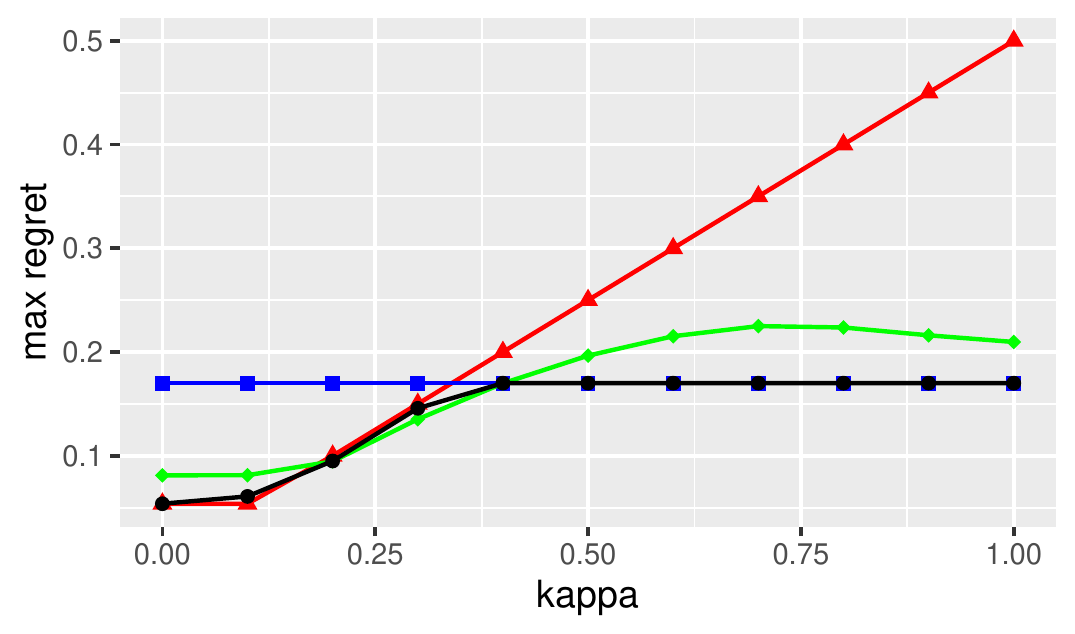}
        \subcaption{$\kappa' = 2 \kappa$.}
        \label{fig:max_regret_K10_mis_a20}
      \end{minipage} 
    \end{tabular}
    \caption{The black circles denotes $\max_{\bm{\theta} \in \Theta(\kappa)} R(\bm{\theta},\hat{\bm{\delta}}^{\bm{w}^{\ast}(\kappa')})$ when $K=10$, $\left( \sigma_1, \cdots, \sigma_K \right) = \left( 1, \ldots, 1 \right)$, and $\left( p_1, \cdots, p_K \right) = \left( \frac{1}{K}, \ldots, \frac{1}{K} \right)$. The blue squares, red triangles, and green diamonds are the same as those in Figure \ref{fig:max_regret_K10_balance}.}\label{fig:max_regret_K10_mis}
\end{figure}

When $\kappa \leq \kappa'$, as discussed in Section \ref{SubSec:misspecified}, the maximum regrets of $\hat{\bm{\delta}}^{\bm{w}^{\ast}(\kappa')}$ are less than or equal to those of $\hat{\bm{\delta}}^{\mathrm{CES}}$ in all settings. However, the range of $\kappa$ in which $\hat{\bm{\delta}}^{\bm{w}^{\ast}(\kappa')}$ has a smaller maximum regret than $\hat{\bm{\delta}}^{\mathrm{CES}}$ is narrower than that of Figure\ref{fig:max_regret_K10_balance} due to the misspecification of $\kappa$. In addition, the proposed shrinkage rule has a smaller maximum regret than the CES and pooling rules for some $\kappa$ when the degree of misspecification is not so large.

When $\kappa \geq \kappa'$, as discussed in Section \ref{SubSec:misspecified}, the maximum regrets of $\hat{\bm{\delta}}^{\bm{w}^{\ast}(\kappa')}$ are less than those of $\hat{\bm{\delta}}^{\mathrm{pool}}$ when $\kappa$ is sufficiently large. In contrast to the case where $\kappa \leq \kappa'$, the maximum regrets of $\hat{\bm{\delta}}^{\bm{w}^{\ast}(\kappa')}$ are larger than those of $\hat{\bm{\delta}}^{\mathrm{CES}}$ for some $\kappa$. However, the amount by which $\hat{\bm{\delta}}^{\bm{w}^{\ast}(\kappa')}$ exceeds $\hat{\bm{\delta}}^{\mathrm{CES}}$ decreases as the degree of misspecification decreases. In summary, we obtain results similar to the correctly specified case when the degree of misspecification is small, whereas $\hat{\bm{\delta}}^{\bm{w}^{\ast}(\kappa')}$ may be inferior to other rules when $\kappa'$ is significantly smaller than $\kappa$.

\if0
We consider simple settings where $K=2$, $(\sigma_1,\sigma_2) = (1,1), (0.75,1.25)$, and $(p_1,p_2)=(0.5,0.5), (0.75, 0.25)$ to compare the maximum regrets of the shrinkage, CES, and pooling rules. Figures \ref{fig:regret_1}-\ref{fig:regret_unb_2} show the maximum regrets of the shrinkage, CES, and pooling rules when $\kappa$ is correctly specified. If $p_1 > p_2$, then the number of units with the covariate $x_1$ is expected to be larger than the number of units with $x_2$. Hence, the standard deviation of $\hat{\theta}_1$ is expected to be smaller than that of $\hat{\theta}_2$.

\begin{figure}[htbp]
    \begin{tabular}{cc}
      \begin{minipage}[t]{0.45\hsize}
        \centering
        \includegraphics[width=7cm]{figure/regret_1.pdf}
        \subcaption{$(\sigma_1, \sigma_2) = (1,1)$, $(p_1,p_2) = (0.5,0.5)$.}
        \label{fig:regret_1}
      \end{minipage} &
      \begin{minipage}[t]{0.45\hsize}
        \centering
        \includegraphics[width=7cm]{figure/regret_unb_1.pdf}
        \subcaption{$(\sigma_1, \sigma_2) = (1,1)$, $(p_1,p_2) = (0.75,0.25)$.}
        \label{fig:regret_unb_1}
      \end{minipage} \\
   
      \begin{minipage}[t]{0.45\hsize}
        \centering
        \includegraphics[width=7cm]{figure/regret_2.pdf}
        \subcaption{$(\sigma_1, \sigma_2) = (0.75,1.25)$, $(p_1,p_2) = (0.5,0.5)$.}
        \label{fig:regret_2}
      \end{minipage} &
      \begin{minipage}[t]{0.45\hsize}
        \centering
        \includegraphics[width=7cm]{figure/regret_unb_2.pdf}
        \subcaption{$(\sigma_1, \sigma_2) = (0.75,1.25)$, $(p_1,p_2) = (0.75,0.25)$.}
        \label{fig:regret_unb_2}
      \end{minipage} 
    \end{tabular}
    \caption{The solid, dotted, and dashed lines denote the maximum regrets of the shrinkage, CES, and pooling rules when $\kappa$ is correctly specified.}
\end{figure}

As expected from Theorem \ref{thm:CES_correct}, the maximum regret of the shrinkage rule is always less than or equal to that of the CES rule in all settings. Additionally, the maximum regret of the shrinkage rule is equal to that of the CES rule when $\kappa$ is large. This is because the shrinkage rule becomes a CES rule when $\kappa$ is sufficiently large. Although the pooling rule is better than the shrinkage rule for some $\kappa$, as expected from Theorem \ref{thm:pool_correct}, the shrinkage rule is not worse than the pooling rule when $\kappa$ is small. Additionally, as $\kappa$ increases, the maximum regret of the pooling rule increases.

Next, we calculate the maximum regret of the shrinkage rule when $\kappa$ is misspecified, that is, we calculate $\max_{\bm{\theta} \in \Theta(\kappa)} R(\bm{\theta},\hat{\bm{\delta}}^{\bm{w}^{\ast}(\kappa')})$. We consider two cases: (1) $\kappa' = 1.2 \kappa$ and (2) $\kappa' = 0.8 \kappa$. In case (1), the researcher's choice of the space of CATEs is larger than the true space of CATEs. In case (2), the researcher's choice of the space of CATEs is smaller. Figures \ref{fig:regret_1_high}-\ref{fig:regret_unb_2_high} show the maximum regrets of the shrinkage, CES, and pooling rules in case (1) and Figures \ref{fig:regret_1_low}-\ref{fig:regret_unb_2_low} show the maximum regrets in case (2).

\begin{figure}[htbp]
    \begin{tabular}{cc}
      \begin{minipage}[t]{0.45\hsize}
        \centering
        \includegraphics[width=7cm]{figure/regret_1_high.pdf}
        \subcaption{$(\sigma_1, \sigma_2) = (1,1)$, $(p_1,p_2) = (0.5,0.5)$.}
        \label{fig:regret_1_high}
      \end{minipage} &
      \begin{minipage}[t]{0.45\hsize}
        \centering
        \includegraphics[width=7cm]{figure/regret_unb_1_high.pdf}
        \subcaption{$(\sigma_1, \sigma_2) = (1,1)$, $(p_1,p_2) = (0.75,0.25)$.}
        \label{fig:regret_unb_1_high}
      \end{minipage} \\
   
      \begin{minipage}[t]{0.45\hsize}
        \centering
        \includegraphics[width=7cm]{figure/regret_2_high.pdf}
        \subcaption{$(\sigma_1, \sigma_2) = (0.75,1.25)$, $(p_1,p_2) = (0.5,0.5)$.}
        \label{fig:regret_2_high}
      \end{minipage} &
      \begin{minipage}[t]{0.45\hsize}
        \centering
        \includegraphics[width=7cm]{figure/regret_unb_2_high.pdf}
        \subcaption{$(\sigma_1, \sigma_2) = (0.75,1.25)$, $(p_1,p_2) = (0.75,0.25)$.}
        \label{fig:regret_unb_2_high}
      \end{minipage} 
    \end{tabular}
    \caption{The solid, dotted, and dashed lines denote the maximum regrets of the shrinkage, CES, and pooling rules when $\kappa' = 1.2 \kappa$.}
\end{figure}

\begin{figure}[htbp]
    \begin{tabular}{cc}
      \begin{minipage}[t]{0.45\hsize}
        \centering
        \includegraphics[width=7cm]{figure/regret_1_low.pdf}
        \subcaption{$(\sigma_1, \sigma_2) = (1,1)$, $(p_1,p_2) = (0.5,0.5)$.}
        \label{fig:regret_1_low}
      \end{minipage} &
      \begin{minipage}[t]{0.45\hsize}
        \centering
        \includegraphics[width=7cm]{figure/regret_unb_1_low.pdf}
        \subcaption{$(\sigma_1, \sigma_2) = (1,1)$, $(p_1,p_2) = (0.75,0.25)$.}
        \label{fig:regret_unb_1_low}
      \end{minipage} \\
   
      \begin{minipage}[t]{0.45\hsize}
        \centering
        \includegraphics[width=7cm]{figure/regret_2_low.pdf}
        \subcaption{$(\sigma_1, \sigma_2) = (0.75,1.25)$, $(p_1,p_2) = (0.5,0.5)$.}
        \label{fig:regret_2_low}
      \end{minipage} &
      \begin{minipage}[t]{0.45\hsize}
        \centering
        \includegraphics[width=7cm]{figure/regret_unb_2_low.pdf}
        \subcaption{$(\sigma_1, \sigma_2) = (0.75,1.25)$, $(p_1,p_2) = (0.75,0.25)$.}
        \label{fig:regret_unb_2_low}
      \end{minipage} 
    \end{tabular}
    \caption{The solid, dotted, and dashed lines denote the maximum regrets of the shrinkage, CES, and pooling rules when $\kappa' = 0.8 \kappa$.}
\end{figure}

Even when $\kappa$ is misspecified, we obtain results similar to those shown in Figures \ref{fig:regret_1}--\ref{fig:regret_unb_2}. These results imply that our shrinkage rule is robust to the misspecification of $\kappa$. In all settings, the maximum regret of the shrinkage rule is always less than or equal to that of the CES rule. Theorem \ref{thm:CES_mis} implies that the shrinkage rule is superior to the CES rule when $\kappa' \geq \kappa$. However, Figures \ref{fig:regret_1_low}-\ref{fig:regret_unb_2_low} show that similar results are obtained in these settings, even when $\kappa' \leq \kappa$. In these numerical examples, the maximum regret of the shrinkage rule decreases when $\kappa'$ is smaller than $\kappa$. This implies that the proposed shrinkage factors might be too large in some settings, as the choice of shrinkage factor minimizes the upper bound of the maximum regret.
\fi

\section{Empirical application}\label{Sec:real}
We illustrate the proposed method by applying it to experimental data from the National Job Training Partnership Act (JTPA) Study and using the dataset in \cite{abadie2002instrumental}. The JTPA study is a randomized controlled trial whose purpose was to measure the impact of a training program on earnings. It also collected background information on the applicants prior to the random assignment and obtained data on their earnings for 30 months following the assignment.

We construct 24 subgroups using the following characteristics: race (Black, Hispanic, or other), sex (male or female), marital status (married or unmarried), and working status prior to random assignment (worked for at least 12 weeks in the 12 months preceding random assignment or not). Table \ref{table:group_number} shows the corresponding group numbers. For each subgroup, we calculate the treatment effect of the training program on earnings for 30 months following the assignment and its standard error. Following \cite{kitagawa2018should}, we set the treatment cost as \$774. Hence, we use the CATE estimate minus \$774 as $\hat{\theta}_k$. Figure \ref{fig:CI} shows that the benefits of the training program vary across subgroups but are statistically insignificant for all subgroups.

\begin{table}[h]
\begin{center}
\begin{tabular}{r|cccccc} \hline
    race & \multicolumn{2}{c}{Black} & \multicolumn{2}{c}{Hispanic} & \multicolumn{2}{c}{other} \\ \hline
    working status & $> 12$ & $\leq 12$ & $> 12$ & $\leq 12$ & $> 12$ & $\leq 12$ \\ \hline
   female, unmarried & 1 & 4 & 2 & 5 & 3 & 6 \\
   female, married & 7 & 10 & 8 & 11 & 9 & 12 \\
   male, unmarried & 13 & 16 & 14 & 17 & 15 & 18 \\
   male, married & 19 & 22 & 20 & 23 & 21 & 24 \\ \hline
\end{tabular}
\caption{This table shows the corresponding group numbers. For example, group 10 corresponds to the group of married Black women who worked at least 12 weeks.} \label{table:group_number}
\end{center}
\end{table}

\begin{figure}[h] 
\centering
\includegraphics[width=14cm]{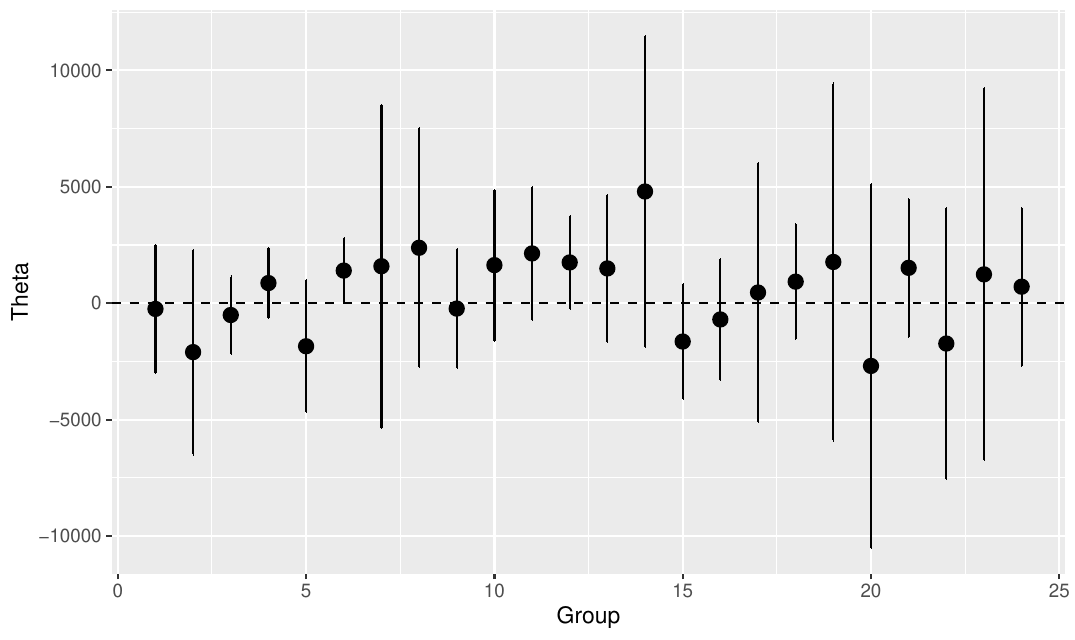}
\caption{The black dots denote the CATE estimates minus \$774 and the bars denote the $95\%$ confidence intervals.} \label{fig:CI}
\end{figure}

As the average of $\hat{\bm{\theta}}$ is approximately \$541, the pooling rule determines to treat the individuals in all subgroups. However, Figure \ref{fig:CI} indicates that the decisions of the CES rule vary across the subgroups. Hence, the shrinkage rule may differ from the CES rule depending on the values of the shrinkage factors.

In this empirical application, we consider two James-Stein-type rules. The first is a treatment rule based on the following James–Stein-type shrinkage estimator:
\begin{equation}
    \hat{\theta}_k^{\mathrm{JS}} \ \equiv \ \hat{w}_{k} \cdot \hat{\theta}_k + (1-\hat{w}_{k}) \cdot \mathrm{ave} (\hat{\bm{\theta}}), \label{JS-positive-estimator}
\end{equation}
where
\[
\hat{w}_k \ \equiv \ \max \left\{ 1 - \frac{(K-3) \sigma_k^2}{\sum_{j=1}^K \left( \hat{\theta}_k - \mathrm{ave}(\hat{\bm{\theta}}) \right)^2} , 0\right\}.
\]
In contrast to the James–Stein-type shrinkage estimator introduced in Remark \ref{rem:shrinkage_est}, this estimator is specifically designed to ensure that the shrinkage factor remains non-negative. Using (\ref{JS-positive-estimator}), we define the following James-Stein-type rule:
\begin{equation}
\hat{\bm{\delta}}^{\mathrm{JS}}(\hat{\bm{\theta}}) \ \equiv \ \left( \hat{\delta}^{\mathrm{JS}}_1(\hat{\bm{\theta}}), \ldots, \hat{\delta}^{\mathrm{JS}}_K(\hat{\bm{\theta}}) \right)', \ \ \text{where $\hat{\delta}^{\mathrm{JS}}_k(\hat{\bm{\theta}}) \ \equiv 1\left\{ \hat{\theta}_k^{\mathrm{JS}} \geq 0 \right\}$.} \label{JS-positive-rule} \nonumber
\end{equation}
Because the shrinkage estimator (\ref{JS-positive-estimator}) is not theoretically justified under heteroscedasticity, we consider an alternative James-Stein-type rule that accommodates heteroscedasticity. \cite{xie2012sure} propose the following James–Stein-type shrinkage estimator in the heteroscedastic normal means model:
\begin{equation}
    \hat{\theta}^{\mathrm{XKB}}_k \ \equiv \ \tilde{w}_{k} \cdot \hat{\theta}_k + (1-\tilde{w}_{k}) \cdot \mathrm{ave} (\hat{\bm{\theta}}), \label{XKB-estimator}
\end{equation}
where $\tilde{w}_{k}  \ \equiv \ \frac{\hat{\lambda}}{\sigma_k^2 + \hat{\lambda}}$ and
\[
\hat{\lambda} \ \equiv \ \mathrm{arg} \min_{\lambda \geq 0} \left[ \frac{1}{K} \sum_{k=1}^K \left( \frac{\sigma_k^2}{\sigma_k^2 + \lambda} \right)^2 \left\{\hat{\theta}_k - \mathrm{ave}(\hat{\bm{\theta}})\right\}^2 + \frac{1}{K} \sum_{k=1}^K\left( \frac{\sigma_k^2}{\sigma_k^2 + \lambda} \right) \left( \lambda - \sigma_k^2 + \frac{2}{K} \sigma_k^2 \right) \right].
\]
\cite{xie2012sure} establish the asymptotic optimality property for this shrinkage estimator when $K \to \infty$. Using this shrinkage estimator (\ref{XKB-estimator}), we define the following James-Stein-type rule:
\begin{equation}
\hat{\bm{\delta}}^{\mathrm{XKB}}(\hat{\bm{\theta}}) \ \equiv \ \left( \hat{\delta}^{\mathrm{XKB}}_1(\hat{\bm{\theta}}), \ldots, \hat{\delta}^{\mathrm{XKB}}_K(\hat{\bm{\theta}}) \right)', \ \ \text{where $\hat{\delta}^{\mathrm{XKB}}_k(\hat{\bm{\theta}}) \ \equiv 1\left\{ \hat{\theta}^{\mathrm{XKB}}_k \geq 0 \right\}$.} \label{XKB rule} \nonumber
\end{equation}

We calculate the shrinkage factors $\bm{w}^{\ast}(\kappa)$ for $\kappa = 500, \, 1{,}000$. Figure \ref{fig:se} shows the relationship between the shrinkage factor and standard error. As expected, the shrinkage factor decreases as the standard error increases. For $\kappa = 1{,}000$, the shrinkage rule becomes the CES rule when $\sigma_k$ is less than around $1{,}300$. Proposition \ref{prop:K_asymptotics} indicates that $w^{\ast}_k (\kappa)$ approaches $1$ asymptotically when $\kappa / \sigma_k \leq t^{\ast}(0) \simeq 0.75$, which is equivalent to $\sigma_k < \kappa / t^{\ast}(0) \simeq 1.33 \kappa$. Hence, this result indicates that the approximation in Proposition \ref{prop:K_asymptotics} is useful. 

\begin{figure}[htbp]
    \begin{tabular}{cc}
      \begin{minipage}[t]{0.45\hsize}
        \centering
        \includegraphics[width=7.5cm]{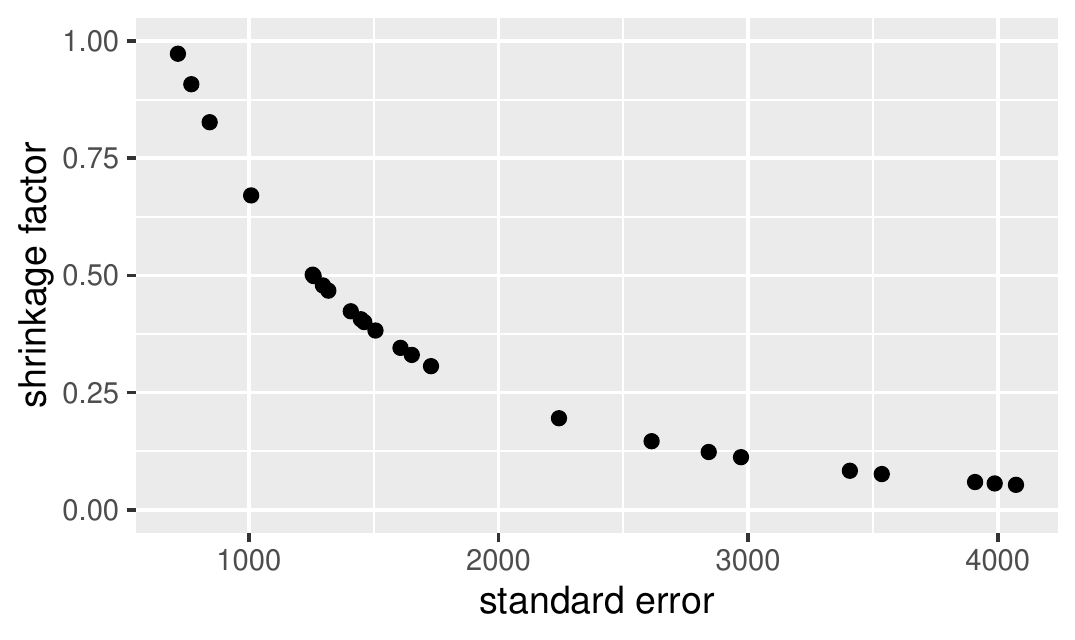}
        \subcaption{$\kappa = 500$.}
        \label{fig:se500}
      \end{minipage} &
      \begin{minipage}[t]{0.45\hsize}
        \centering
        \includegraphics[width=7.5cm]{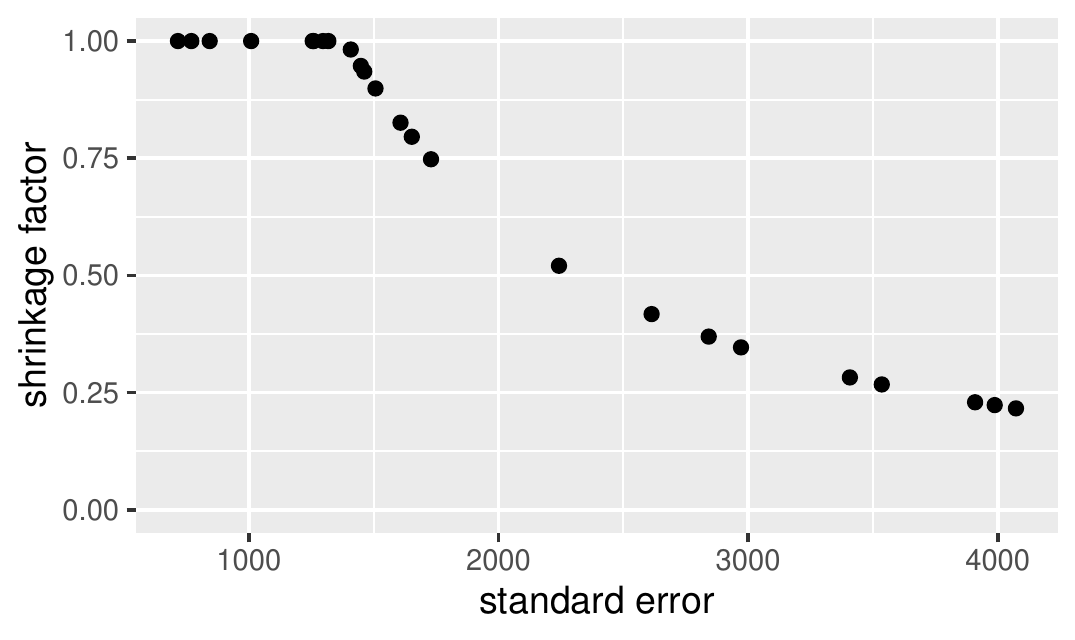}
        \subcaption{$\kappa = 1{,}000$.}
        \label{fig:se1000}
      \end{minipage}
    \end{tabular}
    \caption{The shrinkage factor when $\kappa = 500, 1{,}000$. The horizontal and vertical axes represent the values of $\sigma_k$ and $w_k^{\ast}(\kappa)$.} \label{fig:se}
\end{figure}

Figures \ref{fig:res500} and \ref{fig:res1000} illustrate the shrinkage estimates $w_k^{\ast}(\kappa) \cdot \hat{\theta}_k + (1-w_k^{\ast}(\kappa)) \cdot \text{ave}(\hat{\bm{\theta}})$ for $\kappa = 500, \, 1{,}000$. As the red line denotes the average of $\hat{\bm{\theta}}$, the shrinkage estimate (white circle) is closer to the red line than $\hat{\theta}_k$ (black circle). As the average of $\hat{\bm{\theta}}$ is positive, the decision of the shrinkage rule is the same as that of the CES rule when $\hat{\theta}_k$ is positive. Whereas, the decision regarding the shrinkage rule can differ from that  regarding the CES rule when $\hat{\theta}_k$ is negative. Figure \ref{fig:res500} shows that the decision of the shrinkage rule is not identical to that of the CES rule in certain subgroups. However, the shrinkage rule makes the same decisions as the CES rule when $\kappa = 1{,}000$.

\begin{figure}[h] 
\centering
\includegraphics[width=14cm]{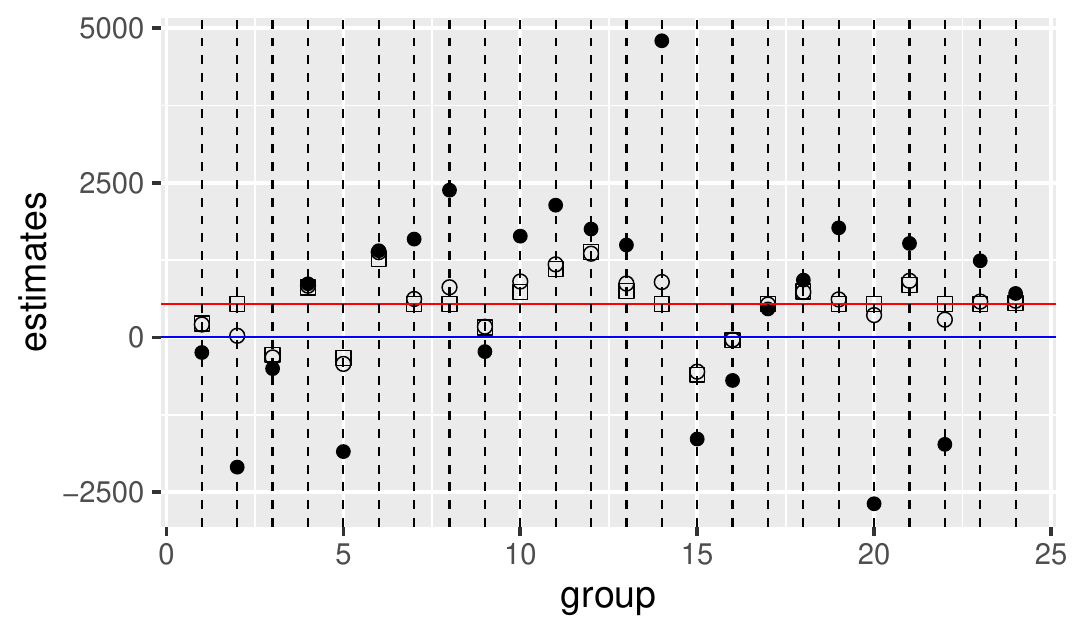}
\caption{The shrinkage rule when $\kappa = 500$. The black circles denote $\hat{\theta}_k$, the white circles denote the shrinkage estimates $w_k^{\ast}(\kappa) \cdot \hat{\theta}_k + (1-w_k^{\ast}(\kappa)) \cdot \text{ave}(\hat{\bm{\theta}})$, the white squares denote the James–Stein-type estimates $\hat{\theta}_k^{\mathrm{JS}}$, and the blue and red lines denote $0$ and the average of $\hat{\bm{\theta}}$, respectively.} \label{fig:res500}
\end{figure}

\begin{figure}[h] 
\centering
\includegraphics[width=14cm]{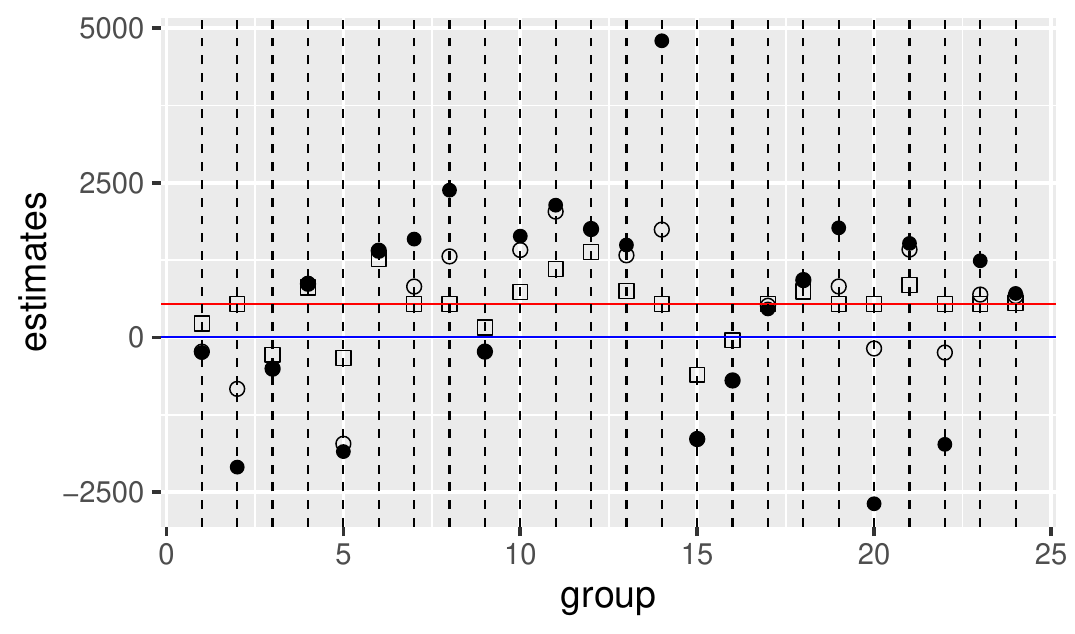}
\caption{The shrinkage rule when $\kappa = 1{,}000$. The black circles denote $\hat{\theta}_k$, the white circles denote the shrinkage estimates $w_k^{\ast}(\kappa) \cdot \hat{\theta}_k + (1-w_k^{\ast}(\kappa)) \cdot \text{ave}(\hat{\bm{\theta}})$, the white squares denote the James–Stein-type estimates $\hat{\theta}_k^{\mathrm{JS}}$, and the blue and red lines denote $0$ and the average of $\hat{\bm{\theta}}$, respectively.} \label{fig:res1000}
\end{figure}

In addition, we also compute the James-Stein-type rules $\hat{\bm{\delta}}^{\mathrm{JS}}$ and $\hat{\bm{\delta}}^{\mathrm{XKB}}$. Figures \ref{fig:res500} and \ref{fig:res1000} illustrate the James-Stein-type estimates $\hat{\theta}^{\mathrm{JS}}_k$. These estimates are close to our shrinkage estimates $w_k^{\ast}(\kappa) \cdot \hat{\theta}_k + (1-w_k^{\ast}(\kappa)) \cdot \text{ave}(\hat{\bm{\theta}})$ when $\kappa = 500$. Moreover, $\hat{\bm{\delta}}^{\mathrm{JS}}$ makes the same decisions as the shrinkage rule $\hat{\bm{\delta}}^{\bm{w}^{\ast}(\kappa)}$ for $\kappa = 500$. In contrast, all shrinkage factors of (\ref{XKB-estimator}) are approximately zero. This implies that $\hat{\bm{\delta}}^{\mathrm{XKB}}$ makes the same decision as the pooling rule. The pooling rule is nearly optimal when $\kappa$ is small and, as we will show later, the hypothesis $H_0:\bm{\theta} \in \Theta(0)$ is not rejected. Hence, although this result may appear extreme, it is not necessarily unreasonable. We note that $\hat{\theta}^{\mathrm{XKB}}_k$ is asymptotically justified but may be unreliable when $K$ is small.

This analysis focuses on the treatment choice problem when $\kappa = 500, \, 1{,}000$. As $w^{\ast}_k(\kappa)$ is increasing with respect to $\kappa$, the shrinkage rule makes the same decisions as the CES rule when $\kappa$ is greater than $1{,}000$. Hence, the decision regarding the shrinkage rule differs from that regarding the CES rule only when $\kappa$ is small. However, the choice of $\kappa = 500$ is not unrealistic. As discussed in Remark \ref{rem:choice_kappa}, we can assess whether a given value of $\kappa$ is too small. Letting $Z_k \sim N(0,\sigma_k^2)$, the $95 \%$ quantile of $\max_{1\leq k \leq K} |Z_k - \overline{Z}|$ is about $9{,}500$. Whereas, the realized value of $\max_{1 \leq k \leq K} \left| \hat{\theta}_k - \mathrm{ave}(\hat{\bm{\theta}}) \right|$ is $4{,}256$. This implies that the hypothesis $H_0:\bm{\theta} \in \Theta(0)$ is not rejected and any value of $\kappa$ is consistent with the actual data. Therefore, this empirical application shows that the decision of the shrinkage rule can differ from that of the CES rule even if $\kappa$ is a realistic value.

Finally, we compare the proposed shrinkage rule with the empirical welfare maximization method. As discussed in Remark \ref{rem:EWM}, the EWM rule is equivalent to the CES rule if there are no restrictions on the class of candidate treatment rules. In this analysis, we consider the constraints that decisions are not changed based on race and that if men with the same characteristics receive treatment, then women also receive treatment. Under the above constraints, the EWM rule chooses to give treatment to all groups except groups 1, 2, 3, 13, 14, and 15. The EWM rule determines not to treat unmarried Black or Hispanic individuals who worked for more than 12 weeks. On the contrary, the proposed shrinkage rule determines to treat such individuals when $\kappa=500$.

\section{Conclusion}\label{Sec:conclusion}
This study examined the problem of determining whether to treat individuals based on observed covariates. Particularly, we proposed a computationally tractable shrinkage rule that selects the shrinkage factor by minimizing the upper bound of the maximum regret. We also provided upper bounds of the ratio of the maximum regret of the shrinkage rule to those of the CES and pooling rules when the space of CATEs was correctly specified or misspecified. The theoretical and numerical results show that our shrinkage rule performs better than the CES and pooling rules in many cases when the space of CATEs is correctly specified. In addition, the results were robust to the misspecifications of the space of CATEs. Particularly, we found that the maximum regret of the shrinkage rule can be strictly smaller than that of the CES and pooling rules when the dispersion of the CATEs is moderate. Finally, we applied our method to experimental data from the JTPA study and showed that the decision of the shrinkage rule can differ from that of the CES rule even if $\kappa$ is a realistic value.

\clearpage

\renewcommand{\theequation}{A.\arabic{equation}}
\setcounter{equation}{0}
\section*{Appendix A. Proofs and lemmas}\label{Sec:appendix_proof}

\begin{Lemma}\label{lem:eta_derivative}
For any $a \geq 0$, we have
$$
\eta'(a) \ \equiv \ \frac{d}{da} \eta(a) \ = \ \Phi(-t^*(a) + a).
$$
In addition, $-t^*(a) + a$ is strictly increasing in $a$.
\end{Lemma}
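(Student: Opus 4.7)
The plan is to reduce the first claim to a routine application of the envelope theorem combined with the first-order condition for $t^{\ast}(a)$. Setting $f(t,a) \equiv t\,\Phi(-t+a)$, I would first verify that for each $a \geq 0$ the maximizer $t^{\ast}(a)$ lies strictly inside $(0,\infty)$: the derivative $\partial_t f$ equals $\Phi(a) > 0$ at $t = 0$ and $f(t,a) \to 0$ as $t \to \infty$, while uniqueness follows because the first-order condition $t = \Phi(a-t)/\phi(a-t)$ has a left-hand side strictly increasing in $t$ and a right-hand side strictly decreasing in $t$ (Mill's ratio argument). Differentiability of $t^{\ast}$ in $a$ then follows from the implicit function theorem.

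Next, I would invoke the envelope theorem to obtain $\eta'(a) = \partial_a f(t,a)\big|_{t=t^{\ast}(a)} = t^{\ast}(a)\,\phi(-t^{\ast}(a)+a)$. The first-order condition at $t = t^{\ast}(a)$ reads $\Phi(-t^{\ast}(a)+a) - t^{\ast}(a)\,\phi(-t^{\ast}(a)+a) = 0$, i.e., $t^{\ast}(a)\,\phi(-t^{\ast}(a)+a) = \Phi(-t^{\ast}(a)+a)$. Substituting this identity into the envelope expression immediately yields $\eta'(a) = \Phi(-t^{\ast}(a)+a)$, which is the first claim.

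For the monotonicity of $u(a) \equiv -t^{\ast}(a) + a$, my plan is to rewrite the first-order condition as an implicit equation $a = g(u(a))$ with $g(u) \equiv u + \Phi(u)/\phi(u)$, and then show that $g$ is strictly increasing on $\mathbb{R}$. Using $\phi'(u) = -u\,\phi(u)$, a direct differentiation of $\Phi(u)/\phi(u)$ gives $g'(u) = 2 + u\,\Phi(u)/\phi(u)$, which is clearly at least $2$ whenever $u \geq 0$.

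The main obstacle is verifying $g'(u) > 0$ on the negative half-line. Here I would apply the standard Mill's ratio bound $|u|\bigl(1 - \Phi(|u|)\bigr) \leq \phi(u)$ for $u < 0$, which rearranges to $|u\,\Phi(u)/\phi(u)| \leq 1$ and hence $g'(u) \geq 1 > 0$. This bound is the one nontrivial analytic ingredient; without it the sign of $g'$ on $u < 0$ is not immediately evident. Once $g'(u) > 0$ everywhere is established, $g$ is a strictly increasing bijection of $\mathbb{R}$ onto itself, and strict monotonicity of $u(a) = g^{-1}(a)$ in $a$ follows at once.
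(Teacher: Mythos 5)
Your proof is correct. Note that the paper itself gives no argument for this lemma: it simply defers to the proof of Lemma 2 in Ishihara et al.\ (2021), so there is no in-paper derivation to compare against. Your route --- interiority and uniqueness of $t^{\ast}(a)$ via the first-order condition $t=\Phi(a-t)/\phi(a-t)$, the envelope theorem giving $\eta'(a)=t^{\ast}(a)\,\phi(-t^{\ast}(a)+a)$, and substitution of the first-order condition to obtain $\Phi(-t^{\ast}(a)+a)$ --- is the standard argument, and your treatment of the monotonicity of $a-t^{\ast}(a)$ by inverting $g(u)=u+\Phi(u)/\phi(u)$ and bounding $g'(u)=2+u\Phi(u)/\phi(u)\geq 1$ via the Mills-ratio inequality $x\,(1-\Phi(x))\leq\phi(x)$ is clean and complete. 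One small ordering remark: when you assert early on that the right-hand side $\Phi(a-t)/\phi(a-t)$ is strictly decreasing in $t$, you are implicitly using that $u\mapsto\Phi(u)/\phi(u)$ is increasing, which you only justify later through the same Mills-ratio bound (its derivative is $1+u\Phi(u)/\phi(u)$); for the uniqueness step, monotone nondecreasing would already suffice since the left-hand side $t$ is strictly increasing, so this is a presentational point rather than a gap. In short, your write-up supplies a self-contained proof where the paper supplies only a citation.
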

\begin{proof}
This lemma follows from the proof of Lemma 2 in \cite{ishihara2021evidence}.
\end{proof}\vspace{0.1in}

\begin{Lemma}\label{lem:eta}
For any $a \geq 0$ and $v \in \mathbb{R}$, we have
\begin{eqnarray*}
&  v \Phi(-v) + \Phi(-v) a \ \leq \ \eta(a) \ \leq \ \eta(0) + a, \\
&  \eta(0) \sqrt{1+a^2} \ \leq \ \eta(a) \ \leq \ \sqrt{1+a^2}.
\end{eqnarray*}
\end{Lemma}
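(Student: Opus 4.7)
For the line $v\Phi(-v) + \Phi(-v)a \leq \eta(a) \leq \eta(0) + a$, the lower bound follows by plugging $t = v+a$ into the maximization defining $\eta(a)$: when $v+a \geq 0$, this is a feasible candidate and yields $\eta(a) \geq (v+a)\Phi(-v)$ since $-t+a=-v$; when $v+a < 0$, the right-hand side is negative whereas $\eta(a) \geq 0$ via the choice $t = 0$. The upper bound is a direct consequence of Lemma \ref{lem:eta_derivative}: since $\eta'(a) = \Phi(-t^*(a)+a) \leq 1$, $\eta$ is $1$-Lipschitz on $[0,\infty)$, so $\eta(a) = \eta(0) + \int_0^a \eta'(s)\,ds \leq \eta(0) + a$.

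For the upper bound $\eta(a) \leq \sqrt{1+a^2}$, I would use a probabilistic representation. Letting $X \sim N(a,1)$ gives $\Phi(-t+a) = P(X \geq t)$, so for any $t \geq 0$,
\[
 t\,\Phi(-t+a) \ = \ t\,P(X \geq t) \ \leq \ E\!\left[X \cdot 1\{X \geq t\}\right] \ \leq \ E[X^+],
\]
where the first inequality uses $X \geq t$ on the indicator event and the second uses $t \geq 0$ (so $\{X \geq t\} \subseteq \{X \geq 0\}$ and $X \geq 0$ there). Then Cauchy--Schwarz and $(X^+)^2 \leq X^2$ give $E[X^+] \leq \sqrt{E[(X^+)^2]} \leq \sqrt{E[X^2]} = \sqrt{1+a^2}$; taking the supremum over $t \geq 0$ concludes this part.

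For the lower bound $\eta(0)\sqrt{1+a^2} \leq \eta(a)$, I would apply the just-proved first inequality with the specific choice $v = \sqrt{1+a^2}\,t^*(0) - a$, so that $v + a = \sqrt{1+a^2}\,t^*(0) \geq 0$; this yields
\[
 \eta(a) \ \geq \ \sqrt{1+a^2}\,t^*(0)\,\Phi\!\bigl(a - \sqrt{1+a^2}\,t^*(0)\bigr).
\]
Since $\eta(0) = t^*(0)\Phi(-t^*(0))$ and $\Phi$ is monotone, the desired conclusion reduces to $a - \sqrt{1+a^2}\,t^*(0) \geq -t^*(0)$, i.e., (for $a>0$) $t^*(0) \leq (\sqrt{1+a^2}+1)/a$. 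The right-hand side is decreasing in $a$ with $\inf_{a > 0}(\sqrt{1+a^2}+1)/a = 1$, so it suffices to establish $t^*(0) \leq 1$. This follows from the Gaussian Mills-ratio bound $\Phi(-t) < \phi(t)/t$ for $t > 0$: on $[1,\infty)$ one has $\Phi(-t) < \phi(t)/t \leq t\phi(t)$ (using $1/t \leq t$), so $\frac{d}{dt}[t\Phi(-t)] = \Phi(-t) - t\phi(t)$ is strictly negative there, forcing the maximizer $t^*(0) < 1$. The main obstacle is this last step: the other three inequalities are structural consequences of the variational definition of $\eta$ or of Lemma \ref{lem:eta_derivative}, whereas $\eta(0)\sqrt{1+a^2} \leq \eta(a)$ requires both a well-chosen value of $v$ in the first inequality and the Gaussian-specific fact $t^*(0) < 1$.
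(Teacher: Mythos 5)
Your argument is correct, and it is necessarily a different route from the paper's, because the paper does not actually prove this lemma: it simply defers to Lemmas 1 and 2 of \cite{ishihara2021evidence}. What you supply is a self-contained derivation, and each step checks out. The lower bound $v\Phi(-v)+\Phi(-v)a\le\eta(a)$ is indeed just the feasibility of $t=v+a$ in the variational definition (with the degenerate case $v+a<0$ handled by $\eta(a)\ge 0$); the bound $\eta(a)\le\eta(0)+a$ follows from $\eta'\le 1$ via Lemma \ref{lem:eta_derivative}; the probabilistic representation $t\Phi(-t+a)=tP(X\ge t)\le E[X^+]\le\sqrt{E[X^2]}=\sqrt{1+a^2}$ for $X\sim N(a,1)$ is a clean way to get the upper bound $\sqrt{1+a^2}$; and the lower bound $\eta(0)\sqrt{1+a^2}\le\eta(a)$ correctly reduces, via the well-chosen substitution $v=\sqrt{1+a^2}\,t^{\ast}(0)-a$ and monotonicity of $\Phi$, to the single Gaussian-specific fact $t^{\ast}(0)\le 1$, which your Mills-ratio argument ($\Phi(-t)<\phi(t)/t\le t\phi(t)$ for $t\ge 1$, so $\frac{d}{dt}[t\Phi(-t)]<0$ there) establishes. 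The only cosmetic gap is that you should note $\inf_{a>0}(\sqrt{1+a^2}+1)/a=1$ is an infimum that is not attained, so strictness of $t^{\ast}(0)<1$ (which you do prove) is what is actually needed for all $a>0$ simultaneously; since you get strict inequality from the Mills ratio, this is fine. In short: the proposal is complete and arguably more useful to a reader than the paper's citation, since it isolates exactly which inequalities are structural consequences of the definition of $\eta$ and which one ($\eta(0)\sqrt{1+a^2}\le\eta(a)$) genuinely uses a quantitative property of the Gaussian maximizer.
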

\begin{proof}
This lemma follows from Lemmas 1 and 2 in \cite{ishihara2021evidence}.
\end{proof}\vspace{0.1in}

\begin{Lemma}\label{lem:prop1}
Suppose that $\sigma_1 = \cdots = \sigma_K$ and $p_1 = \cdots = p_K$. Then, it follows that
\[
\overline{R}_{\mathrm{true}}(w) \ \geq \ L_{\mathrm{true}}(w) \ \equiv \ \begin{cases}
\frac{1}{2} \overline{R}_{\mathrm{upper}}(w) + \frac{1}{2} \overline{r}_1(w), & \text{if $K$ is even} \\
\frac{K-1}{2K} \overline{R}_{\mathrm{upper}}(w) + \frac{K-1}{2K} \overline{r}_1(w) + \frac{1}{K} \overline{r}_2(w), & \text{if $K$ is odd}
\end{cases}
\]
holds for any $w \in [0,1]$, where
\begin{eqnarray*}
\overline{r}_{1}(w) & \equiv & s(w) \cdot \left| t^{\ast}\left( \frac{(1-w)\kappa}{s(w)} \right) - \frac{2\kappa}{s(w)} \right| \\
& & \times  \Phi \left( -  \left| t^{\ast}\left( \frac{(1-w)\kappa}{s(w)} \right) - \frac{2\kappa}{s(w)} \right| - \mathrm{sgn} \left( t^{\ast}\left( \frac{(1-w)\kappa}{s(w)} \right) - \frac{2\kappa}{s(w)} \right) \cdot \frac{(1-w)\kappa}{s(w)} \right), \\
\overline{r}_{2}(w) & \equiv & s(w) \cdot \left| t^{\ast}\left( \frac{(1-w)\kappa}{s(w)} \right) - \frac{\kappa}{s(w)} \right| \cdot \Phi \left( - \left| t^{\ast}\left( \frac{(1-w)\kappa}{s(w)} \right) - \frac{\kappa}{s(w)} \right| \right).
\end{eqnarray*}
In addition, $L_{\mathrm{true}}(w) = \overline{R}_{\mathrm{upper}}(w)$ holds when $\kappa=0$.
\end{Lemma}
\begin{proof}
First, we consider the case where $K$ is even. Because $\sigma_1 = \cdots = \sigma_K$, we have $s(w) \equiv s_1(w) = \cdots = s_K(w)$ for any $w \in [0,1]$. For any $t \in \mathbf{R}$, we define 
$$
\bm{\theta}_{t,\kappa} \ \equiv \ (\underbrace{t+\kappa, \ldots, t+\kappa}_{\text{$K/2$ elements}}, \underbrace{t-\kappa, \ldots, t-\kappa}_{\text{$K/2$ elements}})'.
$$
Since $\bm{\theta}_{t,\kappa} \in \Theta(\kappa)$ for all $t \in \mathbb{R}$, we obtain
\begin{eqnarray*}
\overline{R}_{\mathrm{true}}(w) \ = \ \max_{\bm{\theta} \in \Theta(\kappa)} R(\bm{\theta},\hat{\bm{\delta}}^{(w,\ldots,w)}) & \geq & \max_{t \geq -\kappa} R(\bm{\theta}_{t,\kappa},\hat{\bm{\delta}}^{(w,\ldots,w)}).
\end{eqnarray*}
This implies that for any $t \geq -\kappa$ we obtain
\begin{eqnarray}
\overline{R}_{\mathrm{true}}(w) & \geq & \frac{1}{K} \sum_{k=1}^{K/2} (t+\kappa) \cdot \Phi \left( - \frac{(t+\kappa) - (1-w) \kappa}{s(w)} \right)  \nonumber \\
& &  \hspace{0.3in} + \frac{1}{K} \sum_{k=1}^{K/2} |t-\kappa| \cdot \Phi \left( - \frac{|t-\kappa| + (1-w) \cdot \mathrm{sgn}(t-\kappa)  \kappa}{s(w)} \right) \nonumber \\
&=& \frac{1}{2}  (t+\kappa) \cdot \Phi \left( - \frac{(t+\kappa) - (1-w) \kappa}{s(w)} \right) + \frac{1}{2} r_1(t;w,\kappa), \nonumber
\end{eqnarray}
where $r_1(t;w,\kappa) \equiv |t-\kappa| \cdot \Phi \left( - \frac{|t-\kappa| + (1-w) \cdot \mathrm{sgn}(t-\kappa)  \kappa}{s(w)} \right)$. Substituting $t = s(w) \cdot t^{\ast}\left( \frac{(1-w)\kappa}{s(w)} \right) - \kappa$, we obtain
\begin{eqnarray*}
\overline{R}_{\mathrm{true}}(w) & \geq & \frac{1}{2}  s(w)  \eta\left( \frac{(1-w) \kappa}{s(w)}  \right)  + \frac{1}{2} r_1\left( s(w) \cdot t^{\ast}\left( \frac{(1-w)\kappa}{s(w)} \right) - \kappa ;  w, \kappa \right)  \\
&=& \frac{1}{2} \overline{R}_{\mathrm{upper}}(w) + \frac{1}{2} \overline{r}_1(w).
\end{eqnarray*}

Next, we consider the case where $K$ is odd. Similar to $\bm{\theta}_{t,\kappa}$, we define 
$$
\tilde{\bm{\theta}}_{t,\kappa} \ \equiv \ (\underbrace{t+\kappa, \ldots, t+\kappa}_{\text{$(K-1)/2$ elements}}, t, \underbrace{t-\kappa, \ldots, t-\kappa}_{\text{$(K-1)/2$ elements}})'.
$$
From a similar argument as above, for any $t \geq - \kappa$ we obtain
\begin{eqnarray*}
\overline{R}_{\mathrm{true}}(w) & \geq & R(\tilde{\bm{\theta}}_{t,\kappa},\hat{\bm{\delta}}^{(w,\ldots, w)}) \\
& \geq & \frac{1}{K} \sum_{k=1}^{(K-1)/2} (t+\kappa) \cdot \Phi \left( - \frac{(t+\kappa) - (1-w) \kappa}{s(w)} \right) + \frac{1}{K} |t| \cdot \Phi \left( - \frac{|t|}{s(w)} \right)  \nonumber \\
& &  \hspace{0.3in} + \frac{1}{K} \sum_{k=1}^{(K-1)/2} |t-\kappa| \cdot \Phi \left( - \frac{|t-\kappa| + (1-w) \cdot \mathrm{sgn}(t-\kappa)  \kappa}{s(w)} \right) \nonumber \\
&=& \frac{K-1}{2K} (t+\kappa) \cdot \Phi \left( - \frac{(t+\kappa) - (1-w) \kappa}{s(w)} \right) + \frac{K-1}{2K} r_1(t;w,\kappa) + \frac{1}{K} r_2(t;w),
\end{eqnarray*}
where $r_2(t;w) \equiv |t| \cdot \Phi \left( - \frac{|t|}{s(w)} \right)$. Substituting $t = s(w) \cdot t^{\ast}\left( \frac{(1-w)\kappa}{s(w)} \right) - \kappa$, we obtain
\begin{eqnarray*}
\overline{R}_{\mathrm{true}}(w) & \geq & \frac{K-1}{2K} \overline{R}_{\mathrm{upper}}(w) + \frac{K-1}{2K} \overline{r}_1(w) + \frac{1}{K} \overline{r}_2(w).
\end{eqnarray*}

Finally, we show that $L_{\mathrm{true}}(w) = \overline{R}_{\mathrm{upper}}(w)$ holds when $\kappa=0$. When $\kappa=0$, we have
\begin{eqnarray*}
\overline{r}_1(w) \ = \ \overline{r}_2(w) \ = \ s(w) t^{\ast}(0) \Phi \left( -t^{\ast}(0) \right) \ = \ s(w) \eta(0).
\end{eqnarray*}
In addition, $\overline{R}_{\mathrm{upper}}(w) = s(w) \eta(0)$ holds when $\kappa=0$. Therefore, we obtain $L_{\mathrm{true}}(w) = \overline{R}_{\mathrm{upper}}(w)$ when $\kappa=0$.
\end{proof}\vspace{0.1in}

\begin{Lemma}\label{lem:CES_mis}
Suppose that Assumption \ref{ass:parameter_space} holds. Letting $\mathcal{K}(\kappa) \equiv \{k : \sigma_k \leq \underline{\sigma} + \kappa / t^{\ast}(0)\}$, we have
\begin{eqnarray}
\frac{\max_{\bm{\theta} \in \Theta(\kappa)} R(\bm{\theta},\hat{\bm{\delta}}^{\bm{w}^{\ast}(\kappa')})}{\max_{\bm{\theta} \in \Theta(\kappa)} R(\bm{\theta},\hat{\bm{\delta}}^{\mathrm{CES}})} & \leq & \frac{\sum_{k=1}^K p_k \cdot \psi_k\left( w_k^{\ast}(\kappa') ;\kappa' \right)}{\eta(0) \left( \sum_{k \in \mathcal{K}(\kappa)} p_k \sigma_k + \sum_{k \not\in \mathcal{K}(\kappa)} p_k \underline{\sigma} \right)} \nonumber \\
& &\times \left[ 1 + \max_{k} \left\{ H \left( \frac{(1-w_k^{\ast}(\kappa')) \cdot \kappa' }{s_k \left( w_k^{\ast}(\kappa') \right)} \right) \right\} \cdot \left( \frac{|\kappa - \kappa'|_{+}}{\kappa'} \right) \right], \label{Thm_CES_mis}
\end{eqnarray}
where $H(a) \equiv a/\eta(a)$. In particular, if $\sigma_1, \ldots, \sigma_K$ and $\kappa$ satisfy $\overline{\sigma} - \underline{\sigma} \leq \kappa / t^{\ast}(0)$, then
\[
\frac{\max_{\bm{\theta} \in \Theta(\kappa)} R(\bm{\theta},\hat{\bm{\delta}}^{\bm{w}^{\ast}(\kappa')})}{\max_{\bm{\theta} \in \Theta(\kappa)} R(\bm{\theta},\hat{\bm{\delta}}^{\mathrm{CES}})} \ \leq \ 1 + \max_{k} \left\{ H \left( \frac{(1-w_k^{\ast}(\kappa')) \cdot \kappa' }{s_k \left( w_k^{\ast}(\kappa') \right)} \right) \right\} \cdot \left( \frac{|\kappa - \kappa'|_{+}}{\kappa'} \right).
\]
\end{Lemma}
\begin{proof}
Because $\hat{\bm{\delta}}^{\bm{1}}(\hat{\bm{\theta}}) = \hat{\bm{\delta}}^{\text{CES}}(\hat{\bm{\theta}})$, it follows from (\ref{regret}) that we have
\begin{eqnarray*}
R(\bm{\theta},\hat{\bm{\delta}}^{\text{CES}}) &=&  \sum_{k=1}^K p_k \cdot \left\{ |\theta_k| \cdot  \Phi \left( -   \frac{|\theta_k|}{\sigma_k} \right) \right\} \\
&=& \sum_{k=1}^K p_k \cdot \left\{ \sigma_k \cdot \left( \frac{|\theta_k|}{\sigma_k} \right) \cdot  \Phi \left( -   \frac{|\theta_k|}{\sigma_k} \right) \right\}.
\end{eqnarray*}
For any $t$, a hyper-rectangle $[t,t+\kappa]^K$ is included in $\Theta(\kappa)$. Hence, for any $t>0$, we obtain
\begin{eqnarray*}
\max_{\bm{\theta} \in \Theta(\kappa)} R(\bm{\theta},\hat{\bm{\delta}}^{\text{CES}}) & \geq & \max_{\bm{\theta} \in [t,t+\kappa]^K} R(\bm{\theta},\hat{\bm{\delta}}^{\text{CES}}) \ = \ \sum_{k=1}^K p_k \sigma_k \cdot \max_{\theta_k \in [t,t+\kappa]} \left\{ (\theta_k / \sigma_k) \Phi(-\theta_k / \sigma_k) \right\} \\
&=&  \sum_{k=1}^K p_k \sigma_k \cdot \max_{s \in [t/\sigma_k,(t+\kappa)/\sigma_k]} \left\{ s \Phi(-s) \right\}.
\end{eqnarray*}
When $t = t^{\ast}(0) \cdot \underline{\sigma}$, we have $t/\sigma_k \leq t^{\ast}(0)$. Because $s \Phi(-s)$ is single-peaked and maximized at $s = t^{\ast}(0)$, for $t = t^{\ast}(0) \cdot \underline{\sigma}$ we obtain
\begin{eqnarray*}
\max_{s \in [t/\sigma_k,(t+\kappa)/\sigma_k]} \left\{ s \Phi(-s) \right\} &=& \min \left\{ t^{\ast}(0), \frac{t^{\ast}(0)\underline{\sigma}+\kappa}{\sigma_k} \right\} \cdot \Phi \left( - \min \left\{ t^{\ast}(0), \frac{t^{\ast}(0)\underline{\sigma}+\kappa}{\sigma_k} \right\} \right) \\
& \geq &  \min \left\{ t^{\ast}(0), \frac{t^{\ast}(0)\underline{\sigma}+\kappa}{\sigma_k} \right\} \cdot \Phi(-t^{\ast}(0)) \\
& \geq & \min \left\{ \eta(0), \frac{\eta(0) \underline{\sigma} + \kappa \Phi(-t^{\ast}(0))}{\sigma_k} \right\}.
\end{eqnarray*}
Because $\eta(0) = t^{\ast}(0) \Phi(-t^{\ast}(0))$, we obtain
\begin{eqnarray}
\max_{\bm{\theta} \in \Theta(\kappa)} R(\bm{\theta},\hat{\bm{\delta}}^{\text{CES}}) & \geq & \sum_{k=1}^K p_k \cdot \min \left\{ \eta(0) \sigma_k,  \eta(0) \underline{\sigma} + \kappa \frac{\eta(0)}{t^{\ast}(0)} \right\} \nonumber \\
&=& \eta(0) \cdot \left[ \sum_{k=1}^K p_k \cdot \left\{ \sigma_k - \left| (\sigma_k - \underline{\sigma}) - \frac{\kappa}{t^{\ast}(0)} \right|_{+} \right\}  \right] \nonumber \\
& \geq & \eta(0) \cdot \left( \sum_{k \in \mathcal{K}(\kappa)} p_k \sigma_k + \sum_{k \not\in \mathcal{K}(\kappa)} p_k  \underline{\sigma}  \right), \label{CES_lower}
\end{eqnarray}
where $|a|_{+} \equiv \max\{0,a\}$. Hence, if $\overline{\sigma} - \underline{\sigma} \leq \kappa / t^{\ast}(0)$, the lower bound of the maximum regret of the CES rule becomes $\eta(0) \left( \sum_{k=1}^K p_k \sigma_k \right)$.

Next, we derive the upper bound of $\max_{\bm{\theta} \in \Theta(\kappa)} R(\bm{\theta},\hat{\bm{\delta}}^{\bm{w}^*(\kappa')})$. We observe that
\begin{eqnarray*}
\max_{\bm{\theta} \in \Theta(\kappa)} R(\bm{\theta},\hat{\bm{\delta}}^{\bm{w}^*(\kappa')}) &\leq & \sum_{k=1}^K p_k \cdot \psi_k\left( w_k^{\ast}(\kappa') ;\kappa \right)  \\
&=& \sum_{k=1}^K p_k \cdot \psi_k\left( w_k^{\ast}(\kappa') ;\kappa' \right) \cdot \left( \frac{\psi_k\left( w_k^{\ast}(\kappa') ;\kappa \right)}{\psi_k\left( w_k^{\ast}(\kappa') ;\kappa' \right)} \right) \\
&=& \sum_{k=1}^K p_k \cdot \psi_k\left( w_k^{\ast}(\kappa') ;\kappa' \right) \cdot \left\{ \frac{\eta \left( \frac{(1-w_k^{\ast}(\kappa')) \cdot \kappa}{s_k \left( w_k^{\ast}(\kappa') \right)} \right)}{\eta \left( \frac{(1-w_k^{\ast}(\kappa')) \cdot \kappa'}{s_k \left( w_k^{\ast}(\kappa') \right)} \right)} \right\}.
\end{eqnarray*}
This implies that $\max_{\bm{\theta} \in \Theta(\kappa)} R(\bm{\theta},\hat{\bm{\delta}}^{\bm{w}^*(\kappa')}) \leq \sum_{k=1}^K p_k \cdot \psi_k\left( w_k^{\ast}(\kappa') ;\kappa' \right)$ when $\kappa \leq \kappa'$. Because $\eta'(a) \leq 1$ for all $a \geq 0$, we have $\eta(a) \leq \eta(a') + (a-a')$ for $a \geq a'$. Hence, when $\kappa \geq \kappa'$, we obtain
\begin{eqnarray}
& & \max_{\bm{\theta} \in \Theta(\kappa)} R(\bm{\theta},\hat{\bm{\delta}}^{\bm{w}^*(\kappa')}) \nonumber \\
&\leq & \sum_{k=1}^K p_k \cdot \psi_k\left( w_k^{\ast}(\kappa') ;\kappa' \right) \cdot \left\{ 1 + \frac{ \left( \frac{(1-w_k^{\ast}(\kappa'))\cdot \kappa' }{s_k \left( w_k^{\ast}(\kappa') \right)} \right) \cdot \frac{\kappa - \kappa'}{\kappa'}}{\eta \left( \frac{(1-w_k^{\ast}(\kappa')) \cdot \kappa'}{s_k \left( w_k^{\ast}(\kappa') \right)} \right)} \right\} \nonumber \\
&=& \sum_{k=1}^K p_k \cdot \psi_k\left( w_k^{\ast}(\kappa') ;\kappa' \right) \cdot \left\{ 1 + H \left( \frac{(1-w_k^{\ast}(\kappa')) \cdot \kappa' }{s_k \left( w_k^{\ast}(\kappa') \right)} \right) \cdot \frac{\kappa - \kappa'}{\kappa'} \right\}, \nonumber
\end{eqnarray}
where $H(a) \equiv a/\eta(a)$. Hence, we obtain the following upper bound:
\begin{eqnarray}
& & \max_{\bm{\theta} \in \Theta(\kappa)} R(\bm{\theta},\hat{\bm{\delta}}^{\bm{w}^*(\kappa')}) \nonumber \\
&\leq & \left\{ \sum_{k=1}^K p_k \cdot \psi_k\left( w_k^{\ast}(\kappa') ;\kappa' \right) \right\} \cdot \left[ 1 + \max_{k} \left\{ H \left( \frac{(1-w_k^{\ast}(\kappa')) \cdot \kappa' }{s_k \left( w_k^{\ast}(\kappa') \right)} \right) \right\} \cdot \frac{|\kappa - \kappa'|_{+}}{\kappa'} \right]. \label{shrinkage_upper}
\end{eqnarray}
From (\ref{CES_lower}) and (\ref{shrinkage_upper}), we obtain the upper bound of (\ref{Thm_CES_mis}).
\end{proof}\vspace{0.1in}

\begin{Lemma}\label{lem:pool_mis}
Suppose that Assumption \ref{ass:parameter_space} holds and $K$ is even. Then, we have
\begin{eqnarray}
\frac{\max_{\bm{\theta} \in \Theta(\kappa)} R(\bm{\theta},\hat{\bm{\delta}}^{\bm{w}^{\ast}(\kappa')})}{\max_{\bm{\theta} \in \Theta(\kappa)} R(\bm{\theta},\hat{\bm{\delta}}^{\mathrm{pool}})} & \leq & \frac{\sum_{k=1}^K p_k \cdot \psi_k\left( w_k^{\ast}(\kappa') ;\kappa' \right)}{\max \left\{ \frac{1}{2} s_0 \eta ( \kappa / s_0 ), s_0 \eta ( \kappa / s_0 ) - \kappa \right\}} \nonumber \\
& & \times  \left[ 1 + \max_{k} \left\{ H \left( \frac{(1-w_k^{\ast}(\kappa')) \cdot \kappa' }{s_k \left( w_k^{\ast}(\kappa') \right)} \right) \right\} \cdot \left( \frac{|\kappa - \kappa'|_{+}}{\kappa'} \right) \right]. \label{Thm_pool_mis}
\end{eqnarray}
\end{Lemma}
\begin{proof}
Without loss of generality, we assume $p_1 \geq \cdots \geq p_K$. Because $\hat{\bm{\delta}}^{\mathrm{pool}}(\hat{\bm{\theta}}) = \hat{\bm{\delta}}^{\bm{0}}(\hat{\bm{\theta}})$ and $(t+\kappa, \ldots, t+\kappa, t-\kappa, \ldots, t-\kappa)' \in \Theta(\kappa)$ for all $t \in \mathbb{R}$, we observe that
\begin{eqnarray*}
\max_{\bm{\theta} \in \Theta(\kappa)} R(\bm{\theta},\hat{\bm{\delta}}^{\mathrm{pool}}) &=& \max_{\bm{\theta} \in \Theta(\kappa)} \sum_{k=1}^K p_k \cdot \left\{ |\theta_k| \cdot \Phi \left( - \text{sgn}(\theta_k) \cdot \frac{\overline{\theta}}{s_0} \right) \right\} \\
& \geq & \max_{t \geq -\kappa} \left[ \sum_{k=1}^{K/2} p_k \cdot \left\{ (t+\kappa) \cdot \Phi \left( - \frac{(t+\kappa)-\kappa}{s_0} \right) \right\} \right. \\
& & \hspace{0.5in} \left. + \sum_{k=K/2 + 1}^{K} p_k \cdot \left\{ |t-\kappa| \cdot \Phi \left( - \text{sgn}(t-\kappa) \cdot \frac{t}{s_0} \right) \right\} \right].
\end{eqnarray*}
Substituting $t = s_0 \cdot t^{\ast}(\kappa / s_0) - \kappa$, we obtain
\begin{equation}
\max_{\bm{\theta} \in \Theta(\kappa)} R(\bm{\theta},\hat{\bm{\delta}}^{\mathrm{pool}}) \ \geq \ \left( \sum_{k=1}^{K/2} p_k \right) \cdot s_0 \eta \left( \kappa / s_0 \right) \ \geq \ \frac{1}{2} s_0 \eta \left( \kappa / s_0 \right), \label{pool_lower1}
\end{equation}
where $\sum_{k=1}^{K/2} p_k \geq 1/2$ because $p_1 \geq \cdots \geq p_K$.

Next, we derive another lower bound of $\max_{\bm{\theta} \in \Theta(\kappa)} R(\bm{\theta},\hat{\bm{\delta}}^{\mathrm{pool}})$. Similar to the above argument, we observe that
\begin{eqnarray*}
\max_{\bm{\theta} \in \Theta(\kappa)} R(\bm{\theta},\hat{\bm{\delta}}^{\mathrm{pool}}) & \geq & \max_{t \geq -\kappa} \left[ \sum_{k=1}^{K/2} p_k \cdot \left\{ (t+\kappa) \cdot \Phi \left( - \frac{(t+\kappa)-\kappa}{s_0} \right) \right\} \right. \\
& & \hspace{0.5in} \left. + \sum_{k=K/2 + 1}^{K} p_k \cdot \left\{ |t-\kappa| \cdot \Phi \left( - \text{sgn}(t-\kappa) \cdot \frac{t}{s_0} \right) \right\} \right] \\
& \geq & \max_{t \geq -\kappa} \left[ \sum_{k=1}^{K/2} p_k \cdot \left\{ (t+\kappa) \cdot \Phi \left( - \frac{(t+\kappa)-\kappa}{s_0} \right) \right\} \right. \\
& & \hspace{1.2in} \left. + \sum_{k=K/2 + 1}^{K} p_k \cdot \left\{ (t-\kappa) \cdot \Phi \left( - \frac{t}{s_0} \right) \right\} \right].
\end{eqnarray*}
Substituting $t = s_0 \cdot t^{\ast}(\kappa / s_0) - \kappa$, we obtain
\begin{eqnarray}
& & \max_{\bm{\theta} \in \Theta(\kappa)} R(\bm{\theta},\hat{\bm{\delta}}^{\mathrm{pool}}) \nonumber \\
& \geq &  \sum_{k=1}^{K/2} p_k \cdot s_0 \eta \left( \frac{\kappa}{s_0} \right) + \sum_{k=K/2 + 1}^{K} p_k \cdot s_0 \left\{ \left( t^{\ast}\left( \frac{\kappa}{s_0} \right)  - \frac{2 \kappa}{s_0} \right) \cdot \Phi \left( -  t^{\ast}\left( \frac{\kappa}{s_0} \right) + \frac{\kappa}{s_0}  \right) \right\} \nonumber \\
& \geq & s_0 \eta \left( \frac{\kappa}{s_0} \right) - \left( \sum_{k=K/2 + 1}^{K} p_k \right) \cdot 2 \kappa \cdot \Phi \left( -  t^{\ast}\left( \frac{\kappa}{s_0} \right) + \frac{\kappa}{s_0}  \right) \ \geq \ s_0 \eta(\kappa / s_0) - \kappa. \label{pool_lower2}
\end{eqnarray}
From (\ref{shrinkage_upper}), (\ref{pool_lower1}), and (\ref{pool_lower2}), we obtain the upper bound of (\ref{Thm_pool_mis}).
\end{proof}\vspace{0.1in}

\begin{proof}[Proof of Proposition \ref{prop:tightness}]
Because the lower bound is trivial, we consider the upper bound of (\ref{tight_upper}). Lemma \ref{lem:prop1} implies that we obtain
\begin{equation*}
\frac{\overline{R}_{\mathrm{upper}}(w)}{\overline{R}_{\mathrm{true}}(w)} \ \leq \ \frac{\overline{R}_{\mathrm{upper}}(w)}{L_{\mathrm{true}}(w)}.
\end{equation*}
Because $\overline{r}_1(w)$ and $\overline{r}_2(w)$ are nonnegative, we have $L_{\mathrm{true}}(w) \geq \frac{1}{2} \overline{R}_{\mathrm{upper}}(w)$ when $K$ is even and $L_{\mathrm{true}}(w) \geq \frac{K-1}{2K} \overline{R}_{\mathrm{upper}}(w)$ when $K$ is odd. Hence, we obtain (\ref{tight_upper}). Furthermore, Lemma \ref{lem:prop1} shows that $L_{\mathrm{true}}(w) = \overline{R}_{\mathrm{upper}}(w)$ holds when $\kappa = 0$. This implies that $\overline{R}_{\mathrm{upper}}(w) = \overline{R}_{\mathrm{true}}(w)$ when $\kappa = 0$.
\end{proof}\vspace{0.1in}

\begin{proof}[Proof of Proposition \ref{prop:K_asymptotics}]
Using Lemma \ref{lem:eta_derivative}, we obtain
\begin{eqnarray}
\frac{d}{d w_k} \tilde{\psi}_k(w_k;\kappa) &=& \sigma_k \eta \left( (w_k^{-1} - 1) \cdot (\kappa/\sigma_k) \right) \nonumber \\
& & - \kappa w_k^{-1} \Phi\left(-t^{\ast}((w_k^{-1} - 1) \cdot (\kappa/\sigma_k) ) + (w_k^{-1} - 1) \cdot (\kappa / \sigma_k ) \right). \nonumber
\end{eqnarray}
Because we have $\eta(a) = t^{\ast}(a) \cdot \Phi(-t^{\ast}(a)+a)$, we have
\begin{eqnarray*}
\frac{d}{d w_k} \tilde{\psi}_k(w_k;\kappa) &=& \left\{ \sigma_k  - \frac{\kappa w_k^{-1}}{t^{\ast}((w_k^{-1} - 1) \cdot (\kappa/\sigma_k) )} \right\} \eta \left( (w_k^{-1} - 1) \cdot (\kappa/\sigma_k) \right)
\end{eqnarray*}
Hence, we obtain
\begin{equation}
\mathrm{sgn} \left( \frac{d}{d w_k} \tilde{\psi}_k(w_k ; \kappa) \right) \ = \ \mathrm{sgn} \left( t^{\ast}\left( (w_k^{-1} - 1) \cdot (\kappa/\sigma_k) \right) - (\kappa/\sigma_k) w_k^{-1} \right).\label{sign_tilde_psi}
\end{equation}
Because $t^{\ast}(0) > 0$, the sign of the derivative becomes strictly positive when $\kappa=0$. Hence, we obtain $\tilde{w}_k^{\ast}(0) = 0$.

Because $-t^{\ast}(a) + a + t^{\ast}(0)$ is strictly positive for $a > 0$ by Lemma \ref{lem:eta_derivative}, we have
$$
t^{\ast}\left( (w_k^{-1} - 1) \cdot (\kappa/\sigma_k) \right) - (w_k^{-1} - 1) \cdot (\kappa/\sigma_k) - t^{\ast}(0) \ < \ 0 \ \ \text{for all $w_k \in (0,1)$.}
$$
This implies that for all $w_k \in (0,1)$, we have
\begin{eqnarray*}
 t^{\ast}\left( (w_k^{-1} - 1) \cdot (\kappa/\sigma_k) \right) - (\kappa/\sigma_k) w_k^{-1} & < & t^{\ast}(0) - \kappa/ \sigma_k.
\end{eqnarray*}
Therefore, if $\kappa/\sigma_k > t^{\ast}(0)$, then $\frac{d}{d w_k} \tilde{\psi}_k(w_k)$ is negative for all $w_k \in (0,1)$. As a result, if $\kappa/\sigma_k > t^{\ast}(0)$, then we have $\tilde{w}_k^{\ast}(\kappa) = 1$.
\end{proof}\vspace{0.1in}

\begin{proof}[Proof of Theorem \ref{thm:CES_correct}]
From Lemma \ref{lem:CES_mis}, we observe that
\begin{equation}
\frac{\max_{\bm{\theta} \in \Theta(\kappa)} R(\bm{\theta},\hat{\bm{\delta}}^{\bm{w}^{\ast}(\kappa)})}{\max_{\bm{\theta} \in \Theta(\kappa)} R(\bm{\theta},\hat{\bm{\delta}}^{\mathrm{CES}})} \ \leq \ \frac{\sum_{k=1}^K p_k \cdot \psi_k\left( w_k^{\ast}(\kappa) ;\kappa \right)}{\eta(0) \left( \sum_{k \in \mathcal{K}(\kappa)} p_k \sigma_k + \sum_{k \not\in \mathcal{K}(\kappa)} p_k \underline{\sigma} \right)}. \label{Thm_CES_correct}
\end{equation}
When $t^{\ast}(0) \cdot (\overline{\sigma} - \underline{\sigma}) \leq \kappa$ holds, the denominator on the right-hand side of (\ref{Thm_CES_correct}) becomes $\eta(0) \left( \sum_{k=1}^K p_k \sigma_k \right)$. Because $\psi_k\left( w_k^{\ast}(\kappa) ;\kappa \right) \leq \psi_k (1;\kappa) = \eta(0) \sigma_k$, the right-hand side of (\ref{Thm_CES_correct}) is not larger than $1$ when $t^{\ast}(0) \cdot (\overline{\sigma} - \underline{\sigma}) \leq \kappa$.

As discussed in Remark \ref{rem:shrinkage_simple}, we observe that
\begin{eqnarray*}
\psi_k'(w_k;\kappa) &=& s_k'(w_k) \eta \left( \frac{(1-w_k) \cdot \kappa}{s_k(w_k)} \right) - \left\{ \frac{\kappa s_k(w_k) + \kappa (1-w_k) s_k'(w_k)}{s_k(w_k)} \right\} \eta'\left( \frac{(1-w_k) \cdot \kappa}{s_k(w_k)} \right),
\end{eqnarray*}
which implies that $\psi_k'(1;\kappa) = s_k'(w_k) \eta(0) - \kappa \eta'(0)$. Hence, if $\kappa < \frac{s_k'(w_k) \eta(0)}{\eta'(0)}$ holds, we have $\psi_k\left( w_k^{\ast}(\kappa) ;\kappa \right) < \psi_k (1;\kappa)$. Lemma \ref{lem:eta_derivative} implies $\frac{\eta(0)}{\eta'(0)} = \frac{t^{\ast}(0) \cdot \Phi(-t^{\ast}(0))}{\Phi(-t^{\ast}(0))} = t^{\ast}(0)$ and we observe that
\begin{eqnarray*}
s_k'(1) &=& \left. \left\{ \sqrt{s_k^2(w_k)} \right\}' \right|_{w_k=1} \ = \ \left. \frac{1}{2} \frac{\{s_k^2(w_k)\}'}{s_k(w_k)} \right|_{w_k=1}\\
&=& \left. \frac{\left\{ 2 w_k + \frac{2}{K}(1-2w_k) \right\}\sigma_k^2 - 2(1-w_k) \left( \frac{1}{K^2} \sum_{k=1}^K \sigma_k^2 \right) }{2 s_k(w_k)} \right|_{w_k=1} \ = \ \left( 1 - \frac{1}{K} \right) \sigma_k.
\end{eqnarray*}
Hence, when $t^{\ast}(0) \cdot (\overline{\sigma} - \underline{\sigma}) \leq \kappa < t^{\ast}(0) \cdot \left( 1 - \frac{1}{K} \right) \sigma_k$ holds for some $k$, the right-hand side of (\ref{Thm_CES_correct}) is strictly smaller than $1$. Therefore, if $t^{\ast}(0) \cdot (\overline{\sigma} - \underline{\sigma}) \leq \kappa < t^{\ast}(0) \cdot \left( 1-\frac{1}{K} \right) \overline{\sigma}$ holds, then we obtain $\max_{\bm{\theta} \in \Theta(\kappa)} R(\bm{\theta},\hat{\bm{\delta}}^{\bm{w}^{\ast}(\kappa)}) < \max_{\bm{\theta} \in \Theta(\kappa)} R(\bm{\theta},\hat{\bm{\delta}}^{\mathrm{CES}})$.

Because $\psi_k\left( w_k^{\ast}(\kappa) ;\kappa \right) \leq \psi_k\left( 0 ;\kappa \right) = s_0 \eta(\kappa / s_0)$, the right-hand side of (\ref{Thm_CES_correct}) is bounded by
\begin{eqnarray*}
\frac{s_0 \cdot \eta \left( \kappa / s_0 \right)}{\eta(0) \underline{\sigma}} \ \leq \ \frac{s_0 \{ \eta(0) + \frac{\kappa}{s_0} \}}{\eta(0) \underline{\sigma}} \ = \ \frac{ \eta(0) s_0 + \kappa}{\eta(0) \underline{\sigma}},
\end{eqnarray*}
where the first inequality follows from Lemma \ref{lem:eta}. Hence, $\kappa \leq \eta(0) \cdot (\underline{\sigma} - s_0)$ implies 
$$
\max_{\bm{\theta} \in \Theta(\kappa)} R(\bm{\theta},\hat{\bm{\delta}}^{\bm{w}^{\ast}(\kappa)}) \leq \max_{\bm{\theta} \in \Theta(\kappa)} R(\bm{\theta},\hat{\bm{\delta}}^{\mathrm{CES}})
$$ and $\kappa < \eta(0) \cdot (\underline{\sigma} - s_0)$ implies 
$$
\max_{\bm{\theta} \in \Theta(\kappa)} R(\bm{\theta},\hat{\bm{\delta}}^{\bm{w}^{\ast}(\kappa)}) < \max_{\bm{\theta} \in \Theta(\kappa)} R(\bm{\theta},\hat{\bm{\delta}}^{\mathrm{CES}}).
$$
As a result, we obtain the desired results.
\end{proof}\vspace{0.1in}

\begin{proof}[Proof of Theorem \ref{thm:pool_correct}]
From the proof of Lemma \ref{lem:pool_mis}, we obtain
\begin{eqnarray}
\frac{\max_{\bm{\theta} \in \Theta(\kappa)} R(\bm{\theta},\hat{\bm{\delta}}^{\bm{w}^{\ast}(\kappa)})}{\max_{\bm{\theta} \in \Theta(\kappa)} R(\bm{\theta},\hat{\bm{\delta}}^{\mathrm{pool}})} & \leq & \frac{\sum_{k=1}^K p_k \cdot \psi_k\left( w_k^{\ast}(\kappa) ;\kappa \right)}{\max \left\{ \frac{1}{2} s_0 \eta ( \kappa / s_0 ), s_0 \eta ( \kappa / s_0 ) - \kappa \right\}} \nonumber \\
& \leq & \min \left\{ 2, \, \frac{s_0 \eta(\kappa / s_0)}{s_0 \eta(\kappa / s_0) - \kappa}, \, \frac{2\eta(0) \left( \sum_{k=1}^K p_k \sigma_k \right)}{s_0 \eta(\kappa / s_0)} \right\}, \label{Thm_pool_correct}
\end{eqnarray}
where the second inequality follows from 
\[
\psi_k\left( w_k^{\ast}(\kappa) ;\kappa \right) \leq \psi_k\left(0 ;\kappa \right) = s_0 \eta(\kappa / s_0) \ \ \text{and} \ \ \psi_k\left( w_k^{\ast}(\kappa) ;\kappa \right) \leq \psi_k\left(1 ;\kappa \right) = \sigma_k \eta(0).
\]
Because the right-hand side of (\ref{Thm_pool_correct}) is bounded by $2$, we have
\[
\max_{\bm{\theta} \in \Theta(\kappa)} R(\bm{\theta},\hat{\bm{\delta}}^{\bm{w}^{\ast}(\kappa)}) \ \leq \ 2 \cdot \max_{\bm{\theta} \in \Theta(\kappa)} R(\bm{\theta},\hat{\bm{\delta}}^{\mathrm{pool}}) \ \ \ \text{for all $\kappa$}.
\]
When $\kappa = 0$, we have
\[
\frac{s_0 \eta(\kappa / s_0)}{s_0 \eta(\kappa / s_0) - \kappa} \ = \ \frac{s_0 \eta(0)}{s_0 \eta(0)} \ = \ 1.
\]
Hence, we have $\max_{\bm{\theta} \in \Theta(\kappa)} R(\bm{\theta},\hat{\bm{\delta}}^{\bm{w}^{\ast}(\kappa)}) \leq \max_{\bm{\theta} \in \Theta(\kappa)} R(\bm{\theta},\hat{\bm{\delta}}^{\mathrm{pool}})$ for $\kappa=0$. In addition, if $s_0 \eta(\kappa/s_0) > 2\eta(0) \left( \sum_{k=1}^K p_k \sigma_k \right)$ holds, the right-hand side of (\ref{Thm_pool_correct}) is smaller than $1$. Hence, we obtain $\max_{\bm{\theta} \in \Theta(\kappa)} R(\bm{\theta},\hat{\bm{\delta}}^{\bm{w}^{\ast}(\kappa)}) < \max_{\bm{\theta} \in \Theta(\kappa)} R(\bm{\theta},\hat{\bm{\delta}}^{\mathrm{pool}})$.
\end{proof}\vspace{0.1in}

\begin{proof}[Proof of Corollary \ref{cor:dominate}]
Because $\overline{\sigma} < K \underline{\sigma}$, we have $t^{\ast}(0) \cdot (\overline{\sigma} - \underline{\sigma}) \ < \ t^{\ast}(0) \cdot \left( 1-\frac{1}{K} \right) \overline{\sigma}$. From Lemma \ref{lem:eta}, we have $s_0 \eta(\kappa / s_0) \geq \kappa /2$. This implies that the condition 
$$
s_0 \eta(\kappa/s_0) \ > \ 2\eta(0) \left( \sum_{k=1}^K p_k \sigma_k \right)
$$
holds when
\begin{equation}
\kappa \ > \ 4 \eta(0) \left( \sum_{k=1}^K p_k \sigma_k \right). \nonumber
\end{equation}
We observe that
\begin{eqnarray*}
\frac{\sum_{k=1}^K p_k \sigma_k}{\underline{\sigma}} \ < \ \frac{t^{\ast}(0) \left( 1- \frac{1}{K} \right)}{4 \eta(0)} & \Leftrightarrow & 4 \eta(0) \left( \sum_{k=1}^K p_k \sigma_k \right) \ < \ t^{\ast}(0) \cdot \left( 1-\frac{1}{K} \right).
\end{eqnarray*}
Therefore, it follows from Theorems \ref{thm:CES_correct} and \ref{thm:pool_correct} that the proposed shrinkage rule strictly dominates the CES and pooling rules when $\kappa$ satisfies (\ref{range_dominate}).
\end{proof}\vspace{0.1in}

\begin{proof}[Proof of Theorem \ref{thm:CES_mis}]
First, we consider the case where $\kappa' =(1+c)\kappa$. Because $\kappa \leq \kappa'$, it follows from Lemma \ref{lem:CES_mis} that we have
\begin{eqnarray}
\frac{\max_{\bm{\theta} \in \Theta(\kappa)} R(\bm{\theta},\hat{\bm{\delta}}^{\bm{w}^{\ast}(\kappa')})}{\max_{\bm{\theta} \in \Theta(\kappa)} R(\bm{\theta},\hat{\bm{\delta}}^{\mathrm{CES}})} & \leq & \frac{\sum_{k=1}^K p_k \cdot \psi_k\left( w_k^{\ast}(\kappa') ;\kappa' \right)}{\eta(0) \left( \sum_{k \in \mathcal{K}(\kappa)} p_k \sigma_k + \sum_{k \not\in \mathcal{K}(\kappa)} p_k \underline{\sigma} \right)}. \nonumber
\end{eqnarray}
As discussed in the proof of Theorem \ref{thm:CES_correct}, the right-hand side is less than $1$ when $\kappa' = (1+c)\kappa < \eta(0) \cdot (\underline{\sigma} - s_0)$. Similarly, the proof of Theorem \ref{thm:CES_correct} implies that we have
\[
\sum_{k=1}^K p_k \cdot \psi_k\left( w_k^{\ast}(\kappa') ;\kappa' \right) \ < \ \sum_{k=1}^K p_k \cdot \psi_k\left( 1 ;\kappa' \right) \ = \ \eta(0) \left( \sum_{k=1}^K p_k \sigma_k \right)
\]
when $\kappa' = (1+c) \kappa < t^{\ast}(0) \cdot \left( 1 - \frac{1}{K} \right) \overline{\sigma}$. If $t^{\ast}(0) \cdot (\overline{\sigma} - \underline{\sigma}) \leq \kappa$, then we have
\[
\eta(0) \left( \sum_{k \in \mathcal{K}(\kappa)} p_k \sigma_k + \sum_{k \not\in \mathcal{K}(\kappa)} p_k \underline{\sigma} \right) \ = \ \eta(0) \left( \sum_{k=1}^K p_k \sigma_k \right).
\]
As a result, (\ref{ineq_CES_mis}) holds when $t^{\ast}(0) \cdot (\overline{\sigma} - \underline{\sigma}) \leq \kappa < \frac{t^{\ast}(0) \cdot \left(1- \frac{1}{K}\right) \overline{\sigma}}{1+c}$ or $\kappa <  \frac{\eta(0) \cdot (\underline{\sigma} - s_0)}{1+c}$. 

Next, we consider the case where $\kappa =(1+c)\kappa'$. Because Lemma \ref{lem:eta} implies $\eta(a) \geq \frac{a}{2}$, we have $H(a) \leq 2$ for all $a>0$. From (\ref{Thm_CES_mis}) and the proof of Theorem \ref{thm:CES_correct}, we obtain
\begin{eqnarray}
\frac{\max_{\bm{\theta} \in \Theta(\kappa)} R(\bm{\theta},\hat{\bm{\delta}}^{\bm{w}^{\ast}(\kappa')})}{\max_{\bm{\theta} \in \Theta(\kappa)} R(\bm{\theta},\hat{\bm{\delta}}^{\mathrm{CES}})} & \leq & \frac{\sum_{k=1}^K p_k \cdot \psi_k\left( w_k^{\ast}(\kappa') ;\kappa' \right)}{\eta(0) \left( \sum_{k \in \mathcal{K}(\kappa)} p_k \sigma_k + \sum_{k \not\in \mathcal{K}(\kappa)} p_k \underline{\sigma} \right)} \times (1+2c). \nonumber \\
& \leq & \frac{s_0 \eta( \kappa' / s_0 ) (1+2c)}{\eta(0) \underline{\sigma}} \ \leq \ \frac{(\eta(0)s_0 + \kappa')(1+2c)}{\eta(0) \underline{\sigma}}.
\end{eqnarray}
Hence, the right-hand side is less than $1$ when $\kappa' = \frac{\kappa}{1+c} < \eta(0) \cdot \left( \frac{\underline{\sigma}}{1+2c} - s_0 \right)$. Therefore, (\ref{ineq_CES_mis}) holds when $\kappa < \eta(0) \cdot \left\{ \left( \frac{1+c}{1+2c}\right)  \underline{\sigma} - (1+c)s_0 \right\}$.
\end{proof}\vspace{0.1in}

\begin{proof}[Proof of Theorem \ref{thm:pool_mis}]
First, we consider the case where $\kappa' =(1+c)\kappa$. Because $\kappa \leq \kappa'$, it follows from Lemma \ref{lem:pool_mis} that we have
\begin{eqnarray}
\frac{\max_{\bm{\theta} \in \Theta(\kappa)} R(\bm{\theta},\hat{\bm{\delta}}^{\bm{w}^{\ast}(\kappa')})}{\max_{\bm{\theta} \in \Theta(\kappa)} R(\bm{\theta},\hat{\bm{\delta}}^{\mathrm{pool}})} & \leq & \frac{\sum_{k=1}^K p_k \cdot \psi_k\left( w_k^{\ast}(\kappa') ;\kappa' \right)}{\max \left\{ \frac{1}{2} s_0 \eta ( \kappa / s_0 ), s_0 \eta ( \kappa / s_0 ) - \kappa \right\}} \nonumber \\
& \leq & \min \left\{ \frac{2 \eta(\kappa' / s_0)}{\eta(\kappa / s_0)}, \, \frac{s_0 \eta(\kappa' / s_0)}{s_0 \eta(\kappa / s_0) - \kappa}, \, \frac{2\eta(0) \left( \sum_{k=1}^K p_k \sigma_k \right)}{s_0 \eta(\kappa / s_0)} \right\}, \nonumber
\end{eqnarray}
where the second inequality follows from $\psi_k\left( w_k^{\ast}(\kappa') ;\kappa' \right) \leq \min \left\{ s_0 \eta(\kappa' / s_0), \, \eta(0) \sigma_k \right\}$. Hence, we obtain (\ref{ineq_CES_mis}) when $s_0 \eta(\kappa / s_0) > 2\eta(0) \left( \sum_{k=1}^K p_k \sigma_k \right)$. In addition, because $\eta(a) \leq \eta(a') + (a-a')$ for $a \geq  a'$, we obtain
\begin{eqnarray*}
\frac{2 \eta(\kappa' / s_0)}{\eta(\kappa / s_0)} & \leq & \frac{2 \left\{ \eta(\kappa / s_0) + (\kappa' - \kappa)/s_0 \right\} }{\eta(\kappa / s_0)} \ \leq \ 2 \left\{ 1 + H(\kappa / s_0) \left( \frac{\kappa' - \kappa}{\kappa} \right) \right\} \ \leq \ 2(1+2c),
\end{eqnarray*}
where the last inequality follows from $H(a) \leq 2$. Therefore, we obtain
\[
\max_{\bm{\theta} \in \Theta(\kappa)} R(\bm{\theta},\hat{\bm{\delta}}^{\bm{w}^{\ast}(\kappa')}) \ \leq \ 2(1+2c) \cdot \max_{\bm{\theta} \in \Theta(\kappa)} R(\bm{\theta},\hat{\bm{\delta}}^{\mathrm{pool}}).
\]

Next, we consider the case where $\kappa =(1+c)\kappa'$. Because $H(a) \leq 2$, 
it follows from Lemma \ref{lem:pool_mis} that we have
\begin{eqnarray}
\frac{\max_{\bm{\theta} \in \Theta(\kappa)} R(\bm{\theta},\hat{\bm{\delta}}^{\bm{w}^{\ast}(\kappa')})}{\max_{\bm{\theta} \in \Theta(\kappa)} R(\bm{\theta},\hat{\bm{\delta}}^{\mathrm{pool}})} & \leq & \frac{\sum_{k=1}^K p_k \cdot \psi_k\left( w_k^{\ast}(\kappa') ;\kappa' \right)}{\max \left\{ \frac{1}{2} s_0 \eta ( \kappa / s_0 ), s_0 \eta ( \kappa / s_0 ) - \kappa \right\}} \times (1+2c). \nonumber 
\end{eqnarray}
Because $w_k^{\ast}(\kappa') = \text{arg} \min_{w_k} \psi_k\left( w_k ;\kappa' \right)$ and $\kappa \geq \kappa'$, we obtain
\begin{eqnarray*}
\psi_k\left( w_k^{\ast}(\kappa') ;\kappa' \right) \ \leq \ \psi_k\left( w_k^{\ast}(\kappa) ;\kappa' \right) \ \leq \ \psi_k\left( w_k^{\ast}(\kappa) ;\kappa \right) \ \leq \ \min \left\{ s_0 \eta(\kappa / s_0), \, \eta(0) \sigma_k \right\}.
\end{eqnarray*}
Hence, we obtain
\begin{eqnarray}
\frac{\max_{\bm{\theta} \in \Theta(\kappa)} R(\bm{\theta},\hat{\bm{\delta}}^{\bm{w}^{\ast}(\kappa')})}{\max_{\bm{\theta} \in \Theta(\kappa)} R(\bm{\theta},\hat{\bm{\delta}}^{\mathrm{pool}})} & \leq &  \min \left\{ 2(1+c), \, \frac{\eta(0) (1+2c) \left( \sum_{k=1}^K p_k \sigma_k \right)}{s_0 \eta(\kappa / s_0)} \right\}. \nonumber 
\end{eqnarray}
Therefore, we obtain the desired results.
\end{proof}

\section*{Appendix B. Additional figures}

\begin{figure}[htbp]
    \begin{tabular}{cc}
      \begin{minipage}[t]{0.45\hsize}
        \centering
        \includegraphics[width=7cm]{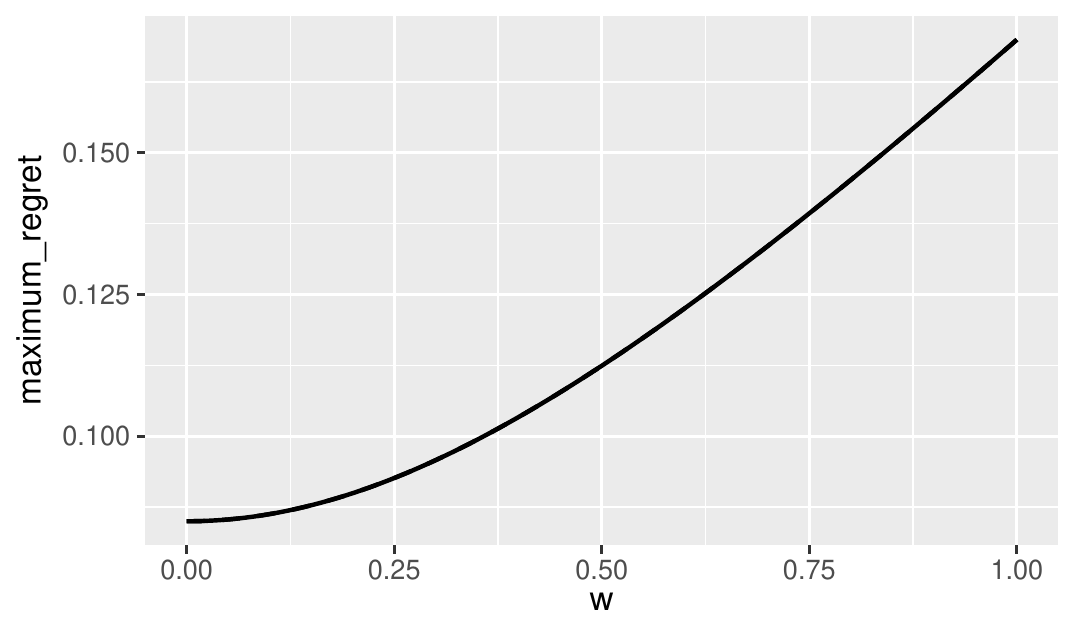}
        \subcaption{$K=4$ and $\kappa = 0$. The minimum values of $\overline{R}_{\mathrm{true}}(w)$ and $\overline{R}_{\mathrm{upper}}(w)$ are $0.085$ and $0.085$, respectively.}
        \label{fig:true_K4_kappa0}
      \end{minipage} &
      \begin{minipage}[t]{0.45\hsize}
        \centering
        \includegraphics[width=7cm]{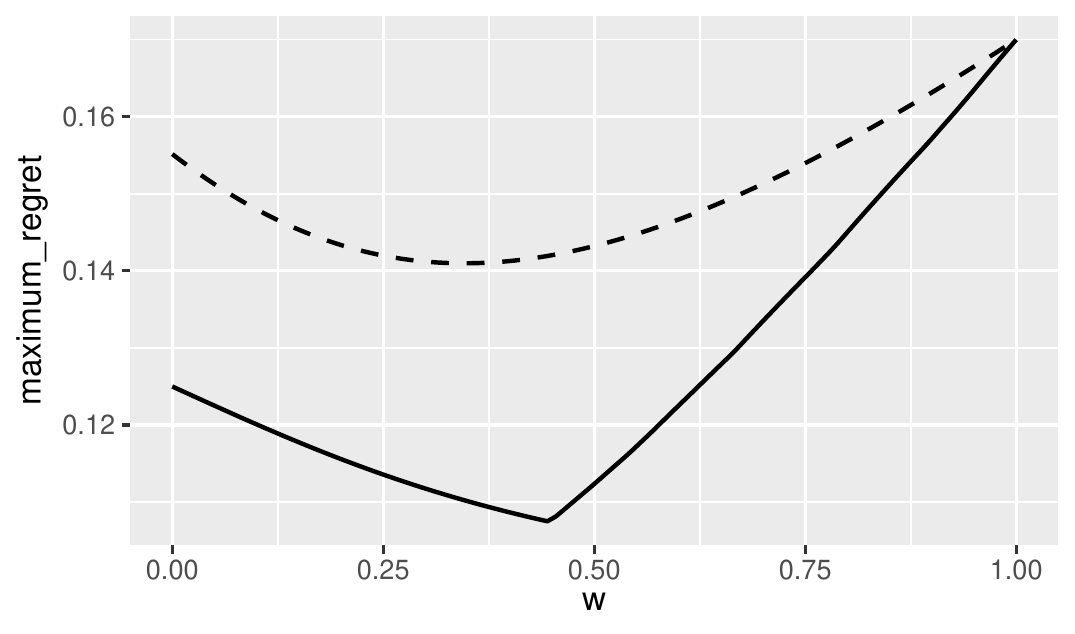}
        \subcaption{$K=4$ and $\kappa = 0.25$. The minimum values of $\overline{R}_{\mathrm{true}}(w)$ and $\overline{R}_{\mathrm{upper}}(w)$ are $0.107$ and $0.141$, respectively.}
        \label{fig:true_K4_kappa025}
      \end{minipage} \\
   
      \begin{minipage}[t]{0.45\hsize}
        \centering
        \includegraphics[width=7cm]{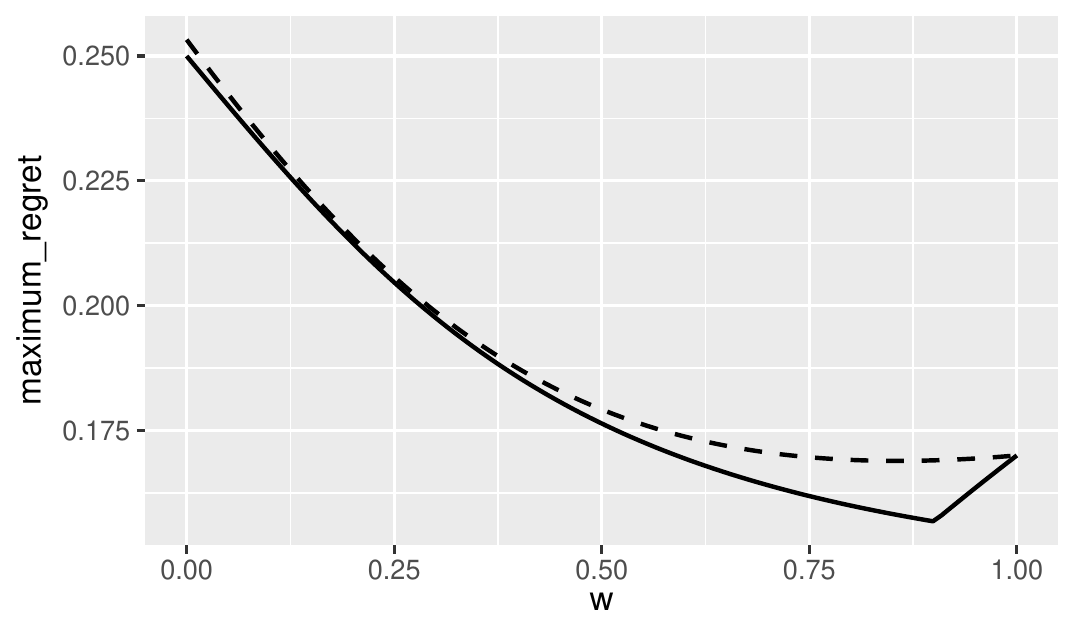}
        \subcaption{$K=4$ and $\kappa = 0.5$. The minimum values of $\overline{R}_{\mathrm{true}}(w)$ and $\overline{R}_{\mathrm{upper}}(w)$ are $0.157$ and $0.169$, respectively.}
        \label{fig:true_K4_kappa050}
      \end{minipage} &
      \begin{minipage}[t]{0.45\hsize}
        \centering
        \includegraphics[width=7cm]{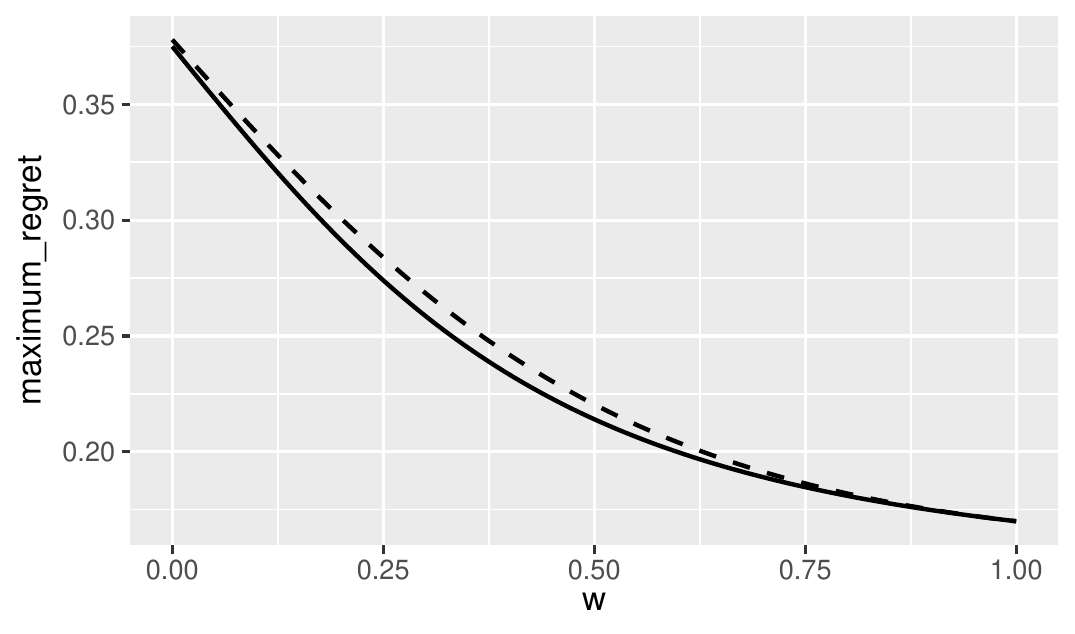}
        \subcaption{$K=4$ and $\kappa = 0.75$. The minimum values of $\overline{R}_{\mathrm{true}}(w)$ and $\overline{R}_{\mathrm{upper}}(w)$ are $0.170$ and $0.170$, respectively.}
        \label{fig:true_K4_kappa075}
      \end{minipage} 
    \end{tabular}
    \caption{The solid and dashed lines denote $\overline{R}_{\mathrm{true}}(w)$ and $\overline{R}_{\mathrm{upper}}(w)$, respectively.}\label{fig:true_K4}
\end{figure}

\begin{figure}[htbp]
    \begin{tabular}{cc}
      \begin{minipage}[t]{0.45\hsize}
        \centering
        \includegraphics[width=7cm]{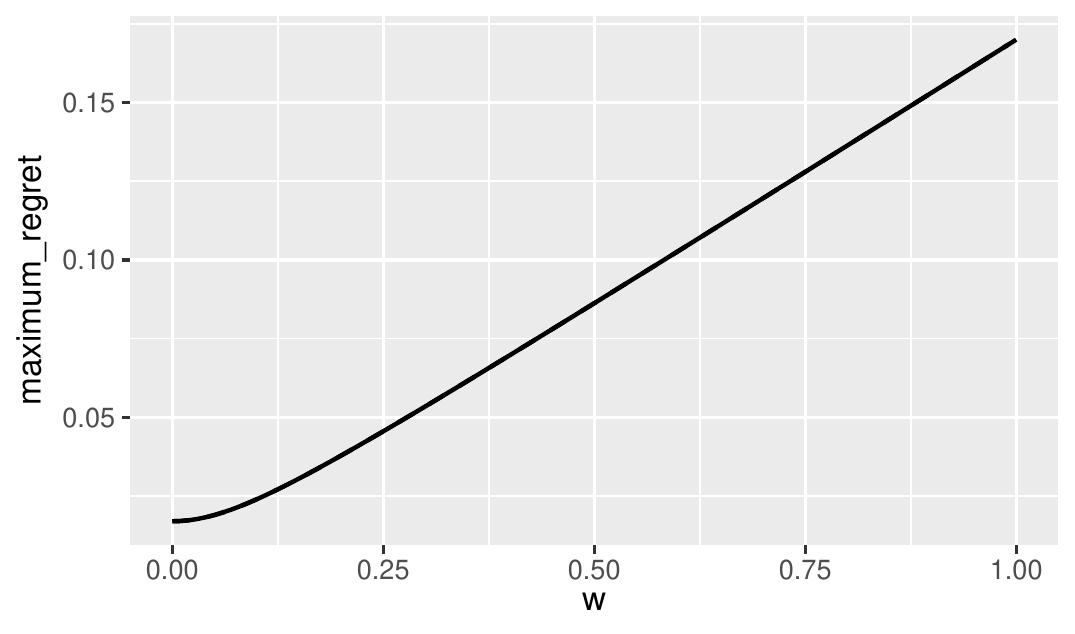}
        \subcaption{$K=100$ and $\kappa = 0$. The minimum values of $\overline{R}_{\mathrm{true}}(w)$ and $\overline{R}_{\mathrm{upper}}(w)$ are $0.017$ and $0.017$, respectively.}
        \label{fig:true_K100_kappa0}
      \end{minipage} &
      \begin{minipage}[t]{0.45\hsize}
        \centering
        \includegraphics[width=7cm]{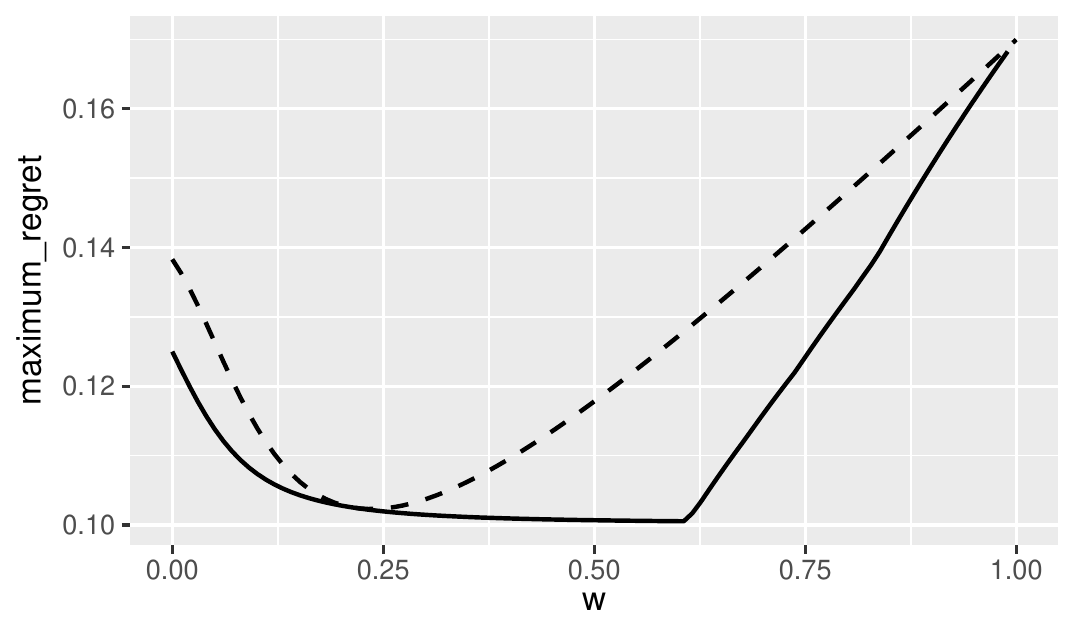}
        \subcaption{$K=100$ and $\kappa = 0.25$. The minimum values of $\overline{R}_{\mathrm{true}}(w)$ and $\overline{R}_{\mathrm{upper}}(w)$ are $0.101$ and $0.102$, respectively.}
        \label{fig:true_K100_kappa025}
      \end{minipage} \\
   
      \begin{minipage}[t]{0.45\hsize}
        \centering
        \includegraphics[width=7cm]{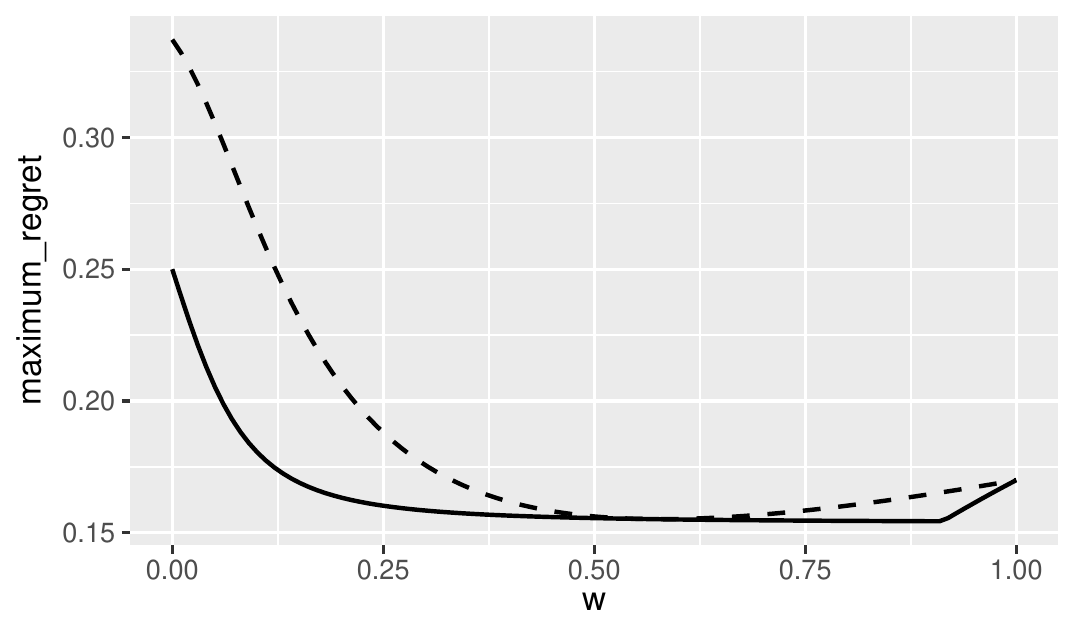}
        \subcaption{$K=100$ and $\kappa = 0.5$. The minimum values of $\overline{R}_{\mathrm{true}}(w)$ and $\overline{R}_{\mathrm{upper}}(w)$ are $0.154$ and $0.155$, respectively.}
        \label{fig:true_K100_kappa050}
      \end{minipage} &
      \begin{minipage}[t]{0.45\hsize}
        \centering
        \includegraphics[width=7cm]{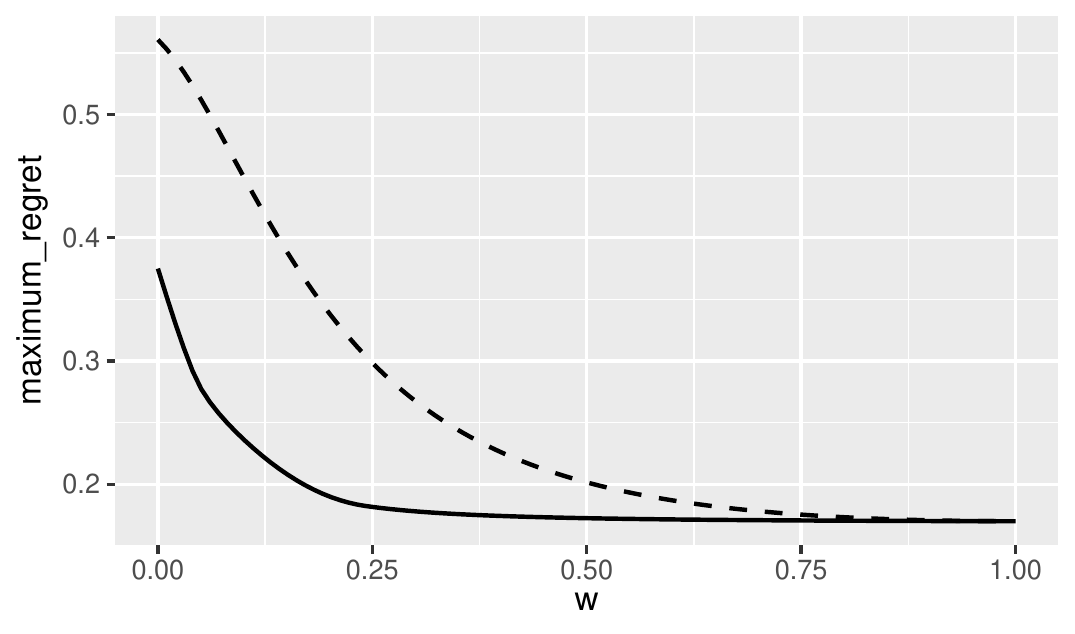}
        \subcaption{$K=100$ and $\kappa = 0.75$. The minimum values of $\overline{R}_{\mathrm{true}}(w)$ and $\overline{R}_{\mathrm{upper}}(w)$ are $0.170$ and $0.170$, respectively.}
        \label{fig:true_K100_kappa075}
      \end{minipage} 
    \end{tabular}
    \caption{The solid and dashed lines denote $\overline{R}_{\mathrm{true}}(w)$ and $\overline{R}_{\mathrm{upper}}(w)$, respectively.}\label{fig:true_K100}
\end{figure}

\begin{figure}[htbp]
    \begin{tabular}{cc}
      \begin{minipage}[t]{0.45\hsize}
        \centering
        \includegraphics[width=7cm]{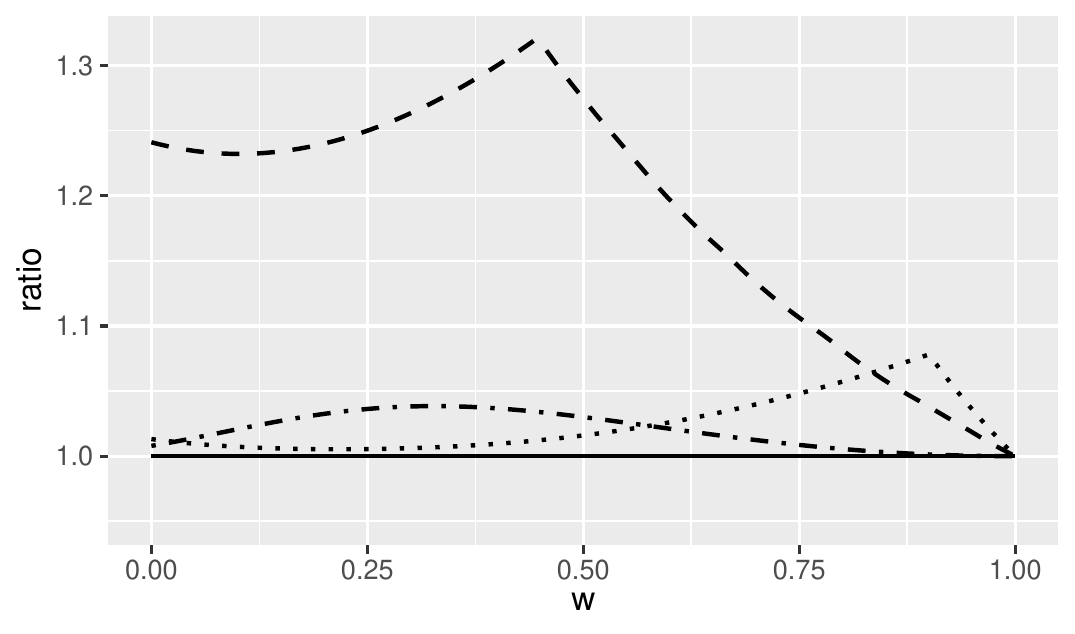}
        \subcaption{$K=4$}
        \label{fig:ratio_true_upper_K4}
      \end{minipage} &
      \begin{minipage}[t]{0.45\hsize}
        \centering
        \includegraphics[width=7cm]{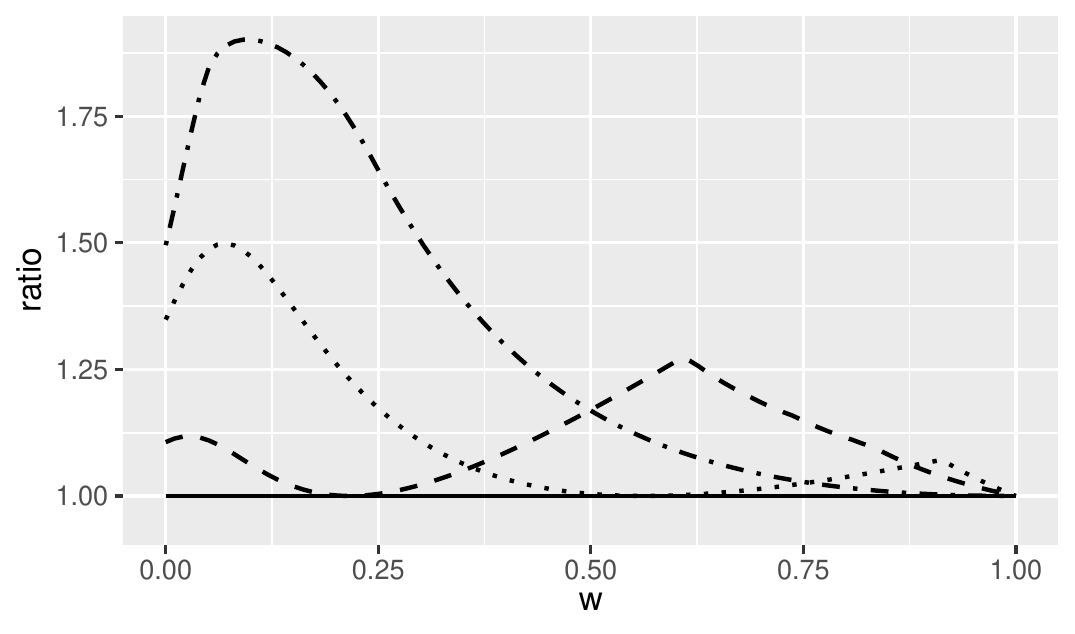}
        \subcaption{$K=100$}
        \label{fig:ratio_true_upper_K100}
      \end{minipage} \\
    \end{tabular}
    \caption{The solid, dashed, dotted, and dot-dashed lines denote $\overline{R}_{\mathrm{upper}}(w) / \overline{R}_{\mathrm{true}}(w)$ for $\kappa = 0, \, 0.25, \, 0.5$, and $0.75$.}\label{fig:ratio_K4_K100}
\end{figure}

\begin{figure}[htbp]
    \begin{tabular}{cc}
      \begin{minipage}[t]{0.45\hsize}
        \centering
        \includegraphics[width=7cm]{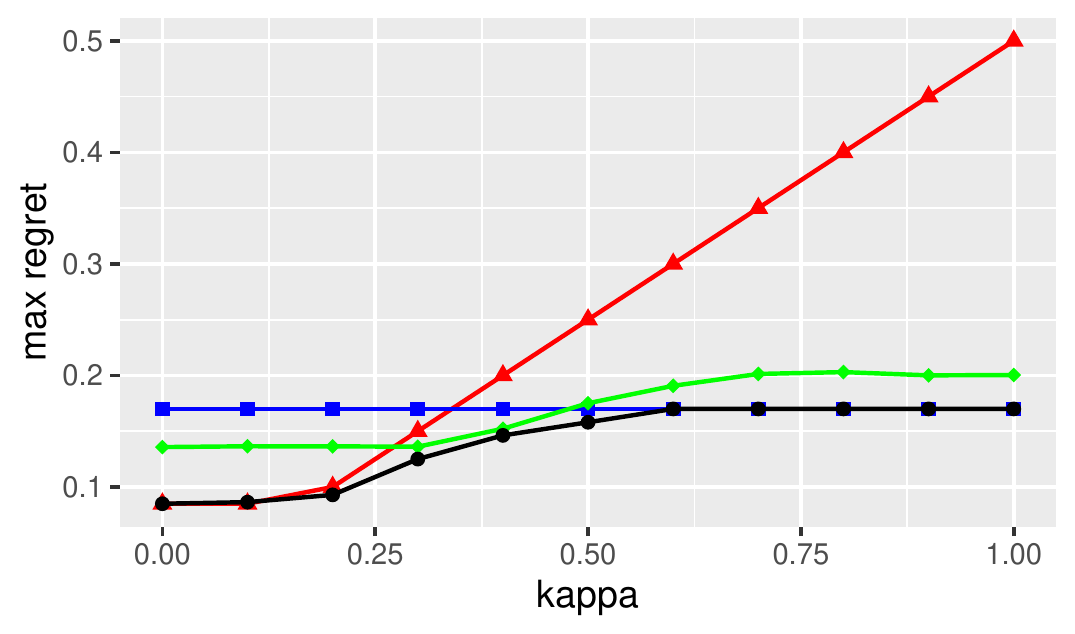}
        \subcaption{Case (i), $K=4$.}
        \label{fig:max_regret_K4_balance}
      \end{minipage} &
      \begin{minipage}[t]{0.45\hsize}
        \centering
        \includegraphics[width=7cm]{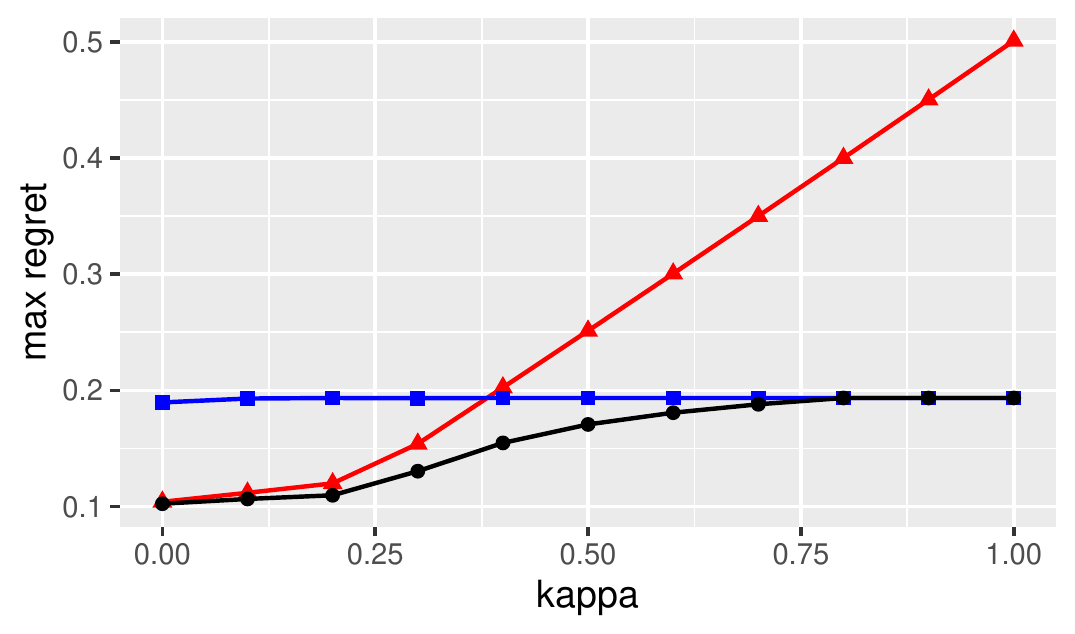}
        \subcaption{Case (ii), $K=4$.}        \label{fig:max_regret_K4_unbalance}
      \end{minipage} \\

      \begin{minipage}[t]{0.45\hsize}
        \centering
        \includegraphics[width=7cm]{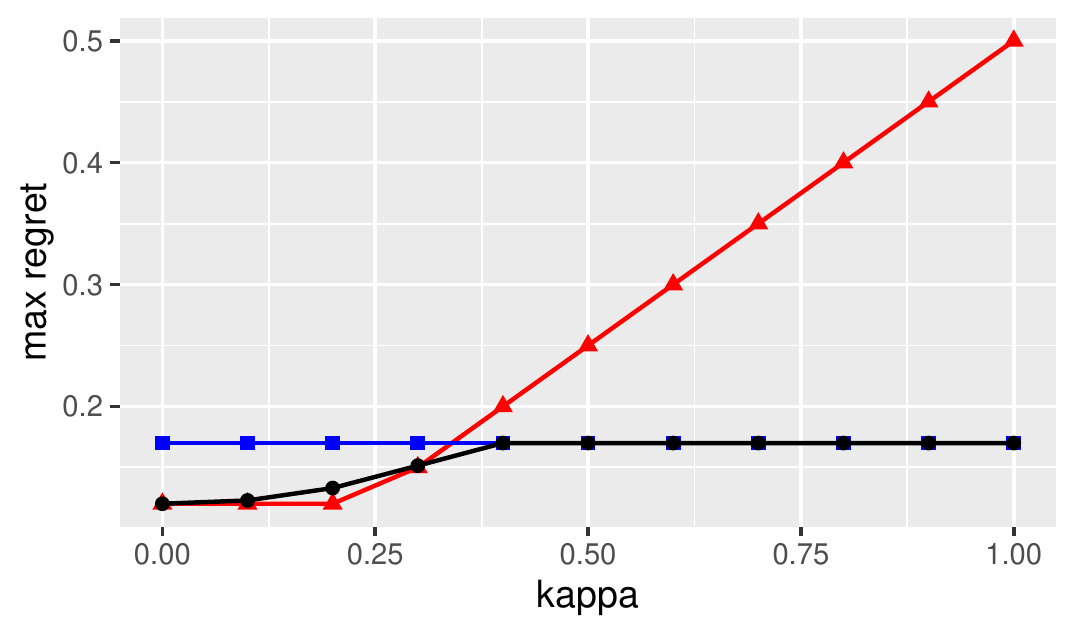}
        \subcaption{Case (i), $K=2$.}
        \label{fig:max_regret_K2_balance}
      \end{minipage} &
      \begin{minipage}[t]{0.45\hsize}
        \centering
        \includegraphics[width=7cm]{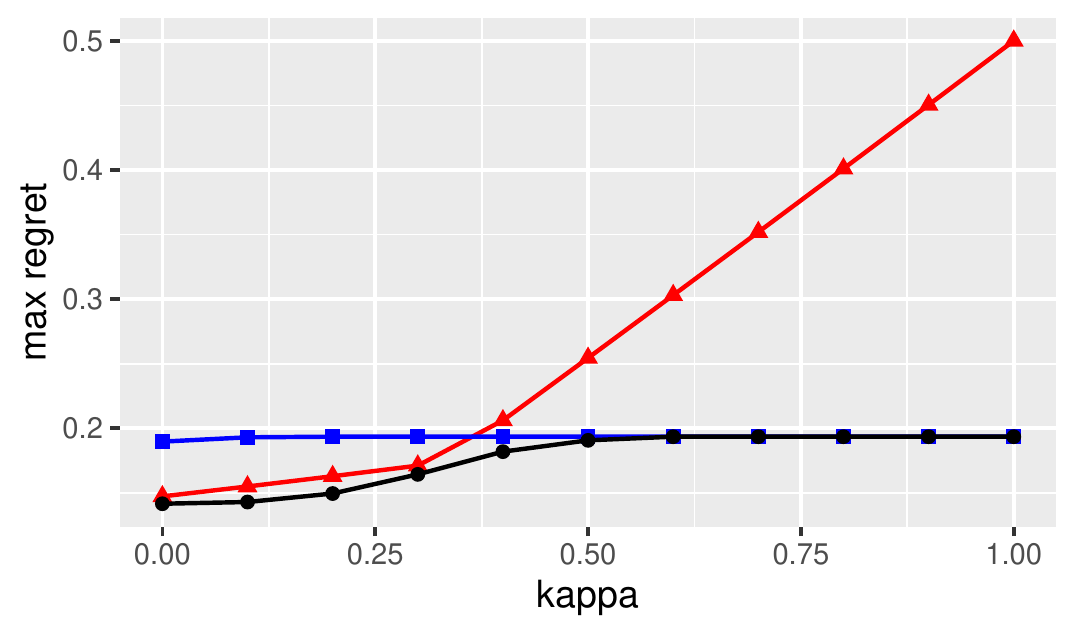}
        \subcaption{Case (ii), $K=2$.}        \label{fig:max_regret_K2_unbalance}
      \end{minipage} \\
    \end{tabular}
    \caption{The black circles, blue squares, and red triangles denote the maximum regrets of the shrinkage, CES, and pooling rules. The green diamonds denote the maximum regrets of the treatment rule based on the James-Stein-type estimator.}\label{fig:max_regret_K4}
\end{figure}

\clearpage
\bibliographystyle{ecta}
\bibliography{meta-analysis}

\begin{thebibliography}{16}
\newcommand{\enquote}[1]{``#1''}
\expandafter\ifx\csname natexlab\endcsname\relax\def\natexlab#1{#1}\fi

\bibitem[\protect\citeauthoryear{Abadie, Angrist, and Imbens}{Abadie
  et~al.}{2002}]{abadie2002instrumental}
\textsc{Abadie, A., J.~Angrist, and G.~Imbens} (2002): \enquote{Instrumental
  variables estimates of the effect of subsidized training on the quantiles of
  trainee earnings,} \emph{Econometrica}, 70, 91--117.

\bibitem[\protect\citeauthoryear{Armstrong and Koles{\'a}r}{Armstrong and
  Koles{\'a}r}{2018}]{armstrong2018optimal}
\textsc{Armstrong, T.~B. and M.~Koles{\'a}r} (2018): \enquote{Optimal inference
  in a class of regression models,} \emph{Econometrica}, 86, 655--683.

\bibitem[\protect\citeauthoryear{Armstrong and Koles{\'a}r}{Armstrong and
  Koles{\'a}r}{2021}]{armstrong2021finite}
---\hspace{-.1pt}---\hspace{-.1pt}--- (2021): \enquote{Finite-Sample Optimal
  Estimation and Inference on Average Treatment Effects Under
  Unconfoundedness,} \emph{Econometrica}, 89, 1141--1177.

\bibitem[\protect\citeauthoryear{Hirano and Porter}{Hirano and
  Porter}{2009}]{hirano2009asymptotics}
\textsc{Hirano, K. and J.~R. Porter} (2009): \enquote{Asymptotics for
  statistical treatment rules,} \emph{Econometrica}, 77, 1683--1701.

\bibitem[\protect\citeauthoryear{Ishihara and Kitagawa}{Ishihara and
  Kitagawa}{2024}]{ishihara2021evidence}
\textsc{Ishihara, T. and T.~Kitagawa} (2024): \enquote{Evidence Aggregation for
  Treatment Choice,} \emph{arXiv preprint arXiv:2108.06473}.

\bibitem[\protect\citeauthoryear{James and Stein}{James and
  Stein}{1961}]{JaSt61}
\textsc{James, W. and C.~Stein} (1961): \enquote{Estimation with {Q}uadratic
  {L}oss,} \emph{Proceedings of the 4th Berkeley Symposium on Mathematical
  Statistics and Probability}, 1, 361--379.

\bibitem[\protect\citeauthoryear{Kitagawa and Tetenov}{Kitagawa and
  Tetenov}{2018}]{kitagawa2018should}
\textsc{Kitagawa, T. and A.~Tetenov} (2018): \enquote{Who should be treated?
  empirical welfare maximization methods for treatment choice,}
  \emph{Econometrica}, 86, 591--616.

\bibitem[\protect\citeauthoryear{Manski}{Manski}{2000}]{manski2000identification}
\textsc{Manski, C.~F.} (2000): \enquote{Identification problems and decisions
  under ambiguity: Empirical analysis of treatment response and normative
  analysis of treatment choice,} \emph{Journal of Econometrics}, 95, 415--442.

\bibitem[\protect\citeauthoryear{Manski}{Manski}{2004}]{manski2004statistical}
---\hspace{-.1pt}---\hspace{-.1pt}--- (2004): \enquote{Statistical treatment
  rules for heterogeneous populations,} \emph{Econometrica}, 72, 1221--1246.

\bibitem[\protect\citeauthoryear{Manski}{Manski}{2007}]{manski2007minimax}
---\hspace{-.1pt}---\hspace{-.1pt}--- (2007): \enquote{Minimax-regret treatment
  choice with missing outcome data,} \emph{Journal of Econometrics}, 139,
  105--115.

\bibitem[\protect\citeauthoryear{Montiel~Olea, Qiu, and Stoye}{Montiel~Olea
  et~al.}{2025}]{olea2023decision}
\textsc{Montiel~Olea, J., C.~Qiu, and J.~Stoye} (2025): \enquote{Decision
  Theory for Treatment Choice Problems with Partial Identification,}
  \emph{arXiv preprint arXiv:2312.17623}.

\bibitem[\protect\citeauthoryear{Stoye}{Stoye}{2009}]{stoye2009minimax}
\textsc{Stoye, J.} (2009): \enquote{Minimax regret treatment choice with finite
  samples,} \emph{Journal of Econometrics}, 151, 70--81.

\bibitem[\protect\citeauthoryear{Stoye}{Stoye}{2012}]{stoye2012minimax}
---\hspace{-.1pt}---\hspace{-.1pt}--- (2012): \enquote{Minimax regret treatment
  choice with covariates or with limited validity of experiments,}
  \emph{Journal of Econometrics}, 166, 138--156.

\bibitem[\protect\citeauthoryear{Tetenov}{Tetenov}{2012}]{tetenov2012statistical}
\textsc{Tetenov, A.} (2012): \enquote{Statistical treatment choice based on
  asymmetric minimax regret criteria,} \emph{Journal of Econometrics}, 166,
  157--165.

\bibitem[\protect\citeauthoryear{Xie, Kou, and Brown}{Xie
  et~al.}{2012}]{xie2012sure}
\textsc{Xie, X., S.~Kou, and L.~D. Brown} (2012): \enquote{SURE estimates for a
  heteroscedastic hierarchical model,} \emph{Journal of the American
  Statistical Association}, 107, 1465--1479.

\bibitem[\protect\citeauthoryear{Yata}{Yata}{2025}]{yata2021optimal}
\textsc{Yata, K.} (2025): \enquote{Optimal Decision Rules Under Partial
  Identification,} \emph{arXiv preprint arXiv:2111.04926}.

\end{thebibliography}

\end{document}